%% file: 1_main_v3.tex
\documentclass[10pt]{article}
\usepackage[a4paper,margin=1.7cm,columnsep=0.7cm]{geometry}

\input{0_preamble.tex}

\title{
\textbf{Quantum communication and Bell nonlocality require infinite classical communication to simulate
}}

\author[1]{Carlos de Gois\textsuperscript{*}}
\author[2]{Thyago S. R. Santos\textsuperscript{\dag}}
\author[2,3]{Carlos Vieira\textsuperscript{\ddag}}

\affil[1]{CPHT, LIX, CNRS, Inria, \'{E}cole polytechnique, Institut Polytechnique de Paris, Palaiseau, France}
\affil[2]{Instituto de Matemática, Estatística e Computação Científica, Universidade Estadual de Campinas, 13083-859, Campinas, São Paulo, Brazil}
\affil[3]{Sorbonne Université, CNRS, LIP6, F-75005 Paris, France}

\date{}

\begin{document}
\etocdepthtag.toc{main}

\twocolumn[
\begin{@twocolumnfalse}

\maketitle

\vspace{-3em}
\begin{center}
\begin{minipage}{0.94\textwidth}
\begin{abstract}
    A quantum system of any fixed dimension can be prepared in a continuum of states, yet it cannot be used to transmit an unlimited amount of classical information.
    Similarly, the correlations observed between measurement outcomes on separate parts of a shared quantum system can be stronger than classical correlations, but they cannot transmit information.
    These fundamental limitations suggest that the statistics observed from quantum communication and quantum correlations may admit a simulation using a finite amount of classical communication.
    This expectation is confirmed in the smallest nontrivial quantum dimension, with two classical bits being necessary and sufficient to exactly simulate qubit communication and all correlations between qubits.
    Despite significant efforts during the previous decades, this remained the only solved case.
    Here we resolve both problems for every quantum dimension.
    The solution reveals an unexpected qualitative transition starting at dimension four: no finite amount of classical communication can exactly simulate ququart communication nor all quantum correlations of two entangled ququarts, even with unlimited shared randomness.
    One might have expected this transition, if it existed, to appear already for qutrits.
    Instead, we construct an explicit protocol that exactly simulates qutrit communication using $357$ classical bits, and consequently, all correlations of two entangled qutrits.
\end{abstract}
\end{minipage}
\end{center}

\vspace{1.5em}
\end{@twocolumnfalse}
]

\begingroup
\renewcommand{\thefootnote}{\fnsymbol{footnote}}
\footnotetext[1]{%
  \href{mailto:carlos.belini-de-gois@inria.fr}
       {\texttt{carlos.belini-de-gois@inria.fr}}%
}
\footnotetext[2]{%
  \href{mailto:thyagosr@unicamp.br}
       {\texttt{thyagosr@unicamp.br}}%
}
\footnotetext[3]{%
  \href{mailto:carlosvieira@ime.unicamp.br}
       {\texttt{carlosvieira@ime.unicamp.br}}%
}
\endgroup
\setcounter{footnote}{0}
\renewcommand{\thefootnote}{\arabic{footnote}}

\section*{Introduction}

Communication always relies on a physical carrier of information.
What can ultimately be communicated therefore depends on the physical laws obeyed by that carrier.
For most of the history of communication, information carriers were adequately described by classical physics.
The emergence of quantum information raised a new possibility, that carriers governed by quantum mechanics might enable forms of communication with no classical counterpart.
There is an immediate reason to suspect that they do.
Whereas a $d$-dimensional classical system can encode one of $d$ possible messages, a quantum system of the same dimension has a continuum of possible pure states.
At first sight, this continuum might seem to allow an unlimited amount of classical information to be transmitted.
A foundational result of quantum information theory, known as Holevo's bound, rules out this possibility, establishing that at most $\log_2 d$ bits of classical information can be retrieved from a $d$-dimensional quantum system \cite{holevo1973bounds}.
This limitation poses a fundamental question: if only a finite amount of classical information can be read out of quantum states, can their statistics be simulated with a finite amount of classical communication (\cref{fig:pam-scenario}(a))?
A result by Frenkel and Weiner provides direct support for this possibility, by showing that for any fixed measurement the behavior of all $d$-dimensional quantum states can be simulated exactly using a classical $d$-dimensional system and shared randomness \cite{FrenkelWeiner2015}.
This, however, does not settle the problem, as a simulation of quantum communication must not only reproduce a single measurement, but rather \emph{all} possible measurements.

\begin{figure*}[t]
    \centering
    \includegraphics[width=.7\textwidth]{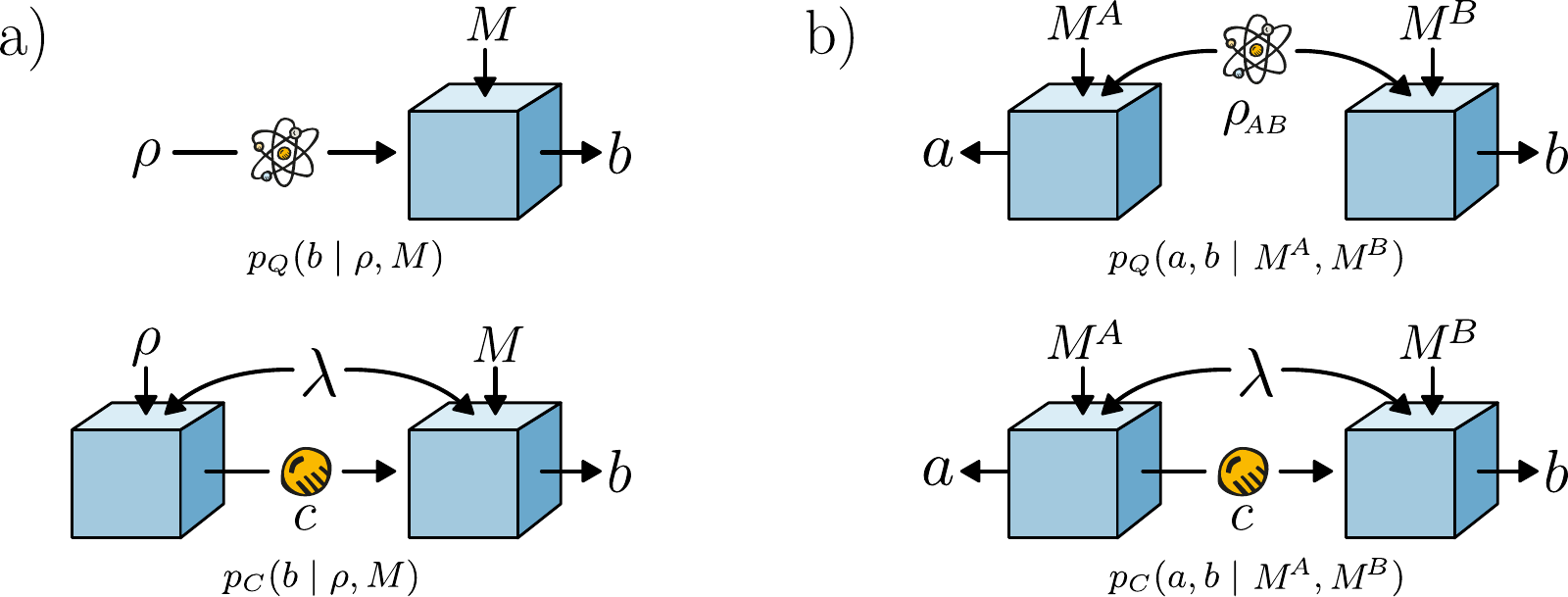}
    \caption{
    \textbf{Classical simulation of quantum communication and Bell nonlocality.}
        (a) In the communication setting, a quantum state $\rho$ of dimension $d_Q$ is communicated to a measurement device. Given a measurement $M$, an outcome $b$ is produced with probability $p_Q(b \mid \rho, M)$.
        In a classical simulation, the quantum system is encoded in a message $c$ of finite dimension using unlimited shared randomness, and decoded producing $b$ with probability $p_C(b \mid \rho, M)$. A classical simulation exists if there is an encoder and decoder functions realizing $p_C(b \mid \rho, M) = p_Q(b \mid \rho, M)$ for all possible inputs $\rho$ and $M$.
        (b) In the Bell nonlocality setting, a quantum state $\rho_{AB}$ is shared between two experimenters, who locally measure their systems with $M^A$ and $M^B$, producing the outcomes $a, b$ with probability $p_Q(a, b \mid M^A, M^B)$. A classical simulation replaces the shared quantum state with unlimited shared randomness and allows a message $c$ of finite dimension to be communicated between the devices. The simulation exists if $p_C(a, b \mid M^A, M^B) = p_Q(a, b \mid M^A, M^B)$ for all possible $M^A$ and $M^B$.
}
    \label{fig:pam-scenario}
\end{figure*}

A closely related question has been studied since the early 1990s in the context of Bell nonlocality \cite{maudlin1992bell,Steiner2000,BrassardCost1999}.
Bell's theorem \cite{Bell:1964PHY}, whose experimental proofs were recognized by the 2022 Nobel Prize in Physics \cite{Freedman:1972PRL,Aspect:1982PRL,Weihs:1998PRL}, establishes that two experimenters locally measuring a shared quantum state without communicating can produce correlations beyond any classical model based on shared randomness and local responses \cite{Bell:1964PHY,Clauser:1969PRL}.
Although communication is forbidden in a Bell experiment, allowing it in a classical simulation is a natural way to quantify their nonlocality.
For any fixed Bell scenario with finitely many settings and outcomes, finite classical communication always suffices \cite{Pironio2003BellCommunicationLowerBounds}.
The question is whether the required communication remains bounded over \emph{all} quantum correlations achievable at a fixed quantum dimension (\cref{fig:pam-scenario}(b)).
Since quantum correlations alone cannot transmit information, one might expect a small amount of classical communication to always suffice.

A notable result by Toner and Bacon showed that Bell correlations and quantum communication can be simulated with only two classical bits in the special case of qubits and projective measurements \cite{TonerBacon}.
Later work extended this result to arbitrary quantum measurements, thereby completely resolving the qubit case \cite{RennerTavakoliQuintino}.
Despite significant efforts in the following decades \cite{MontinaCommunication2011,HavlicekSimpleCommunication2020,RennerTavakoliQuintino,SchlosserKleinmann2026,Zartab2026}, no upper bounds have been obtained for beyond qubits.
A central difficulty is that the known qubit constructions rely heavily on the simple geometry of the Bloch sphere, which has no higher-dimensional analogue \cite{Kimura2003,Eltschka2021shapeofhigher}.
A partial result, holding for any finite dimension but only in the special case of Bell correlations with uniform marginals produced by binary projective measurements, states that two bits suffice \cite{RegevToner}.
Complementarily, known lower bounds show that the required classical dimension can grow at least exponentially with the quantum dimension \cite{BuhrmanQuantumClassical1998,BrassardCost1999,MontinaCommunication2011,HavlicekSimpleCommunication2020}, and that classical dimensions at least five and ten are necessary to simulate qutrit and ququart communication, respectively \cite{HavlicekSimpleCommunication2020}.

Here we resolve, for every finite quantum dimension, whether quantum communication and Bell nonlocality can be exactly reproduced with a finite amount of classical communication.
The solution reveals a surprising qualitative transition: for ququarts and above, no finite amount of classical communication can do so, even when the sender and receiver share unlimited randomness in advance.
Whether the required communication could become infinite was previously unclear \cite{RegevToner,RennerTavakoliQuintino,SchlosserKleinmann2026}, and if such a transition existed, qutrits would have been a natural place to expect it, since they are the first systems beyond the simple geometry of qubits.
Instead, we show that the qutrit cases can still be simulated exactly, by constructing an explicit protocol that uses $357$ classical bits, a bound that we do not expect to be optimal.

At the level of observable statistics, qubits and qutrits can therefore be regarded as more efficient realizations of finite classical systems. Starting at dimension four, this interpretation fails.
No finite classical alphabet, however large, can reproduce all the statistics obtained from preparing and measuring a ququart.
Similarly, there exist entangled pairs of ququarts whose correlations across all local measurements cannot be reproduced using any finite amount of classical communication.

\section*{Classical simulation of quantum communication}

We first consider the quantum communication problem and return to Bell correlations in a later section.
A quantum communication protocol involves two devices.
The first, commonly called Alice, prepares a $d_Q$-dimensional quantum system $\rho$ and sends it to the second, Bob.
To read information out of $\rho$, Bob performs a measurement $M$ and obtains a classical outcome.
See \cref{fig:pam-scenario}(a) for an illustration.
The cases $d_Q=2$, $3$, and $4$ correspond respectively to a qubit, a qutrit and a ququart.

Mathematically, quantum theory associates a finite-dimensional quantum state with a positive semidefinite matrix in $\rho \in \C^{d \times d}$ with $\tr(\rho) = 1$.
A measurement $M$ with $B$ possible outcomes is described by a collection of matrices, $M = \{ M_b \}_{b = 0}^{B-1}$, called measurement effects.
The subscript $b$ in each effect is a label identifying a possible measurement outcome.
Each of these effects is a positive semidefinite matrix in $\C^{d \times d}$, and they sum to the identity operator.
Quantum theory then specifies that upon measuring $M$ on $\rho$, outcome $b$ happens with probability
\begin{equation}
    p_Q(b \mid \rho, M) = \tr(\rho M_b) .
\label{eq:born-rule-main}
\end{equation}
Varying both the state $\rho$ prepared by Alice and the measurement $M$ chosen by Bob gives the quantum behavior $p_Q(b\mid\rho,M)$.

We ask whether the same behavior can be obtained using only classical communication and shared randomness.
This means that, instead of a quantum system, the preparation device sends a classical message $c$ chosen out of $d_C$ possibilities, and we also let the two devices coordinate their actions classically before the protocol starts.
This coordination is denoted by a variable $\lambda$, called shared randomness, which is accessible to both devices.
While $\lambda$ may contain an unlimited amount of classical information, it is distributed before the protocol, thus it cannot contain a description of $\rho$ and $M$, which are only determined during the protocol.
This is an abstract definition for the most general classical communication protocols that can be performed with a single round of communication from the preparation to the measurement device.
For any finite $d_C$, its possible behaviors are described by
\begin{equation}
\begin{split}
    p_C(b \mid \rho, M) = & \int_\Lambda \sum_{c =1}^{d_C} \,p_A(c \mid \rho, \lambda) \\
    &\quad \times p_B(b \mid c, M, \lambda) \, \dd \mu(\lambda)
\label{eq:classical-behaviors}
\end{split}
\end{equation}
where $\mu(\lambda)$ is a probability measure over the space $\Lambda$ of possible shared variables.
In this model, the preparation device uses the encoder distribution $p_A$ to decide which message $c$ to communicate given $\lambda$ and its knowledge of $\rho$.
Upon receiving $c$, the measurement device uses the decoding distribution $p_B$ to decide its output $b$ based on its knowledge of $c$, the measurement $M$ and the shared $\lambda$; see \cref{fig:pam-scenario}(a).
The resulting probability of outcome $b$ is $p_C(b\mid\rho,M)$ as defined above; see \cref{sec:classical-simulation-model} for a precise definition.
This is often called the classical prepare-and-measure model, and has been extensively studied during the past decades \cite{ambainis2008quantum,gallego2010pam,FrenkelWeiner2015,VicenteSharedRandomnness2017,cgois_classicality_2021,Doolittle_2021_Certifying,brask2026quantumcorrelationsprepareandmeasurescenarios}.

Quantum communication can be simulated by $d_C$-dimensional classical communication if there exists an encoder $p_A$ and a decoder $p_B$ such that,
\begin{equation}
    p_Q(b \mid \rho, M) = p_C(b \mid \rho, M) .
    \label{eq:classical-simulation-main}
\end{equation}
for every possible quantum state $\rho$ and measurement $M$.
The classical simulation cost is then defined as
\begin{equation}
	C_{\mathrm{cl}}(d_Q) = \inf\{d_C\in\N:\text{\cref{eq:classical-simulation-main} holds for }d_Q\},
\end{equation}
with $C_{\mathrm{cl}}(d_Q) := \infty$ if the set is empty.
It is known that $C_{\mathrm{cl}}(2) = 4$, $C_{\mathrm{cl}}(3) \geq 5$, $C_{\mathrm{cl}}(4) \geq 10$ and $C_{\mathrm{cl}}(2^n) \geq 2^{c \, 2^n}$, where $c$ is a constant \cite{BuhrmanQuantumClassical1998,BrassardCost1999,TonerBacon,MontinaCommunication2011,HavlicekSimpleCommunication2020,RennerTavakoliQuintino,SchlosserKleinmann2026}.
It remained unknown whether $C_{\mathrm{cl}}(d_Q)$ is finite for $d_Q \ge 3$.

\section*{No finite simulation of ququarts}
\label{sec:ququarts-main}

We now prove our central result: for any fixed quantum dimension $d_Q \geq 4$, the quantum behaviors $p_Q$ cannot be reproduced with any finite amount of classical communication.
Here we present the main ideas behind the argument.
Complete proofs are provided in \cref{app:d4}.

To prove the claim, it is enough to identify a restricted family of ququart states and measurements that already admits no finite classical simulation.
Any protocol capable of simulating ququart communication would necessarily simulate this family as well.
Moreover, once the impossibility is established for $d_Q = 4$, the result immediately extends to every $d_Q > 4$ by embedding the ququart states and measurements into larger-dimensional systems.

We therefore restrict our attention to the particularly simple setting of pure ququart states and binary projective measurements having one rank-$1$ effect. 
A pure state in $d_Q = 4$ is a normalized vector $\ket{x} \in \C^4$, while a binary measurement with a rank-$1$ effect can be described by a single vector $\ket{y} \in \C^4$ by writing $M_y = \{ M_{0|y}, M_{1|y} \} = \{ \dyad{y}, \eye_4 - \dyad{y} \}$, where the second effect is determined from normalization.
Since multiplying $\ket{x}$ or $\ket{y}$ by a global phase does not change the state or measurement, we identify them with rays $x$ and $y$ in the complex projective space $\CP^3$ \cite{Bengtsson_Zyczkowski_2006,Cunha2005Emaranhamento}. For simplicity, we write $p_Q(b | x,y) := p_Q(b | \, \ket{x},M_y).$

When $x$ is measured with $M_y$, the probability of obtaining the outcome $b = 0$, $p_Q(0|x,y)$, is given by \cref{eq:born-rule-main}, which in this case simplifies to the fidelity function $f(x, y) :=  \xy ^2$. A valid classical simulation must reproduce $f(x, y)$ for all $x$ and $y$.

\begin{figure}[t]
    \centering
    \includegraphics[width=.87\columnwidth]{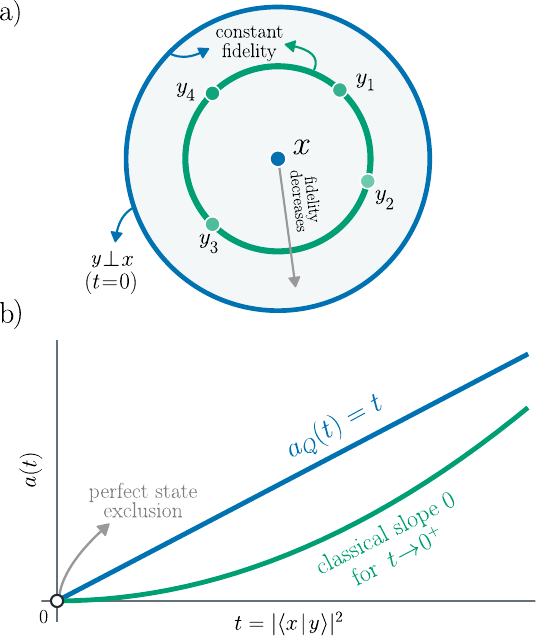}
    \caption{\textbf{The central argument in the ququart proof.}
        (a) Fix a preparation $x\in\CP^3$ and group the measurement effects $y$ according to their fidelity $t=|\langle x\vert y\rangle|^2$ with the state $x$. Each ring represents all tests with the same fidelity. We denote by $a(t)$ the average of $p(0\mid x,y)$ over all pairs $(x,y)$ of fidelity $f(x,y) = t$.
        The proof studies $a(t)$ as $t\to0^+$.
        (b) For every pair $(x,y)$ with fixed fidelity $f(x,y)=t$, quantum theory predicts $p_Q(0\mid x,y)=t$. Consequently, averaging over all such pairs gives $a_Q(t)=t$, which has slope $1$ at $t=0$. Any classical simulation using finitely many messages would instead yield an averaged curve $a_C(t)$ with slope $0$ as $t\to0^+$. (The green curve is illustrative; only its derivative near $t=0$ is relevant to the proof.)
    }
    \label{fig:d4-fidelity-and-slope}
\end{figure}
 
There is an important symmetry in $f$ that will play a central role in the proof: its value does not change if the state $x$ and the measurement $y$ are rotated in the same way, so for every unitary $U$, $f(U x, Uy) = f(x, y)$. 
During the proof we will exploit this symmetry by averaging $p(0|x,y)$ over all pairs of state and measurement having the same fidelity.
To do that, let us first define $\eta_t$ as the uniform probability measure over all pairs $(x,y)\in \CP^3 \times \CP^3$ such that $f(x,y)=t$ for any $t\in[0,1]$.
Such a pair can be generated by starting from any fixed pair of fidelity $t$ and applying the same uniformly random unitary rotation to both elements.
Equivalently, if $\sigma$ denotes the uniform probability measure on $\CP^3$, we can first sample $x \sim \sigma$ and then sample $y$ uniformly from the constant fidelity ring around $x$; see \cref{fig:d4-fidelity-and-slope}(a) and \cref{sec:fixed-fidelity-pairs}. 

Computing the average of $p_Q(0|x,y)$ over $\eta_t$, we get that
\begin{equation}
\begin{split}
a_Q(t) :&= \int_{\CP^3 \times \CP^3} p_Q(0|x,y) \dd\eta_t(x,y) \\
&= \int_{\CP^3 \times \CP^3} f(x,y) \dd\eta_t(x,y) = t .
\end{split}
\end{equation}
Analogously, the average of a classical simulation can be obtained from 
\begin{equation}
\begin{split}
    a_C(t) :&= \int_{\CP^3 \times \CP^3} p_C(0|x,y) \dd\eta_t(x,y) \\
    &=
    \int_{\CP^3\times\CP^3} \int_\Lambda
    \sum_{c=1}^{d_C} p_A(c\mid x,\lambda) \\
    &\qquad \times p_B(0\mid y,c,\lambda)
    \,\dd\mu(\lambda) \,\dd\eta_t(x,y) .
\end{split}
\label{eq:classical-simulation-average-main}
\end{equation}

If a classical simulation in the sense of \cref{eq:classical-simulation-main} exists, we must have $a_C(t) = a_Q(t) = t$ for every $0 \leq t \leq 1$.
The core argument in the proof comes from comparing the behaviors of the functions $a_C(t)$ and $a_Q(t)$.
For this comparison, we will focus on the region near $t = 0$.
This is motivated by previous observations that perfect state exclusion at $t = 0$ imposes strong constraints on the possible classical simulation functions \cite{MontinaCommunication2011,SchlosserKleinmann2026}.
There, the quantum average starts from $a_Q(0) = 0$ and increases with slope $a_Q'(0)=1$.
This would require that $a_C(0) = 0$ and $a_C'(0) = 1$ (all derivatives are to be understood with $t \to 0^+$).

The crucial step in the proof is to show that for any finite $d_C$, forcing $a_C(0) = 0$ implies instead that $a_C'(0) = 0$; see \cref{fig:d4-fidelity-and-slope}(b).
There is an important technical difficulty in proving this: the encoding and decoding response functions $p_A$ and $p_B$ are not required to be differentiable, therefore $a_C'(0)$ cannot be directly obtained.
We overcome this difficulty with two observations.
First, although $p_B$ may be highly irregular, averaging it over all measurement rays that have zero fidelity with a given $x$ produces a function of $x$ whose second-order spatial variation is well defined.
It then turns out that this spatial variation is related to the fidelity variation near $t = 0$, yielding $a_C'(0)=0$.
Details on zero slope argument and a proof outline are presented in the Methods section.

\section*{Finite simulation of qutrits}
\label{sec:finite-qutrit-main}

Having ruled out finite classical simulation of ququart communication, we now show that the transition occurs precisely at $d_Q = 4$ by constructing a finite simulation for qutrits.
In the most general case (that is, for any POVM), the protocol will use classical messages of dimension $(2\sqrt{24})^4(112 + (2\sqrt{24})^4)^{26}$, thus proving that $C_{\mathrm{cl}}(3) < 2^{357}$.

It is instructive to start with an outline of the simulation protocol (detailed explanations and proofs are presented in \cref{app:d3}).
Consider for the moment the task of simulating the statistics generated by pure states $\ket{x} \in \C^3$ and binary projective measurements $M_y = \{ \dyad{y}, \eye_3 - \dyad{y} \}$.
The protocol constructed for this case can later be extended to all states and measurements.
We once more use the equivalent representation $x, y \in \CP^2$.
Since the measurements are binary, it is sufficient to reproduce $p_Q(0 \mid x, y) = \xy^2$, with the probability of outcome $b = 1$ following from normalization.

The key idea behind the protocol is already visible in a construction that, even without communication, reproduces the desired quantum statistics up to a positive constant. Alice and Bob share a random variable, generated independently of their inputs, consisting of a ray $h$ and two numbers, $\alpha$ and $\beta$:
\begin{equation}
\theta = (h, \alpha, \beta) \in \CP^2 \times \mathcal R,
\end{equation}
where
\begin{equation}
	\mathcal R
	=
	\left\{ (\alpha,\beta)\in[0,1]^2 : \alpha\geq\beta, \, \alpha + \beta\geq1 \right\} .
	\label{eq:d3-main-threshold-region}
\end{equation}
The ray $h$ is sampled uniformly on $\CP^2$, according to the normalized Haar measure.
Independently, the thresholds $\alpha$ and $\beta$ are sampled as follows.
With probability $1/3$, we take
\begin{equation}
	\alpha_k\sim\operatorname{Unif}\left[\frac12, \, 1\right],
	\quad
	\beta_k=1-\alpha_k ,
\label{eq:nu-line-main}
\end{equation}
and with probability $2/3$, we sample $(\alpha_k, \beta_k)$ uniformly from $\mathcal R$; see \cref{fig:average-communication-protocol}(a).
The resulting probability distribution of $\theta$ will be denoted by $\nu$.

\begin{figure}[t]
	\centering
		\includegraphics[width=0.9\linewidth]{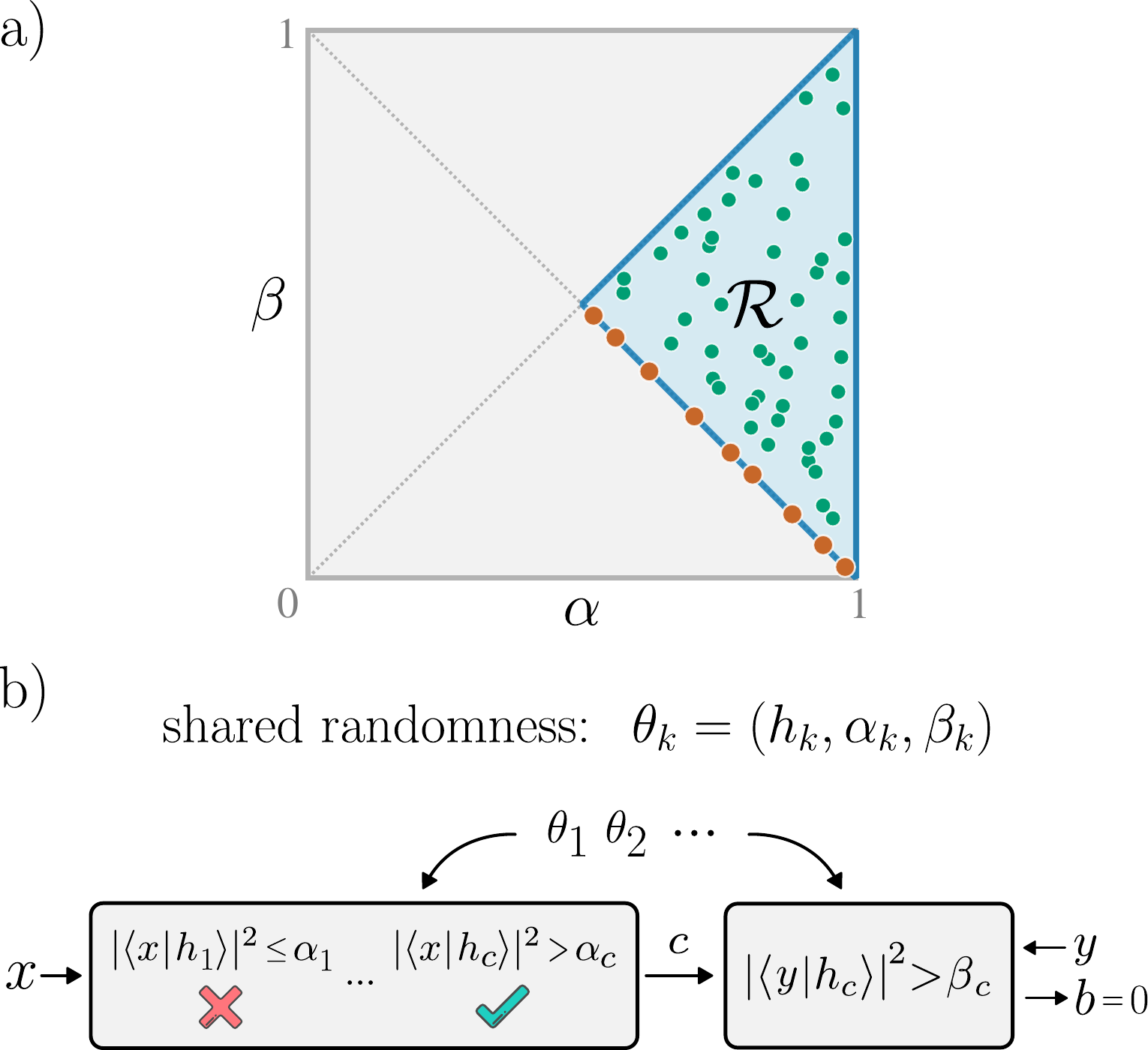}
	\caption{
		\textbf{The protocol with finite average communication.}
        (a) Region $\mathcal R$ from which the thresholds $\alpha$ and $\beta$ are sampled. Under the probability law $\nu$, with probability $1/3$ the pair $(\alpha, \beta)$ is sampled uniformly along the line $\beta=1-\alpha$ (orange circles), and with probability $2/3$ it is sampled uniformly from the blue region inside the triangle (green circles).
		(b) Alice and Bob share an infinite sequence of independent samples $\theta_1,\theta_2,\ldots \sim \nu$.
        Alice computes $a_k=\abs{\langle x \vert h_k\rangle}^2$ for each $\theta_k$ in order and communicates the first index $c$ satisfying $a_c > \alpha_c$.
        Bob outputs $0$ when $\abs{\langle y \vert h_c\rangle}^2>\beta_c$, and outputs $1$ otherwise.
        In this protocol, the probability of outcome $0$ is $\abs{\langle x \vert y\rangle}^2$ and the communicated index $c$ is finite almost surely, but it is not bounded.
	}	
	\label{fig:average-communication-protocol}
\end{figure}

After receiving $x$, Alice computes $\abs{\langle x \vert h\rangle}^2$ and declares her test successful when $\abs{\langle x \vert h\rangle}^2 > \alpha$.
Similarly, after receiving $y$, Bob computes $\abs{\langle y \vert h\rangle}^2$ and declares his test successful when $\abs{\langle y \vert h\rangle}^2 > \beta$.
The probability that both local tests succeed is thus
\begin{equation}
\begin{split}
&\Pr_\nu \{
|\langle x \vert h\rangle|^2>\alpha,\ 
|\langle y \vert h\rangle|^2>\beta
\} \\
&\qquad=
\int_{\CP^2\times\mathcal R}
\1_{\{|\langle x \vert h\rangle|^2>\alpha\}}
\1_{\{|\langle y \vert h\rangle|^2>\beta\}}
\,\dd\nu .
\end{split}
\label{eq:acceptance-integral-no-comm-main}
\end{equation}
By \cref{lem:qutrit-joint-acceptance}, this integral equals $\abs{ \langle x \vert y \rangle }^2 / 18$.

To determine Alice's success probability, set $y = x$ above.
Then $b = a$, and since $\alpha \geq \beta$ almost surely under $\nu$, whenever her test succeed, Bob's also succeeds.
Hence,
\begin{equation}
\Pr_\nu\left\{
|\langle x \vert h\rangle|^2>\alpha
\right\}
=
\frac1{18}.
\label{eq:alice-normalization-main}
\end{equation}
Dividing the joint success probability in \cref{eq:acceptance-integral-no-comm-main} by Alice's success probability in \cref{eq:alice-normalization-main} gives
\begin{equation}
  \Pr_\nu \left\{
|\langle y \vert h\rangle|^2>\beta
 \bigm| 
|\langle x \vert h\rangle|^2>\alpha
\right\}  = \xy^2 .
\end{equation}
This already contains an essential ingredient of our simulation protocol: if $\theta$ were sampled from $\nu$ conditioned on $\abs{ \langle x \vert h \rangle }^2 > \alpha$, then Bob's success would exactly reproduce the quantum probability $p_Q(0 \mid x, y) = \xy^2$.
Such conditioned sampling is not directly possible because it depends on Alice's input $x$, which is received only after the shared variable is generated, while an unconditioned sample passes her test with probability only $1/18$.

A natural way to implement this conditioning is to repeat the sampling until Alice's test succeeds.
Accordingly, before receiving their inputs, Alice and Bob share an infinite sequence
\begin{equation}
\lambda=(\theta_1,\theta_2,\ldots),
\qquad
\theta_k=(h_k,\alpha_k,\beta_k),
\label{eq:average-protocol-ordinary-coordinates-main}
\end{equation}
whose coordinates are sampled independently according to $\nu$.
The shared randomness space is therefore
\begin{equation}
\Lambda=(\CP^2\times\mathcal R)^{\mathbb N},
\qquad
\mu=\nu^{\otimes\mathbb N}.
\end{equation}
After receiving $x$, Alice considers the samples in order and communicates the index of the first one that passes her test,
\begin{equation}
c(x,\lambda)
=
\min\{ k\geq 1 \bigm| \abs{\langle x \vert h_k\rangle}^2>\alpha_k \}.
\end{equation}
After receiving $c$ and his input $y$, Bob outputs $0$ when $\abs{\langle y \vert h_c\rangle}^2>\beta_c$ and outputs $1$ otherwise; see \cref{fig:average-communication-protocol}(b).
In terms of the classical model, these rules define
\begin{subequations}
\begin{align}
p_A(c \mid x, \lambda)
&= \1_{{c = c(x,\lambda)}}, \\
p_B(0 \mid y, c, \lambda)
&=
\1_{{\abs{\langle y \vert h_k\rangle}^2 > \beta_k}},
\end{align}
\end{subequations}
with $p_B(1\mid y,c,\lambda)=1-p_B(0\mid y,c,\lambda)$.

Because the samples are independent, the selected sample $\theta_c$ is distributed exactly as a single $\theta\sim\nu$ conditioned on Alice's test succeeding.
The conditional identity derived above therefore shows that Bob outputs $0$ with probability
\begin{equation}
p_C(0\mid x,y)
= \xy^2
= p_Q(0\mid x,y).
\end{equation}
Each sample passes Alice's test with probability $1/18$, so she finds a successful sample almost surely.
In this protocol, the length of the message is finite almost surely.
Using a simple but suboptimal encoding, \cref{rmk:average-communication-cost} shows an average communication cost of $18$ bits.
However, the worst-case cost of this protocol is unbounded: for every finite $N$, there is a positive probability that Alice rejects the first $N$ samples and must communicate an index larger than $N$.

To obtain a finite message alphabet, the natural first idea is to stop Alice's search after the first $N$ samples and, in the case where non index is accepted, provide an alternative procedure.
A protocol of this form is presented in the Methods section; see also \cref{fig:worst-case-communication-protocol}.
Its extension to generalized measurements with any number of outcomes and mixed states is presented in \cref{sec:d3-more-outcomes}.

\section*{Classical simulation of Bell nonlocality}

We now turn to a closely related simulation problem, studied since the early 1990s in the context of Bell nonlocality \cite{maudlin1992bell,BrassardCost1999,Steiner2000,CerfGisinMassar2000,massar2001classical,DegorreLaplanteRoland2005}. In the quantum communication setting considered above, Alice receives a state $\rho$ and the goal is to reproduce the statistics obtained when Bob measures it. Here, instead, we fix a bipartite state $\rho_{AB}$ and ask whether the statistics generated by arbitrary local measurements on this state can be reproduced by a classical communication model.

Let $\rho_{AB}$ be a state on $\C^{d_A}\otimes\C^{d_B}$. Alice and Bob choose arbitrary local measurements $M^A=\{M_a^A\}$ and $M^B=\{M_b^B\}$. Quantum theory assigns the joint probabilities
\begin{equation}
p_Q(a,b\mid M^A,M^B)
=
\tr\left[
\rho_{AB}
\left(
M_a^A\otimes M_b^B
\right)
\right].
\label{eq:quantum-bell-probability-main}
\end{equation}

Since this quantum experiment involves no communication between Alice and Bob, it is natural to first ask whether its statistics can be reproduced using shared randomness alone. States for which this is possible for all local measurements are called \emph{Bell local}. Not every bipartite state has this property: there exist entangled states whose local measurement statistics admit no such classical description \cite{Bell:1964PHY,Clauser:1969PRL,Freedman:1972PRL,Aspect:1982PRL,Weihs:1998PRL}. Such states are called \emph{Bell nonlocal}.

For Bell nonlocal states, we may then ask whether the missing classical resource can be supplied by communication and, if so, how much is required. In this model, Alice produces an outcome $a$ together with a message $c\in[d_C]$, while Bob uses $c$, his measurement $M^B$, and shared randomness $\lambda$ to produce $b$, \cref{fig:pam-scenario}(b). The resulting statistics have the form
\begin{equation}
\begin{split}
p_C(a,b\mid M^A,M^B)
=
&\int_\Lambda
\sum_{c=1}^{d_C}
p_A(a,c\mid M^A,\lambda)
\\
&\quad\times
p_B(b\mid M^B,c,\lambda)
\,\dd\mu(\lambda),
\end{split}
\label{eq:lhv+cc-model-main}
\end{equation}
A more precise definition is given in \cref{def:bell-classical-simulation}.

An exact simulation of a fixed state $\rho_{AB}$ requires
\begin{equation}
p_C(a,b\mid M^A,M^B)
=
p_Q(a,b\mid M^A,M^B)
\end{equation}
for every pair of local measurements and outcomes. If some finite $d_C$ suffices for every state on $\C^{d_Q}\otimes\C^{d_Q}$, we say that Bell nonlocality in dimension $d_Q$ can be simulated with finite classical communication. Conversely, if the local measurement statistics of some state in this dimension require an infinite message dimension, then Bell nonlocality in dimension $d_Q$ cannot be simulated with finite classical communication.

Finite-communication simulations of Bell nonlocality have been investigated in a variety of settings \cite{maudlin1992bell,BrassardCost1999,Steiner2000,CerfGisinMassar2000,massar2001classical,DegorreLaplanteRoland2005,TonerBacon,RennerTavakoliQuintino}. So far, the problem was completely solved only for qubits,
where two bits of classical communication suffice to reproduce the statistics of arbitrary local POVMs on any two-qubit state \cite{TonerBacon,SchlosserKleinmann2026, RennerTavakoliQuintino}. 

The connection with the quantum communication problem gives useful reductions in both directions. It is known that any finite classical simulation of arbitrary $d$-dimensional quantum communication can be converted into a simulation of the local measurement statistics of any bipartite state whose Bob subsystem has dimension $d$, without increasing the message dimension\cite{CerfGisinMassar2000,RennerTavakoliQuintino}. Consequently, our qutrit protocol immediately implies that Bell nonlocality remains finitely simulable whenever Bob's subsystem is a qutrit; in particular, $357$ bits of classical communication suffice. We give the reduction in \cref{sec:bell-qutrit}.

For maximally entangled states, a converse relation also holds at the level of finiteness. A construction closely related to quantum teleportation \cite{bennett1993teleporting} converts a finite classical simulation of all local measurement statistics on the maximally entangled state of dimension $d$, $\ket{\Phi_d}$, into a finite classical simulation of arbitrary $d$-dimensional quantum communication. Combined with our quantum communication result, this already implies that the local measurement statistics of $\ket{\Phi_4}$ require an infinite classical message dimension; see \cref{sec:bell-ququart-full}. This construction uses a $d^2$-outcome measurement on Alice's side, closely reflecting the structure of teleportation. It is therefore natural to ask whether the same obstruction already appears for the minimal scenario of binary measurements.

This question becomes particularly interesting in light of the result of Regev and Toner \cite{RegevToner}. They showed that, for arbitrary finite-dimensional bipartite states and binary local measurements, the corresponding quantum correlators can be simulated exactly using only two bits of classical communication. Their protocol, however, reproduces only the correlator and not in general the complete joint distribution: its local marginals are uniform and may differ from the quantum marginals. Thus finite communication for binary correlators does not imply finite communication for the full local measurement statistics.

Our result shows a sharp separation between these two tasks. For the maximally entangled ququart, already a family of rank-one binary projective measurements has full statistics that cannot be reproduced with any finite classical message dimension, even though their correlators admit a two-bit simulation by the Regev--Toner result.

The proof follows the same zero-slope mechanism as in the quantum communication problem. Consider the maximally entangled ququart state $\rho_{AB}=\ketbra{\Phi_4}$, where $\ket{\Phi_4} = (1/2) \sum_{j=0}^3\ket{jj}$. For $x,y\in\CP^3$, define $P_x=\ketbra{x}$ and $f(x,y)=\tr(P_xP_y)$, and let Alice and Bob perform the binary projective measurements
\begin{equation}
    M_x^A=\{P_{\bar{x}},\eye_4-P_{\bar{x}}\},
    \quad
    M_y^B=\{P_y,\eye_4-P_y\},
\end{equation}
where $\bar{x}$ denotes complex conjugation in the computational basis. For the outcome $a=b=0$,
\begin{equation}
    p_{\Phi_4}(0,0\mid M_x^A,M_y^B)
    =\frac{1}{4}f(x,y) ,
\end{equation}
where we have used the fact that $\tr( \Phi_4 \, P_{\bar{x}} \otimes P_y) = (1 / 4) \tr( P_x P_y)$.
A finite classical simulation would therefore require
\begin{equation}
\begin{split}
    \frac{1}{4}f(x,y) = &
    \int_\Lambda \sum_{c=1}^{d_C}
    p_A(0,c\mid M_x^A,\lambda) \\
    &\quad \times
    p_B(0\mid M_y^B,c,\lambda)
    \,\dd\mu(\lambda).
\end{split}
\end{equation}
Apart from the factor $1/4$, this has the same structure appearing in the quantum communication case.
From here on, the same reasoning applies: one considers the fixed-fidelity distribution $\eta_t$ and obtains that the slope of the quantum fidelity curve is $a_Q'(0)=1/4$.
Since a nonnegative factor plays no role in the zero-slope argument (\cref{lem:zero-slope}), we conclude that for any finite $d_C$, the condition $a_C(0)=0$ forces $a_C'(0)=0$.This contradicts $a_Q'(0)=1/4$.

Hence no finite classical message dimension can reproduce these binary local measurement statistics of the maximally entangled ququart. The complete argument is given in \cref{sec:bell-ququart-binary}, and the same obstruction extends to maximally entangled states in every dimension $d\geq4$. We discuss this contrast with the Regev--Toner result in more detail in \cref{sec:regev-toner}.

\section*{Conclusion}

Since the earliest days of quantum information science, quantum systems have been known to perform certain communication tasks more efficiently than classical systems \cite{BuhrmanQuantumClassical1998,BarYossefJayramKerenidis2004,Barreiro2008Beating,Kumar2019Experimental,Zhong2021EfficientExperimental}.
Such advantages raise a fundamental question about the nature of quantum information: do quantum states merely provide a more efficient way of implementing classical information protocols, or are they qualitatively different even at the level of their observable statistics?
Quantum states are, of course, not physically equivalent to classical states.
But, all information about a quantum state is ultimately obtained through classical measurement outcomes.
Our results resolve this question for arguably the two most elementary building blocks of quantum information protocols, namely, direct quantum communication and bipartite quantum correlations.
For qubits and qutrits, the difference remains purely quantitative, whereas starting at ququarts, a finite classical explanation of their observable statistics is no longer possible.

A natural next step is to turn this separation into an explicit sequence of communication tasks and Bell tests whose classical simulation cost grows progressively even at fixed quantum dimension.
Finding a sequence of tasks with some noise tolerance and efficient verification would yield experimentally testable separations of unbounded size.
This would have direct consequences to \textit{quantum information supremacy} proofs.
Aaronson, Buhrman, and Kretschmer proposed this notion as an unconditional quantum advantage in information resources, and it was recently demonstrated by recasting a one-way communication task as state preparation followed by measurement on a quantum processor \cite{aaronson_et_al:LIPIcs.ITCS.2024.1,KretschmerEtAl2025}.
In that construction, as in earlier asymptotic communication separations \cite{BarYossefJayramKerenidis2004,Montanaro2019quantumstatescannot}, larger gaps require larger quantum systems.
Our results instead suggest the classical cost can grow arbitrarily while the quantum system remains fixed at as few as two qubits.

\section*{Acknowledgments}

We are grateful to Marco Túlio Quintino and Kiara Hansenne for helpful discussions and comments on an earlier version of this article.

C.G. was partially supported by the ANR for the JCJC grants LINKS (No.
ANR-23-CE47-0003), the T-ERC QNET (No. ANR24-ERCS-0008), the project QUANTINT, as well as the European Union's Horizon 2020 Research and Innovation Programme under QuantERA Grant Agreements No. 731473 and No. 101017733. 
T.S.R.S was partially supported by FAPESP Grant No. 2024/15587-1.
C.V was partially supported by FAPESP Grant No. 2024/16657-3, 2025/01058-0 and 2025/27304-7 and by the Conselho Nacional de Desenvolvimento Científico e Tecnológico (CNPq), through a grant from the Conhecimento Brasil Program - Line 1.

\section*{Statement on AI usage}

After the authors reduced the quantum communication problem to the state exclusion formulation, OpenAI GPT-5.5 and GPT-5.6 Sol were used, through extended dialogues, to explore proof strategies.
Useful insights, including core ideas for the case $d_Q = 4$ and an early version of the finite communication protocol for $d_Q = 3$, were suggested by the models.
These suggestions were treated exclusively as unverified preliminary material.
The development and synthesis of the mathematical arguments was carried out by the authors, who selected the useful ideas and independently verified, solved technical issues, simplified, generalized and wrote the proofs and protocols.
The resulting article, including all arguments, figures and references was produced by the authors, who take full responsibility for the correctness of the results.

\section*{Methods}

\subsection*{Outline of the zero slope proof}

As previously discussed, the crucial step in the proof that there exists no finite classical simulation of ququarts is to show that for any finite $d_C$, forcing $a_C(0) = 0$ implies instead that $a_C'(0) = 0$.
Let us analyze this in more detail.

Consider separately each value $\lambda$ of the shared randomness and each message $c$, and let
\begin{align}
A_{\lambda,c}(x)&:=p_A(c\mid x,\lambda), \\
B_{\lambda,c}(y)&:=p_B(0\mid y,c,\lambda).
\end{align}
Both these functions are measurable and bounded, with values in $[0,1]$, and since $\sigma$ is a probability measure, they belong to $L^2(\CP^3)$.

We write $\eta_t(\dd y\mid x)$ for the uniform distribution over measurements $y$ having fidelity $t$ with a fixed state $x$.
Using this conditional distribution, we can define for any $B \in L^2(\CP^3)$ the averaged decoder response
\begin{equation}
(R_tB)(x) := \int_{\CP^3}B(y) \eta_t(\dd y\mid x).
\end{equation}
Thus, $R_tB$ is obtained by averaging $B$ over all measurement rays $y$ having fidelity $t$ with $x$.
This allows us to rewrite each fixed branch $(\lambda, c)$ of the function $a_C(t)$ defined in \cref{eq:classical-simulation-average-main} as an inner product
\begin{align}
a_{\lambda,c}(t)
&:=
\big\langle A_{\lambda,c} \vert R_tB_{\lambda,c}\big\rangle \nonumber\\
&=
\int_{\CP^3}
A_{\lambda,c}(x)(R_tB_{\lambda,c})(x) \dd\sigma(x),
\end{align}
from which $a_C(t)=\int_\Lambda\sum_{c=1}^{d_C} a_{\lambda,c}(t) \dd\mu(\lambda)$, where the exchange of the integration order is justified by Tonelli's theorem, since the integrand is nonnegative and measurable.
The special role of the point $t = 0$ now becomes apparent: as the contribution of every branch is nonnegative, $a_C(0)=0$ forces $a_{\lambda,c}(0)=0$ for every $c$ and $\mu$-almost every $\lambda$.
Furthermore, the right derivative of each branch, if it exists, is given by
\begin{equation}
    a'_{\lambda, c}(0) = \lim_{t \to 0^+} \Big\langle A_{\lambda, c} \Big\vert \frac{R_t - R_0}{t} B_{\lambda, c} \Big\rangle ,
\end{equation}
so determining its slope reduces to understanding the variation of $R_t$ near $t = 0$.

To understand the effect of the averaging operator $R_t$, we use the spectral decomposition of the nonpositive Laplace--Beltrami operator $\Delta$.
This operator is the standard generalization of the Laplacian from an Euclidean to a curved space and describes the second-order variation of a function.
On every complex projective space $\CP^q$, the eigenspaces of $\Delta$ form a complete orthogonal decomposition of $L^2(\CP^q)$ \cite{Koornwinder,Lu1998}.
This decomposition is the analogue on complex projective spaces of the Fourier decomposition of a function on the circle.
For $\CP^3$, we denote the $\ell$-th eigenspace by $\mathcal H_\ell$ and write the decomposition as
\begin{equation}
L^2(\CP^3) = \widehat{\bigoplus}_{\ell\geq 0}\mathcal H_\ell. 
\end{equation}
Then every $B\in L^2(\CP^3)$ can be written as $B=\sum_{\ell\geq 0}B_\ell$, with $B_\ell \in \mathcal{H}_\ell$ and the corresponding eigenvalues
\begin{equation}
\Delta B_\ell = -4\ell(\ell+3)B_\ell.
\label{eq:laplacian-eigenvalues-main}
\end{equation}

In \cref{prop:multiplier}, we show that due to the unitary symmetry, $R_t$ preserves each angular mode of $B$, changing only its amplitude:
\begin{equation}
R_tB_\ell = \phi_\ell (t) B_\ell ,
\label{eq:Rt-on-eigenfunctions-main}
\end{equation}
where the scalars $\phi_\ell(t)$ satisfy
\begin{align}
\phi_\ell(0) &= \frac{2(-1)^\ell}{(\ell + 1) (\ell + 2)}, \label{eq:coefficient-square-decay-main}\\
\phi_\ell'(0) &= -\ell(\ell+3)\phi_\ell(0). \label{eq:coefficient-derivative-main}
\end{align}
The quadratic decay $\abs{\phi_\ell(0)} \simeq \ell^{-2}$ is a key feature of the argument.
First, it exactly compensates for the quadratic growth of the eigenvalues in \cref{eq:laplacian-eigenvalues-main}, so that $\Delta R_0B \in L^2(\CP^3)$.
Thus, although $B$ need not be differentiable, the averaged function $R_0B$ has the two derivatives needed for its Laplacian $\Delta R_0B$ to be well defined as an $L^2(\CP^3)$ function.
Second, combining \cref{eq:laplacian-eigenvalues-main,eq:Rt-on-eigenfunctions-main,eq:coefficient-square-decay-main,eq:coefficient-derivative-main} shows that, mode by mode,
\begin{equation}
    \phi_\ell'(0) B_\ell = \frac{1}{4} \Delta( \phi_\ell(0) B_\ell ) = \frac14\Delta R_0B_\ell .
\end{equation}

The next step is to extend this mode-by-mode identity to the full function $B$.
Because $B$ may contain infinitely many modes, a relation proved for each mode separately does not automatically remain valid after all modes are summed.
One must first justify exchanging the limit $t \to 0^+$ with this infinite sum.
This is established in \cref{sec:mode-by-mode-to-whole-function}, yielding
\begin{equation}
\frac{R_t-R_0}{t}B \longrightarrow \frac{1}{4}\Delta(R_0B) \;\text{ as }t\to0^+.
\label{eq:fidelity-laplacian}
\end{equation}
Applying \cref{eq:fidelity-laplacian} to each branch $(\lambda, c)$ then gives
\begin{equation}
a_{\lambda,c}'(0)
=
\frac{1}{4} \big\langle A_{\lambda,c} \vert \Delta R_0B_{\lambda,c} \big\rangle.
\label{eq:branch-derivative-main}
\end{equation}

It remains to show that the inner product on the right-hand side above vanishes for every $c$ and $\mu$-almost every $\lambda$, and then to combine these branchwise conclusions to obtain $a_C'(0)=0$.
Since $R_0B_{\lambda, c}$ is nonnegative, perfect exclusion implies that
\begin{equation}
    A_{\lambda, c}(x) R_0B_{\lambda, c}(x) = 0
\end{equation}
for $\sigma$-almost every $x$.
Equivalently, $A_{\lambda,c}$ can only be nonzero on the set where $R_0B_{\lambda,c} = 0$.
The regularity implied by the $\ell^{-2}$ decay in \cref{eq:coefficient-square-decay-main} is sufficient to show that $\Delta R_0B_{\lambda, c}$ also vanishes $\sigma$-almost everywhere on this zero set; see \cref{sec:perfect-exclusion-forces-zero}.
Consequently, $a_{\lambda, c}'(0) = 0$ for every $c$ and $\mu$-almost every $\lambda$.
Averaging over $\lambda$ and summing over any finite number of classical messages as in \cref{eq:classical-simulation-main} then yields $a_C'(0)=0$, in contradiction to $a_Q'(0) = 1$; see \cref{sec:d4-proof-of-main-theorem}.

Interestingly, the $\ell^{-2}$ decay in \cref{eq:coefficient-square-decay-main} is exactly the threshold needed for $R_0B$ to acquire the two derivatives used above. The decay is faster for $d_Q > 4$, but too slow to provide this regularity for $d_Q < 4$. This is discussed in \cref{sec:dimension-threshold} and explains why the proof applies precisely from $d_Q=4$ onward.

\subsection*{Qutrit protocol with finite communication}

\begin{figure*}[t]
	\centering
		\includegraphics[width=0.93\linewidth]{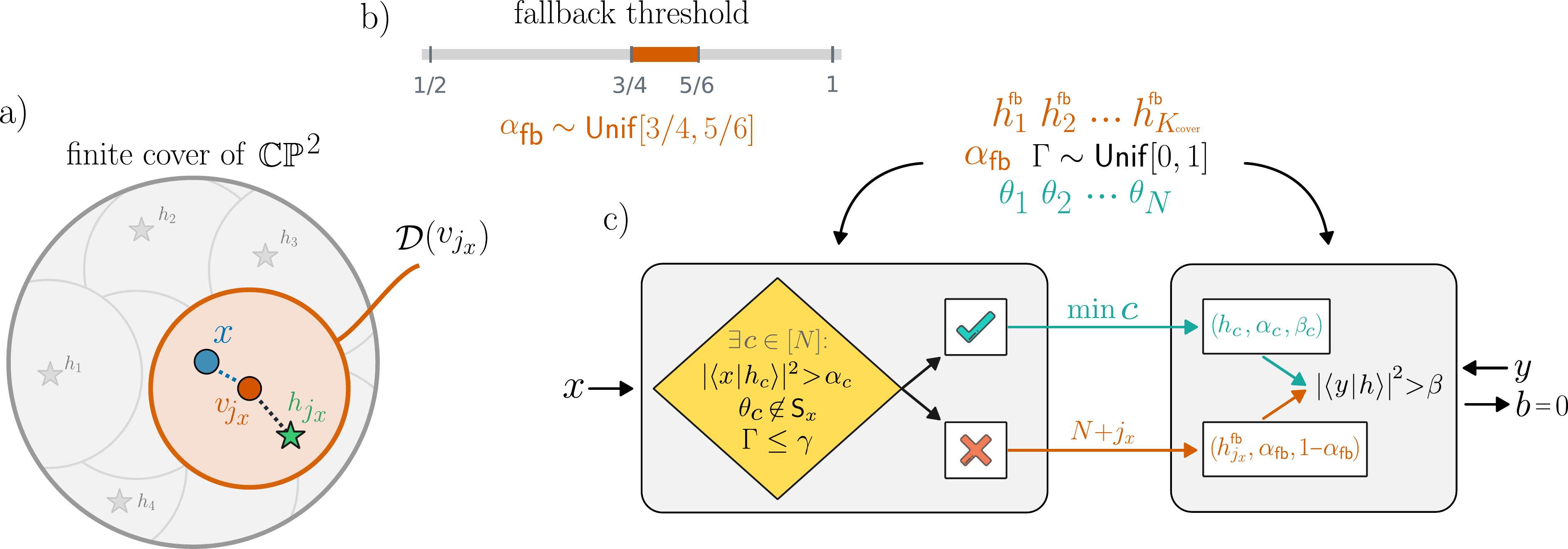}
	\caption{
    \textbf{Qutrit simulation protocol with finite communication.}
        (a) A cover $\CP^2$ by $K_{\mathrm{cover}}$ balls $\mathcal D(v_j)$. Alice selects the smallest $j_x$ where $x \in \mathcal{D}(v_{j_x})$. A ray $h_j^{\mathrm{fb}} \in \mathcal D(v_j)$ is sampled uniformly from each ball.
        (b) The fallback thresholds are restricted to $\alpha_{\mathrm{fb}}\in[3/4,5/6]$ and $\beta_{\mathrm{fb}}=1-\alpha_{\mathrm{fb}}$, and the sets $\mathrm{S}_{x} \subset \Lambda$, defined in \cref{eq:reserved-slice}, are reserved for the fallback.
        (c) Alice searches a finite set of $N$ ordinary coordinates $\theta_k$.
        If there is at least one accepted ordinary coordinate (yellow lozenge), she communicates the smallest accepted index.
        Otherwise, she communicates $j_x$.
        Bob selects the shared coordinate $(h, \alpha, \beta)$ corresponding to Alice's message and outputs $0$ if $\abs{ \langle y \vert h \rangle }^2 > \beta$ or $1$ otherwise.
        The shared random number $\Gamma$ is used to keep an ordinary success with probability $\gamma$, ensuring that the ordinary and fallback branches combined with the exact probabilities necessary to reproduce the quantum statistics.
	}
	\label{fig:worst-case-communication-protocol}
\end{figure*}

Here we present the the protocol for qutrit simulation with finite classical communication.
Recall that the protocol presented previously has finite average communication, but for every finite $N$, there is a positive probability that Alice rejects the first $N$ proposals $\theta_k$, and therefore must communicate an index larger than $N$.
To transform this into a finite message alphabet, the natural first idea is to stop Alice's search after a sufficiently large number of samples which, for reasons explained later, will turn out to be $N = 112$.
We call this part of the protocol the \emph{ordinary branch}.
This creates one difficulty: with nonzero probability, Alice rejects all $N$ samples and has no accepted index to send.
We handle such cases using a second procedure, called the \emph{fallback branch}, which is used whenever the ordinary branch is not selected (the exact conditions for her choice of branch will be specified later).

For the fallback to complement the ordinary branch, it must be defined in every situation and provide Alice with a valid message for any state $x$.
A finite covering of $\CP^2$ provides a natural way to guarantee this.
For this covering, we will use balls $\mathcal D(v_j)$ centered at $v_j\in\CP^2$, all with radius $r=1/\sqrt{24}$.
This choice of $r$ is convenient rather than optimized and directly affects the number of classical messages required by the protocol; see \cref{rem:binary-protocol-optimization}.
In \cref{lem:finite-cap-cover} shows that $\CP^2$ admits such a cover with $K_{\mathrm{cover}}\leq (2\sqrt{24})^{4}= 9216$, and using the metric presented in \cref{sec:finite-covering}, any two rays $x, h$ in the same ball satisfy $| \langle x \vert h\rangle |^2 > 5/6$.
We fix a covering and let it be known to both parties; see \cref{fig:worst-case-communication-protocol}(a).

In addition to the covering, the fallback will require a few other shared random variables.
Alice and Bob sample one ray $h^{\mathrm{fb}}_j$ uniformly from each ball $\mathcal D(v_j)$ and a threshold
\begin{equation}
	\alpha_{\mathrm{fb}}
	\sim\operatorname{Unif}\left[\frac34, \,\frac56 \right],
	\quad
	\beta_{\mathrm{fb}}=1-\alpha_{\mathrm{fb}} .
\end{equation}
These will constitute the fallback coordinates
\begin{equation}
\theta_{\mathrm{fb}}
= \bigl(h_j^{\mathrm{fb}}, \alpha_\mathrm{fb}, 1 - \alpha_{\mathrm{fb}} \bigr) ,
\end{equation}
which play a similar role to the ordinary coordinates defined in \cref{eq:average-protocol-ordinary-coordinates-main}.
Lastly, they sample a uniform random number $\Gamma\in[0,1]$.
This will serve as a correction to combine the ordinary and the fallback procedures with the required probabilities to reproduce the quantum behavior.
The shared variable of this protocol is therefore
\begin{equation}
	\lambda
	=
	\bigl(\theta_1,\ldots,\theta_N,\, \Gamma,\, h^{\mathrm{fb}}_1,\ldots,h^{\mathrm{fb}}_{K_{\mathrm{cover}}},\, \alpha_{\mathrm{fb}}	\bigr).
\end{equation}

Alice's usage of the fallback is straightforward:
after receiving $x$, she determines the first covering ball that contains it, $j_x := \min\{j:x\in\mathcal D(v_j)\}$.
Since the shared ray $h^{\mathrm{fb}}_{j_x}$ was sampled from this same ball, its squared overlap with $x$ is larger than $5/6$.
Therefore the fallback coordinate $\bigl( h^{\mathrm{fb}}_{j_x}, \alpha_{\mathrm{fb}}, 1-\alpha_{\mathrm{fb}}\bigr)$
is always successful in Alice's test, because $| \langle x \vert h_{j_x}^{\mathrm{fb}} \rangle |^2 > 5/6 \ge  \alpha_{\mathrm{fb}}$.

It remains to specify how the ordinary and the fallback branches are combined so that the resulting protocol simulates the quantum behavior.
We first reserve parts of the shared randomness space for the fallback.
These parts are denoted $\mathrm{S}_{x}$ and defined in \cref{eq:reserved-slice}.
Alice searches the first $N$ ordinary tests $\theta_k$ for one that is accepted and lies outside this reserved part.
If she finds such a candidate and $\Gamma\leq\gamma$ (where $\gamma$ is a constant defined in \cref{eq:finite-protocol-constants} to combine the branches with the correct probabilities), she communicates the accepted index $k \in \{1, \ldots, N\}$.
Otherwise, she communicates the fallback label $N+j_x \in \{ N+1, \ldots, N+K_{\mathrm{cover}} \}$.
Bob then selects the corresponding ordinary candidate $\theta_c$ or the fallback candidate $\theta^{\mathrm{fb}}_{j_x} = (h^{\mathrm{fb}}_{j_x}, \alpha_{\mathrm{fb}}, 1 - \alpha_{\mathrm{fb}})$, and we denote his selected coordinate by $(h, \alpha, \beta)$. He outputs $0$ if $\abs{ \langle y \vert h \rangle }^2 > \beta$ or $1$ otherwise. This protocol is illustrated in \cref{fig:worst-case-communication-protocol}(c).

In \cref{sec:worst-case-correctness} we show that this protocol exactly reproduces $p_Q(0\mid x,M)$.
The classical communication used is of dimension $d_C = N + K_{\mathrm{cover}} = 9328 < 2^{14}$ in every round, thus the finiteness of the qutrit simulation problem is proven in the case of binary projective measurements.

The extension to mixed states and generalized measurements with more outcomes is presented in \cref{sec:d3-more-outcomes}.
It is done by first binarizing the measurement and then carefully combining finitely many independent copies of the binary protocol above.
The resulting protocol can be implemented with $d_C = (2\sqrt{24})^4 [112+(2\sqrt{24})^4]^{26} = 9216(9328)^{26} < 2^{357}$, proving that $C_{\mathrm{cl}}(3) < 2^{357}$.
This proves finiteness but does not solve the optimal cost. Already for binary measurements, our construction uses $d_C = 9328$, whereas a lower bound of five is known for binary measurements \cite{SchlosserKleinmann2026}. Note that no effort was made to optimize our protocol.
In fact, a more careful choice of the constants used can already lead to improvements; see \cref{rem:binary-protocol-optimization}.

\printbibliography

\clearpage
\onecolumn

\newgeometry{
    margin=1in
}

\appendix
\etocdepthtag.toc{appendices}

\crefalias{section}{appendix}
\crefalias{subsection}{appendix}
\crefalias{subsubsection}{appendix}

\fontsize{11pt}{13.6pt}\selectfont

\titleformat{\subsubsection}
  {\normalfont\fontsize{11pt}{13.6pt}\selectfont\bfseries}
  {\thesubsubsection}{1em}{}

\titleformat{\paragraph}[runin]
  {\normalfont\fontsize{11pt}{13.6pt}\selectfont\bfseries}
  {}{0pt}{}[]

\section*{Appendices}

\etocsettagdepth{appendices}{subsubsection}
\etocsettagdepth{main}{none}
\tableofcontents

\clearpage
\pagestyle{appendixstyle}

\section{Preliminaries}
\label{app:preliminaries}
    \input{2_appendix_introduction}
    
\section{No finite classical simulation for ququart communication}
\sectionmark{Ququarts}
\label{app:d4}
    \input{3_appendix_d4proof}

\section{The problem of classical simulation in Bell Nonlocality}
\sectionmark{Bell Nonlocality}
\label{app:bell}
    \input{7_appendix_bell_v2}   

\section{Finite classical simulation for qutrit communication}
\sectionmark{Qutrits}
\label{app:d3}
    \input{4_appendix_d3proof_v3}

\section{Additional calculations}
\label{app:technical-calculations}
    \input{5_appendix_technicalities}

\end{document}

%% file: 0_preamble.tex
\usepackage[utf8]{inputenc}
\usepackage[T1]{fontenc}
\usepackage[english]{babel}
\usepackage{etoc}
\usepackage{titlesec}
\usepackage{authblk}

\usepackage{fancyhdr}
\fancypagestyle{appendixstyle}{
  \fancyhf{}
  \fancyhead[L]{\nouppercase{\leftmark}}
  \fancyhead[R]{\nouppercase{\rightmark}}
  \fancyfoot[C]{\thepage}
  
}
\renewcommand{\sectionmark}[1]{%
  \markboth{App.\ \thesection: #1}{}%
}

\usepackage{lmodern}
\usepackage{amsmath,amssymb,amsthm,mathtools,thmtools,dsfont,physics,empheq}
\usepackage[scr=boondoxo]{mathalpha}
\usepackage[dvipsnames]{xcolor}
\usepackage{enumitem}
\setlist[itemize]{
  itemsep=4pt,
  topsep=6pt,
  parsep=0pt,
  partopsep=0pt
}
\usepackage{csquotes}

\usepackage{microtype}
\usepackage{graphicx,subcaption}

\usepackage{nicefrac}
\usepackage{babel}

\usepackage[
  backend=biber,
  style=numeric-comp,
  giveninits=true,
  sorting=none,
  maxbibnames=99,
  maxcitenames=4,
  doi=false,
  url=false
]
{biblatex}

\DeclareFieldFormat[article]{title}{%
  \mkbibquote{%
    \iffieldundef{doi}
      {\iffieldundef{url}
         {#1}
         {\href{\thefield{url}}{#1}}}
      {\href{https://doi.org/\thefield{doi}}{#1}}%
  }%
}
\AtEveryBibitem{%
  \iffieldundef{journaltitle}
    {}
    {\clearfield{eprint}%
     \clearfield{eprinttype}%
     \clearfield{eprintclass}}%
}

\definecolor{DarkRed}{rgb}{0.8,0,0}
\usepackage[
    linktocpage=true,
	colorlinks,
	linkcolor=DarkRed,
    citecolor=ForestGreen,
    bookmarks,
    bookmarksopen,
    bookmarksnumbered]
{hyperref}

\usepackage[normalem]{ulem}

\usepackage[nameinlink,capitalise]{cleveref}
\crefname{section}{Sec.}{Secs.}
\crefname{appendix}{App.}{Apps.}
\Crefname{section}{Appendix}{Appendices}

\theoremstyle{plain}
\newtheorem{theorem}{Theorem}[section]
\newtheorem{lemma}[theorem]{Lemma}
\newtheorem{proposition}[theorem]{Proposition}
\newtheorem{corollary}[theorem]{Corollary}
\newtheorem{definition}[theorem]{Definition}
\newtheorem{remark}[theorem]{Remark}

\RenewCommandCopy{\theHtheorem}{\thetheorem}
\RenewCommandCopy{\theHlemma}{\thelemma}
\RenewCommandCopy{\theHproposition}{\theproposition}
\RenewCommandCopy{\theHcorollary}{\thecorollary}
\RenewCommandCopy{\theHdefinition}{\thedefinition}
\RenewCommandCopy{\theHremark}{\theremark}

\usepackage[most]{tcolorbox}

\newtcolorbox{simplebox}{
  enhanced,
  breakable,
  colback=gray!5,
  colframe=gray!50,
  boxrule=0.5pt,
  arc=2pt,
  left=8pt,
  right=8pt,
  top=6pt,
  bottom=6pt,
  before skip=10pt,
  after skip=10pt
}
\newcommand{\R}{\mathbb R}
\newcommand{\C}{\mathbb C}
\newcommand{\N}{\mathbb N}
\newcommand{\CP}{\mathbb{CP}}
\newcommand{\1}{\mathbf 1}

\newcommand{\xy}{\lvert \langle x | y \rangle \rvert}
\newcommand{\eye}{\mathds{1}}

\renewcommand{\dd}{\mathrm d}

\newcommand{\cH}{\mathcal H}
\renewcommand{\ip}[2]{\left\langle #1 \middle\vert #2\right\rangle}
\renewcommand{\norm}[1]{\left\lVert #1\right\rVert}
\renewcommand{\abs}[1]{\left\lvert #1\right\rvert}

%% file: 2_appendix_introduction.tex
\subsection{Mathematical setting and notation}
The proofs of our main results involve averaging over continuous families of quantum states and measurements, as well as over the shared randomness used by a classical simulation. We therefore begin by fixing the relevant measurable structures and notation. Since all quantum state and measurement spaces considered here are finite dimensional, their natural topology provides canonical Borel $\sigma$-algebras, which will allow us to formulate the simulation model and the integrals appearing throughout the proofs precisely.

For $d\in\N$, let
\begin{equation}
    \operatorname{Herm}(\C^d)=\{A\in\C^{d\times d}: A=A^\dagger\}
\end{equation}
be the real Hilbert space of Hermitian operators on $\C^d$, equipped with the Hilbert--Schmidt inner product
\begin{equation}
    \langle A,B\rangle_{\mathrm{HS}}=\tr(AB),
\qquad A,B\in\operatorname{Herm}(\C^d) .    
\end{equation}
All topological notions below, including compactness and Borel measurability, refer to the topology induced by this finite-dimensional Hilbert-space structure.

We write $A\succeq0$ when $A$ is positive semidefinite. The set of quantum states in dimension $d$ is
\begin{equation}
    \mathcal S_d=\{\rho\in\operatorname{Herm}(\C^d):
    \rho\succeq0,\ \tr\rho=1\}.
 \end{equation}
This is a compact subset of $\operatorname{Herm}(\C^d)$, and we equip it with the Borel $\sigma$-algebra $\mathcal B(\mathcal S_d)$.

An $n$-outcome positive-operator-valued measure (POVM) in dimension $d$
is a tuple $M = (M_1,\ldots,M_n)$ of Hermitian operators satisfying
\begin{equation}
    M_b\succeq0
    \quad\text{for every } b\in[n],
    \qquad
    \sum_{b=1}^n M_b=\eye_d, 
\end{equation}
where $\eye_d$ denotes the identity operator on $\mathbb C^d$ and $[n] := \{1,\ldots,n\}$. We denote the space of all such POVMs by,
\begin{equation}
     \mathcal M_{d,n}
 =\left\{(M_1,\ldots,M_n):M_b\succeq0,\ 
   \sum_{b=1}^nM_b=\eye_d\right\}.    
\end{equation}
As a compact subset of the finite-dimensional real vector space
$\operatorname{Herm}(\C^d)^n$, the space $\mathcal M_{d,n}$ carries the corresponding Borel $\sigma$-algebra
$\mathcal B(\mathcal M_{d,n})$.

If a preparation device prepares a state $\rho\in\mathcal S_d$ and a measurement device performs $M\in\mathcal M_{d,n}$, quantum theory assigns the conditional probabilities
\begin{equation}
p_Q(b\mid\rho,M)
:=
\tr(\rho M_b),
\qquad
b\in[n].
\label{eq:born-rule}
\end{equation}
These probabilities will be the target statistics of the classical simulation model introduced next.

\subsection{Classical simulation model for quantum communication}
\label{sec:classical-simulation-model}

We now formalize the notion of classical simulation used throughout the paper. The goal is to reproduce the statistics of quantum communication using only one-way classical communication and shared randomness. Instead of transmitting the quantum state $\rho$, the preparation device sends a classical message $c$ chosen from a finite alphabet. The preparation and measurement devices may additionally share an arbitrary random variable $\lambda$, sampled independently of the state $\rho$ and the measurement $M$. Their local responses may depend on $\lambda$, but the only information about $\rho$ that can reach the measurement device during the protocol is the communicated message $c$. An exact simulation is one that reproduces the quantum probabilities $\tr(\rho M_b)$ for every state and measurement.

\begin{definition}[classical simulation with finite message alphabet]
\label{def:classical-simulation}

Fix $d_Q,d_C\in\N$. We say that $d_Q$-dimensional quantum communication
admits an classical simulation with classical message dimension
$d_C$ if there exist:

\begin{enumerate}
    \item a probability space $(\Lambda,\Sigma_\Lambda,\mu)$ describing
    shared randomness sampled independently of the state and measurement;

    \item an encoder described by probabilities
    \begin{equation}
        p_A(c\mid\rho,\lambda),
        \qquad
        c\in[d_C],\ \rho\in\mathcal S_{d_Q},\ \lambda\in\Lambda,
    \end{equation}
    and, for every $n\geq1$, a decoder described by probabilities
    \begin{equation}
        p_B(b\mid M,c,\lambda),
        \qquad
        b\in[n],\ M\in\mathcal M_{d_Q,n},\
        c\in[d_C],\ \lambda\in\Lambda,
    \end{equation}
    satisfying
    \begin{equation}
        p_A(c\mid\rho,\lambda)\geq0,
        \qquad
        \sum_{c=1}^{d_C}p_A(c\mid\rho,\lambda)=1,
    \end{equation}
    and
    \begin{equation}
        p_B(b\mid M,c,\lambda)\geq0,
        \qquad
        \sum_{b=1}^n p_B(b\mid M,c,\lambda)=1;
    \end{equation}
    \item the measurability conditions that, for every $c\in[d_C]$,
    \begin{equation}
        (\rho,\lambda)\longmapsto p_A(c\mid\rho,\lambda)
    \end{equation}
    is $\mathcal B(\mathcal S_{d_Q})\otimes\Sigma_\Lambda$-measurable,
    and, for every $n\geq1$, $c\in[d_C]$, and $b\in[n]$,
    \begin{equation}
        (M,\lambda)\longmapsto p_B(b\mid M,c,\lambda)
    \end{equation}
    is $\mathcal B(\mathcal M_{d_Q,n})\otimes\Sigma_\Lambda$-measurable.

    \item exact reproduction of the quantum statistics: for every $n\ge1$, every $\rho\in\mathcal S_{d_Q}$, every $M=(M_1,\ldots,M_n)\in\mathcal M_{d_Q,n}$, and every $b\in[n]$,
    \begin{equation}
        \tr(\rho M_b) = \int_\Lambda \sum_{c=1}^{d_C} p_A(c\mid\rho,\lambda)\, p_B(b\mid M,c,\lambda) \,\dd\mu(\lambda). 
    \label{eq:classical-simulation}
    \end{equation}
\end{enumerate}
\end{definition}

No restriction beyond measurability is imposed on the shared randomness space $\Lambda$; in particular, it may be infinite or continuous. A classical message alphabet of size $d_C$ can be encoded using $\lceil\log_2 d_C\rceil$ bits.

\begin{remark}[Increasing the classical and quantum dimensions]
\label{rem:bigger_classical_and_quantum_dimensions}
If $d_Q$-dimensional quantum communication admits an exact classical simulation with message dimension $d_C$, then it also admits one with any $d_C'>d_C$. Indeed, one may use the original protocol and simply assign zero probability to the additional messages.

Conversely, if no finite classical simulation exists in dimension $d_Q$, then no such simulation can exist in any larger quantum dimension $d_Q'>d_Q$. To see this, choose an isometry
\begin{equation}
    V:\C^{d_Q}\longrightarrow\C^{d_Q'}.    
\end{equation}
The $d_Q$-dimensional states and measurements can then be embedded into dimension $d_Q'$ through this isometry in such a way that all measurement statistics are preserved. Hence, any exact simulation in dimension $d_Q'$ would restrict to an exact simulation of the embedded $d_Q$-dimensional scenario.
\end{remark}

\subsection{Bell scenarios and classical simulation of bipartite states}
\label{sec:bell-preliminaries}

There is a closely related simulation problem for bipartite quantum states. In the communication setting considered above, Alice receives a quantum state $\rho$ and the classical protocol must reproduce the statistics obtained when this state is measured by Bob. Here, instead, we fix a bipartite state $\rho_{AB}$ and ask whether the statistics generated by arbitrary local measurements on this state can be reproduced using shared randomness and classical communication.

More precisely, let $\rho_{AB}$ be a state on $\C^{d_A}\otimes\C^{d_B}$. For fixed numbers of outcomes $n_A$ and $n_B$, Alice receives an arbitrary local measurement
\begin{equation}
M^A=(M_1^A,\ldots,M_{n_A}^A)\in\mathcal M_{d_A,n_A},
\end{equation}
while Bob receives
\begin{equation}
M^B=(M_1^B,\ldots,M_{n_B}^B)\in\mathcal M_{d_B,n_B}.
\end{equation}
Quantum theory assigns the joint probabilities
\begin{equation}
p_{Q}(a,b\mid M^A,M^B)
=
\tr\left[
\rho_{AB}
\left(
M_a^A\otimes M_b^B
\right)
\right],
\label{eq:quantum-bell-state-behavior}
\end{equation}
for $a\in[n_A]$ and $b\in[n_B]$.

Since the target quantum experiment itself involves no communication between Alice and Bob, it is natural to first ask whether its statistics can be reproduced using shared randomness alone. A bipartite state $\rho_{AB}$ is called \emph{Bell local} if there exist a probability space $(\Lambda,\Sigma_\Lambda,\mu)$ and local response functions such that, for every pair of local measurements,
\begin{equation}
p_{\rho_{AB}}(a,b\mid M^A,M^B)
=
\int_\Lambda
p_A(a\mid M^A,\lambda)
p_B(b\mid M^B,\lambda)
\,\dd\mu(\lambda).
\label{eq:lhv-model}
\end{equation}
The shared random variable $\lambda$ is sampled independently of the measurements. Not every bipartite quantum state has this property. In particular, there exist entangled states whose local measurement statistics admit no such classical description. Such states are called \emph{Bell nonlocal}.

Thus, Bell nonlocal states are precisely those bipartite states whose local measurement statistics cannot be reproduced using shared randomness alone. This naturally leads to a quantitative extension of the problem: if classical communication is allowed, how much is required to reproduce these statistics exactly?

We consider one-way classical communication from Alice to Bob, in direct analogy with the model introduced in \cref{def:classical-simulation}. After receiving her local measurement $M^A$, Alice produces an output $a$ together with a classical message $c\in[d_C]$. Bob receives $c$ and uses it, together with his local measurement $M^B$ and the shared randomness, to produce his output $b$. The goal is to reproduce \cref{eq:quantum-bell-state-behavior} for all local measurements.

The target quantum experiment itself involves no communication between Alice and Bob. Classical communication is introduced only as a resource for simulating its statistics.

\begin{definition}[Classical simulation of a entangled state with finite message alphabet]
\label{def:bell-classical-simulation}

Fix $d_A,d_B,d_C\in\N$, and let
$\rho_{AB}\in\mathcal S_{d_A d_B}$ be a bipartite quantum state acting on
$\C^{d_A}\otimes\C^{d_B}$. We say that local measurements on $\rho_{AB}$ admits an classical
simulation with classical message dimension $d_C$ if there exist:

\begin{enumerate}
    \item a probability space $(\Lambda,\Sigma_\Lambda,\mu)$ describing
    shared randomness sampled independently of the local measurements;

    \item for every $n_A\geq1$, response probabilities
    \begin{equation}
        p_A(a,c\mid M^A,\lambda),
        \qquad
        a\in[n_A],\ c\in[d_C],\
        M^A\in\mathcal M_{d_A,n_A},
    \end{equation}
    and, for every $n_B\geq1$, response probabilities
    \begin{equation}
        p_B(b\mid M^B,c,\lambda),
        \qquad
        b\in[n_B],\ c\in[d_C],\
        M^B\in\mathcal M_{d_B,n_B},
    \end{equation}
    satisfying
    \begin{equation}
        p_A(a,c\mid M^A,\lambda)\geq0,
        \qquad
        \sum_{a=1}^{n_A}\sum_{c=1}^{d_C}
        p_A(a,c\mid M^A,\lambda)=1,
    \end{equation}
    and
    \begin{equation}
        p_B(b\mid M^B,c,\lambda)\geq0,
        \qquad
        \sum_{b=1}^{n_B}
        p_B(b\mid M^B,c,\lambda)=1;
    \end{equation}

    \item the measurability conditions that, for every $n_A\geq1$,
    $a\in[n_A]$, and $c\in[d_C]$,
    \begin{equation}
        (M^A,\lambda)
        \longmapsto
        p_A(a,c\mid M^A,\lambda)
    \end{equation}
    is
    $\mathcal B(\mathcal M_{d_A,n_A})\otimes\Sigma_\Lambda$-measurable,
    and, for every $n_B\geq1$, $b\in[n_B]$, and $c\in[d_C]$,
    \begin{equation}
        (M^B,\lambda)
        \longmapsto
        p_B(b\mid M^B,c,\lambda)
    \end{equation}
    is
    $\mathcal B(\mathcal M_{d_B,n_B})\otimes\Sigma_\Lambda$-measurable;

    \item exact reproduction of the quantum statistics: for every
    $n_A,n_B\geq1$, every
    $M^A=(M_1^A,\ldots,M_{n_A}^A)\in\mathcal M_{d_A,n_A}$,
    every
    $M^B=(M_1^B,\ldots,M_{n_B}^B)\in\mathcal M_{d_B,n_B}$,
    and every $a\in[n_A]$, $b\in[n_B]$,
    \begin{equation}
        \tr\left[
            \rho_{AB}
            \left(
                M_a^A\otimes M_b^B
            \right)
        \right]
        =
        \int_\Lambda
        \sum_{c=1}^{d_C}
        p_A(a,c\mid M^A,\lambda)
        p_B(b\mid M^B,c,\lambda)
        \,\dd\mu(\lambda).
        \label{eq:bell-classical-simulation}
    \end{equation}
\end{enumerate}
\end{definition}

As in \cref{def:classical-simulation}, no restriction beyond measurability
is imposed on the shared randomness space. The only bounded resource is the
classical message sent from Alice to Bob. In particular, $d_C=1$ corresponds
to a Bell-local model for the state.

%% file: 3_appendix_d4proof.tex
\subsection{Statement of the result}
\label{sec:d4-problem-statement}

Recall that $C_{\mathrm{cl}}(d_Q)$ denotes the minimum classical message dimension required for an exact simulation of $d_Q$-dimensional quantum communication,
\begin{equation}
C_{\mathrm{cl}}(d_Q) = \inf\left\{d_C\in\N: \text{an exact classical simulation with message dimension }d_C \text{ exists} \right\},
\end{equation}
with $C_{\mathrm{cl}}(d_Q)=\infty$ if no finite-dimensional classical simulation exists.
Here we establish that no finite classical message dimension is sufficient starting from $d_Q=4$.

\begin{restatable}[No finite classical simulation for $d_Q\geq4$]{theorem}{mainthm}
\label{thm:d4-infinite}
The exact classical communication cost satisfies
\begin{equation}
    C_{\mathrm{cl}}(4)=\infty.    
\end{equation}
Consequently,
\begin{equation}
    C_{\mathrm{cl}}(d_Q)=\infty
    \qquad
    \text{for every } d_Q\geq4.    
\end{equation}
\end{restatable}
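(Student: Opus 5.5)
We argue by contradiction. Suppose some finite $d_C$ admits an exact simulation in the sense of \cref{def:classical-simulation} for $d_Q=4$. We restrict it to pure states $x\in\CP^3$ and binary rank-one projective measurements $M_y=\{\dyad{y},\eye_4-\dyad{y}\}$. This restriction yields measurable functions $A_{\lambda,c}(x)=p_A(c\mid x,\lambda)$ and $B_{\lambda,c}(y)=p_B(0\mid y,c,\lambda)$, both with values in $[0,1]$ and satisfying
\begin{equation}
f(x,y)=\int_\Lambda\sum_{c=1}^{d_C}A_{\lambda,c}(x)B_{\lambda,c}(y)\,\dd\mu(\lambda)
\end{equation}
for all $x,y$. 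Measurability of the restricted maps follows by composing with the continuous maps $x\mapsto\dyad{x}$ and $y\mapsto M_y$.

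We then average against the fixed-fidelity measures $\eta_t$. By Tonelli, this gives $a_C(t)=\int_\Lambda\sum_c\langle A_{\lambda,c}\vert R_tB_{\lambda,c}\rangle\,\dd\mu(\lambda)$. Exactness forces $a_C(t)=t$ for all $t\in[0,1]$. At $t=0$, nonnegativity of every branch gives $a_{\lambda,c}(0)=0$ for all $c$ and $\mu$-a.e.\ $\lambda$.

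The analytic core has three parts.
\begin{enumerate}
\item Establish the multiplier property $R_tB_\ell=\phi_\ell(t)B_\ell$ on the Laplace--Beltrami eigenspaces $\mathcal H_\ell$ of $L^2(\CP^3)$. This follows from $U(4)$-invariance and Schur's lemma, since each $\mathcal H_\ell$ is irreducible and the zonal function is a Jacobi polynomial $P^{(2,0)}_\ell(2t-1)$ suitably normalized. From this one computes $\phi_\ell(0)=2(-1)^\ell/((\ell+1)(\ell+2))$ and $\phi_\ell'(0)=-\ell(\ell+3)\phi_\ell(0)$.
\item Show that $R_0B\in H^2$ with $\Delta R_0B=4\sum_\ell\phi_\ell'(0)B_\ell\in L^2$. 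This holds because $\ell(\ell+3)|\phi_\ell(0)|$ is bounded. Then prove that $(R_t-R_0)B/t\to\tfrac14\Delta R_0B$ in the weak (or $L^2$) sense as $t\to0^+$. For the exchange of limit and sum, I would use a uniform bound $|\phi_\ell(t)-\phi_\ell(0)|/t\le C$ for all $\ell$ and small $t$, together with dominated convergence on $\sum_\ell|\cdot|^2\|B_\ell\|^2$. The bound should come from a Jacobi-polynomial derivative estimate, or from the mean-value theorem plus a uniform bound on $\phi_\ell'(s)$ near $0$. This yields $a_{\lambda,c}'(0)=\tfrac14\langle A_{\lambda,c}\vert\Delta R_0B_{\lambda,c}\rangle$.
\item Show this vanishes. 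From $a_{\lambda,c}(0)=0$ and $A,R_0B\ge0$, $A_{\lambda,c}$ is supported a.e.\ on $Z=\{R_0B_{\lambda,c}=0\}$. Since $g=R_0B\in H^2$ and $g\ge0$, I would invoke the standard Sobolev fact that for $g\in W^{2,2}_{\mathrm{loc}}$ one has $\nabla g=0$ a.e.\ on $\{g=0\}$, and then apply it to $\partial_ig\in W^{1,2}$ to get $D^2g=0$ a.e.\ on $\{g=0\}$. This is Stampacchia's theorem applied twice, in local coordinates. Hence $\Delta g=0$ a.e.\ on $Z$ and each $a'_{\lambda,c}(0)=0$.
\end{enumerate}

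Finally, we must pass the derivative through the $\lambda$-integral and finite $c$-sum to get $a_C'(0)=0$, contradicting $a_C'(0)=1$. I would bound the difference quotients uniformly: $|\langle A\vert (R_t-R_0)B/t\rangle|\le\|A\|_2\|(R_t-R_0)B/t\|_2\le C$, since $\|A\|,\|B\|\le1$ and the multipliers are uniformly bounded by step 2. Dominated convergence over $\lambda$ then gives $a_C'(0)=\int\sum_c a'_{\lambda,c}(0)\,\dd\mu=0$. This shows $C_{\mathrm{cl}}(4)=\infty$. The statement for $d_Q\ge4$ follows from \cref{rem:bigger_classical_and_quantum_dimensions}.

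The main obstacle is the uniform multiplier estimate in step 2: $\sup_\ell\sup_{0<t<t_0}|\phi_\ell(t)-\phi_\ell(0)|/t<\infty$. Pointwise, $\phi_\ell'(0)\sim\ell^2\cdot\ell^{-2}$ is fine. Uniformity in $t$ is delicate, though, because $\phi_\ell'(s)$ for $s\sim\ell^{-2}$ involves oscillation at the scale where Jacobi asymptotics change regime. I would first try Bernstein/Markov-type inequalities for $P^{(2,0)}_\ell$ near the endpoint, or write $\phi_\ell(t)$ via a Mehler–Dirichlet-type integral and differentiate under it. If the sup bound fails, the fallback is to settle for weak convergence tested against the fixed $A$. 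Step 3 also needs care, since the Stampacchia argument must be done in charts on $\CP^3$.
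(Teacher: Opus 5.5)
Your plan follows the paper's proof essentially step for step. The shared steps are: restriction to pure states and rank-one binary projectors, averaging against $\eta_t$, and branchwise vanishing at $t=0$ by positivity. Then the Schur/addition-formula multiplier $\phi_\ell(t)=P^{(2,0)}_\ell(2t-1)/P^{(2,0)}_\ell(1)$ with $\phi_\ell'(0)=-\ell(\ell+3)\phi_\ell(0)$, $H^2$ regularity of $R_0B$, first- and second-order locality of weak derivatives on $\{R_0B=0\}$, and dominated convergence in $\lambda$.

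The one piece you leave open is the frequency-uniform bound $\sup_\ell\sup_{0<t\le t_0}|d_\ell(t)|<\infty$, where $d_\ell(t):=(\phi_\ell(t)-\phi_\ell(0))/t$. As written this is a genuine gap, and the argument needs it twice. First, it is what upgrades the modewise identity to $(R_t-R_0)B/t\to\frac14\Delta R_0B$ when $B$ has infinitely many modes. Second, it is what gives the $\lambda$-uniform bound $0\le a_{\lambda,c}(t)/t\le C$ that your last step uses.

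Your weak-convergence fallback does not remove this need. Testing against a fixed $A$ still means exchanging $\lim_{t\to0^+}$ with $\sum_\ell d_\ell(t)\langle A_\ell|B_\ell\rangle$, and the natural majorant again requires $\sup_\ell|d_\ell(t)|<\infty$. Moreover, suppose you only knew $a_{\lambda,c}(t)/t\to0$ for a.e.\ $\lambda$, with no uniform bound. Fatou applied to $1=\sum_c\int_\Lambda a_{\lambda,c}(t)/t\,\dd\mu$ would then give only $0\le1$, which is not a contradiction.

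The tools you list first would not deliver the bound either. Markov's inequality gives only $|\phi_\ell'|\le2\ell^2$. Bernstein's inequality gives $\ell/\sqrt{t(1-t)}$, which is of order $\ell^2$ at $t\sim\ell^{-2}$. Neither sees that $\phi_\ell$ is of size $\ell^{-2}$ near $t=0$.

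The missing idea, which the paper uses, is an elementary Jacobi identity. Put $\psi_\ell(t):=\phi_\ell(t)-\frac{1-t}{2}\phi_\ell'(t)$. The identity $\alpha P^{(\alpha,\beta)}_\ell(s)-(1-s)\frac{\dd}{\dd s}P^{(\alpha,\beta)}_\ell(s)=(\ell+\alpha)P^{(\alpha-1,\beta+1)}_\ell(s)$ gives $\psi_\ell(t)=P^{(1,1)}_\ell(2t-1)/P^{(1,1)}_\ell(1)$. Equivalently, check that $\psi_\ell$ solves the Jacobi equation with parameters $(1,1)$ and that $\psi_\ell(1)=1$.

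Both $(2,0)$ and $(1,1)$ satisfy the hypotheses of $|P^{(\alpha,\beta)}_\ell(s)|\le P^{(\alpha,\beta)}_\ell(1)$ on $[-1,1]$, so $|\phi_\ell|,|\psi_\ell|\le1$. Hence $|\phi_\ell'(t)|=\frac{2}{1-t}|\phi_\ell(t)-\psi_\ell(t)|\le\frac{4}{1-t}\le8$ for $0\le t\le\frac12$, uniformly in $\ell$. The mean-value theorem then gives $|d_\ell(t)|\le8$. This yields $\|(R_t-R_0)B/t\|_2\le8\|B\|_2$ and $0\le a_{\lambda,c}(t)/t\le8$, which closes both uses above.

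Your worry about oscillation at $t\sim\ell^{-2}$ is unfounded. Heuristically, Mehler--Heine gives $\phi_\ell(t)\approx(-1)^\ell\frac{2}{\ell^2}J_0(2\ell\sqrt t)$. Its $t$-derivative is $\approx-4(-1)^\ell J_1(z)/z$ with $z=2\ell\sqrt t$, which is bounded by $2$ in modulus. A Mehler--Dirichlet or weighted Szeg\H{o}-type estimate would also work, but the identity above makes it a two-line argument.
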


\subsection{Reduction to pure states and binary measurements}
\label{sec:ReductionPureStatesAndMeas_ququart}

To prove \cref{thm:d4-infinite}, it is enough to rule out a finite classical simulation for a suitable restricted family of ququart preparations and measurements. Indeed, any protocol that exactly simulates all ququart states and POVMs must, in particular, reproduce the statistics of every restricted family of such preparations and measurements.

We therefore restrict Alice's preparations to pure ququart states and Bob's measurements to binary projective measurements with a rank-one effect. We first fix a convenient representation of these objects.

Pure ququart states are commonly represented by unit vectors $\ket{x}\in\C^4$. However, vectors that differ only by a global phase represent the same physical state. The natural space parametrizing pure ququart states is therefore the complex projective space $\CP^3$, the space of one-dimensional complex subspaces, or rays, of $\C^4$. In particular, if $\ket{x}$ and $\ket{x'}$ are unit-vector representatives of the same ray, then $\ket{x'}=e^{i\theta}\ket{x}$ for some $\theta\in\R$.

There is an equivalent representation of pure states by rank-one orthogonal projectors. For each ray $x\in\CP^3$, define 
\begin{equation}
    P_x:=\ketbra{x},
\end{equation}
where $\ket{x}$ is any unit-vector representative of $x$. This is independent of the chosen representative. Hence, $\CP^3$ can be identified with the set of rank-one orthogonal projectors on $\C^4$, and we will use $x\in\CP^3$ and its associated projector $P_x$ interchangeably when no confusion can arise.

Likewise, every $y\in\CP^3$ determines the binary projective measurement
\begin{equation}
    M_y:=\{P_y,\eye_4-P_y\}.
\end{equation}
We label by $b=0$ the outcome associated with the rank-one effect $P_y$, and by $b=1$ the outcome associated with $(\eye_4-P_y)$.
Thus, throughout the remainder of the proof, Alice's preparations and Bob's measurements will be parametrized by rays $x,y\in\CP^3$, respectively. In this restricted scenario, we write
\begin{equation}
    p_Q(b\mid x,y):=p_Q(b\mid P_x,M_y).
\end{equation}
The corresponding quantum behavior is
\begin{equation}\label{eq:q_behaviour_b0}
    p_Q(0\mid x,y) = \tr(P_xP_y) = |\langle x,y\rangle|^2,
\end{equation}
and
\begin{equation}\label{eq:q_behaviour_b1}
    p_Q(1\mid x,y) =  \tr(P_x(\eye_4-P_y)) = 1-|\langle x,y\rangle|^2.    
\end{equation}
Hence the entire behavior is determined by the overlap between the two rays. The proof will exploit this dependence by averaging over pairs $(x,y)$ with a fixed value of this overlap.

\subsection{Fixed-fidelity averages}
\subsubsection{Fixed-fidelity pairs}
\label{sec:fixed-fidelity-pairs}
The quantity appearing in \cref{eq:q_behaviour_b0,eq:q_behaviour_b1} is the fidelity between the rank-one projectors associated with the rays $x$ and $y$. We denote it by
\begin{equation}
    f(x,y) := \tr(P_xP_y)
    = \xy^2.    
\end{equation}
Although $y$ parametrizes Bob's measurement in the restricted scenario, the effect $P_y$ is itself a rank-one projector, so the same quantity is naturally the fidelity between $P_x$ and $P_y$. In terms of $f$,
\begin{equation}
    p_Q(0\mid x,y)=f(x,y),
    \qquad
    p_Q(1\mid x,y)=1-f(x,y).    
\end{equation}
The fidelity is well defined on rays and is therefore independent of the chosen unit-vector representatives.

For each $0\leq t<1$, we define a probability measure $\eta_t$ on pairs of rays with fixed fidelity $f(x,y) = t$. Fix two orthonormal vectors $\ket{0}, \ket{1}\in\C^4$ and sample a Haar-random unitary $W\in U(4)$. Define
\begin{equation}
    \mathsf X(W):=[W\ket{0}],
\qquad
\mathsf Y_t(W)
:=
\left[
W\left(\sqrt{t}\ket{0}+\sqrt{1-t}\ket{1}\right)
\right],    
\end{equation}
where $[\ket{v}]$ denotes the ray spanned by the nonzero vector $\ket{v}$. We denote the distribution of the pair $(\mathsf X,\mathsf Y_t)$ by $\eta_t$. More precisely, for every measurable set $A\subseteq\CP^3\times\CP^3$,
\begin{equation}
    \eta_t(A)
=
\mu_{\mathrm{Haar}}
\left(
\left\{
W\in U(4):
\bigl(\mathsf X(W),\mathsf Y_t(W)\bigr)\in A
\right\}
\right).    
\end{equation}
Since unitary transformations preserve inner products,
\begin{equation}
    f\bigl(\mathsf X(W),\mathsf Y_t(W)\bigr)
=
t
\qquad
\text{for every }W\in U(4).    
\end{equation}
Thus, $\eta_t$ is supported on pairs of rays whose fidelity is exactly $t$.

We denote by $\sigma$ the unitarily invariant probability measure on $\CP^3$. Equivalently, a pair distributed according to $\eta_t$ may be obtained by first sampling $\mathsf X=x$ according to $\sigma$ and then sampling $\mathsf Y_t$ uniformly among all rays $y$ satisfying
\begin{equation}
    f(x,y)=t.    
\end{equation}
For $t=0$, the second ray is therefore sampled uniformly among the rays orthogonal to $x$, namely from the projective space of $x^\perp$.

Both marginals of $\eta_t$ are equal to $\sigma$. Hence $\mathsf X$ and $\mathsf Y_t$, considered separately, are uniformly distributed on $\CP^3$.

\subsubsection{Fixed-fidelity averaging}
\label{sec:fixed-fidelity-averaging}
Before returning to the classical simulation problem, we introduce a convenient notation for averaging arbitrary functions over pairs of rays with fixed fidelity. This allows us to isolate the main technical ingredient of the proof and state it in a simple and general form.

Let $A,B:\CP^3\longrightarrow[0,1]$ be measurable functions. For each $0\leq t<1$, define their \emph{fixed-fidelity average} by
\begin{equation}
a_{A,B}(t)
:=
\int_{\CP^3\times\CP^3}
A(x)B(y)
\dd\eta_t(x,y).
\label{eq:fixed-fidelity-functional}
\end{equation}
Thus, $a_{A,B}(t)$ is the average value of $A(x)B(y)$ over pairs $(x,y)$ whose fidelity is exactly $t$.

The relevance of this construction is captured by the following lemma. It states that if a fixed-fidelity average vanishes at $t=0$, then it cannot initially grow linearly with the fidelity.

\begin{restatable}{lemma}{zeroslope}
\label{lem:zero-slope}
Let $A,B:\CP^3\to[0,1]$ be measurable functions, and let $a_{A,B}(t)$ be defined by \cref{eq:fixed-fidelity-functional}. If $a_{A,B}(0)=0,$ then
\begin{equation}
\lim_{t\to0^+}\frac{a_{A,B}(t)}{t}
= 0 
\label{eq:zero-slope}
\end{equation}
and, for $0<t\leq\frac12$,
\begin{equation}
0
\leq
\frac{a_{A,B}(t)}{t}
\leq
8.
\label{eq:uniform-slope}
\end{equation}
\end{restatable}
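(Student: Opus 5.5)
The plan is to pass to the spectral picture of the Methods outline and reduce both claims to estimates on the scalar multipliers $\phi_\ell(t)$. I will write $a_{A,B}(t)=\langle A\vert R_tB\rangle$, where $(R_tB)(x)=\int B(y)\,\eta_t(\dd y\mid x)$; this is Fubini applied to the disintegration of $\eta_t$ over its first marginal $\sigma$. By unitary invariance, $R_t$ commutes with the $U(4)$ action, so it acts on each irreducible eigenspace $\mathcal H_\ell$ of the Laplacian as a scalar $\phi_\ell(t)$.

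This scalar is the value of the zonal spherical function at fidelity $t$. Concretely, I will establish
\[
\phi_\ell(t)=\frac{P^{(2,0)}_\ell(2t-1)}{P^{(2,0)}_\ell(1)} ,
\]
with Jacobi parameters $(q-1,0)$ for $\CP^q$, $q=3$. The route is the Koornwinder description of $\mathcal H_\ell$, evaluating $R_t$ on the zonal function centred at $x$. Since $P^{(2,0)}_\ell(1)=\binom{\ell+2}{2}$ and $P^{(2,0)}_\ell(-1)=(-1)^\ell$, this recovers \cref{eq:coefficient-square-decay-main}. Differentiating, using the Jacobi ODE at $t=0$, gives \cref{eq:coefficient-derivative-main}.

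For the uniform bound \cref{eq:uniform-slope}, positivity is immediate since $A,B\ge0$. For the upper bound I will use $a_{A,B}(0)=0$ in the following form. Because $A\ge0$ and $R_0B\ge0$, the vanishing of $\langle A\vert R_0B\rangle$ forces $A\,R_0B=0$ almost everywhere. Hence
\[
a_{A,B}(t)=\langle A\vert (R_t-R_0)B\rangle\le \|A\|_2\,\|(R_t-R_0)B\|_2\le \sup_\ell|\phi_\ell(t)-\phi_\ell(0)|,
\]
where the last step uses $\|A\|_2,\|B\|_2\le1$ and orthogonality of the modes. By the mean value theorem it then suffices to show $\sup_{\ell}\sup_{0\le s\le 1/2}|\phi_\ell'(s)|\le 8$. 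Writing
\[
\phi_\ell'(s)=\frac{2(\ell+3)\,P^{(3,1)}_{\ell-1}(2s-1)}{\binom{\ell+2}{2}},
\]
this needs $|P^{(3,1)}_{\ell-1}(u)|\lesssim \ell$ on $u\in[-1,0]$. This bound is attained near $u=-1$, where the value is $\pm\ell$, and it is consistent with Szeg\H{o}-type decay in the interior. I expect to derive it from the standard fact that, for $\beta\ge\alpha$-type parameter ordering, $\max_{[-1,0]}|P^{(\alpha,\beta)}_n|$ is attained at $-1$ whenever $\beta\ge-1/2$. After swapping $u\to-u$ this is a classical Szeg\H{o} result. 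Tracking constants is then routine and should give $8$.

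For the limit \cref{eq:zero-slope}, I will follow the Methods outline. The first step is to show that $(R_t-R_0)B/t\to\frac14\Delta R_0B$ in $L^2$. Modewise this is \cref{eq:coefficient-derivative-main}; the passage to the infinite sum uses dominated convergence over $\ell$. Here the uniform derivative bound just proved supplies the domination, via $|\phi_\ell(t)-\phi_\ell(0)|/t\le 8$, and $\sum_\ell\|B_\ell\|^2<\infty$. This gives
\[
\lim_{t\to0^+} a_{A,B}(t)/t=\tfrac14\langle A\vert \Delta R_0B\rangle .
\]
The remaining and main obstacle is to show that this pairing vanishes. Set $g=R_0B\ge0$. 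Then $\phi_\ell(0)\sim\ell^{-2}$ together with $\ell(\ell+3)$ growth gives $g$ in the Sobolev space $H^2(\CP^3)$. Since $A$ is supported in $\{g=0\}$, it suffices to prove that $\Delta g=0$ a.e. on $\{g=0\}$. I would first try the classical Stampacchia-type fact for $W^{2,p}$ functions: $\nabla g=0$ a.e. on $\{g=0\}$, and then, applying the same fact to the components of $\nabla g\in H^1$, the second derivatives also vanish a.e. on that set. Doing this in local coordinates on the manifold then gives $\Delta g=0$ a.e. on the zero set, hence $\langle A\vert\Delta g\rangle=0$. The delicate point is precisely that $H^2$ regularity holds, which is why the exponent $q-1=2$ of $\CP^3$ matters.
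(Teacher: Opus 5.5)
Your architecture matches the paper's: the inner-product form $a_{A,B}(t)=\langle A\vert R_tB\rangle$, the Jacobi multipliers, dominated convergence over modes giving $(R_t-R_0)B/t\to\frac14\Delta R_0B$ in $L^2$, $H^2$ regularity of $R_0B$, and locality of weak derivatives on $\{R_0B=0\}$.

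\textbf{The gap.} The one quantitative input on which both conclusions rest is not justified. You need $\sup_\ell\sup_{0\le s\le 1/2}|\phi_\ell'(s)|\le 8$. This is the constant in \cref{eq:uniform-slope}, and it is also your dominating function when you exchange $t\to0^+$ with the sum over modes.

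You reduce this to $|P^{(3,1)}_{\ell-1}(u)|\lesssim\ell$ on $[-1,0]$ and call it a classical Szeg\H{o} result. But the classical bound \cref{eq:B70} places the maximum over all of $[-1,1]$ at the endpoint carrying the larger parameter.
\begin{itemize}
\item For $(3,1)$ that endpoint is $u=+1$, where $P^{(3,1)}_{\ell-1}(1)=\binom{\ell+2}{3}\sim\ell^3/6$.
\item After $u\to-u$, the polynomial $P^{(1,3)}_{\ell-1}$ has its global maximum at $-1$, which is again the wrong end.
\item Your hypothesis ``$\beta\ge\alpha$'' fails for $(\alpha,\beta)=(3,1)$.
\item Without an ordering hypothesis, the half-interval claim is false in general: $P^{(100,0)}_1(u)=51u+50$ equals $-1$ at $u=-1$ but $50$ at $u=0$.
\end{itemize}

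For $(3,1)$ the estimate may well be true, but proving it takes real work. Sonine's monotonicity of relative maxima only gives $|P^{(3,1)}_n(u)|\le n+1$ on $[-1,-\frac25]$, where $-\frac25=(\beta-\alpha)/(\alpha+\beta+1)$ is the turning point. The range $[-\frac25,0]$ would need a separate interior estimate with explicit constants. As written, neither \cref{eq:zero-slope} nor \cref{eq:uniform-slope} is established.

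A smaller slip: $\phi_\ell'(s)=(\ell+3)P^{(3,1)}_{\ell-1}(2s-1)/\binom{\ell+2}{2}$. Your extra factor $2$ contradicts \cref{eq:coefficient-derivative-main} at $s=0$.

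\textbf{How the paper closes this step.} It introduces the auxiliary polynomial $\psi_\ell(t):=\phi_\ell(t)-\frac{1-t}{2}\phi_\ell'(t)$. This satisfies the Jacobi equation with parameters $(1,1)$, and $\psi_\ell(1)=1$, so $\psi_\ell(t)=P^{(1,1)}_\ell(2t-1)/P^{(1,1)}_\ell(1)$.

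Both $(2,0)$ and $(1,1)$ do satisfy $\alpha\ge\beta$, so \cref{eq:B70} legitimately gives $|\phi_\ell|\le1$ and $|\psi_\ell|\le1$ on $[0,1]$. Hence, for $0\le t\le\frac12$ and uniformly in $\ell$,
$|\phi_\ell'(t)|=\frac{2}{1-t}|\phi_\ell(t)-\psi_\ell(t)|\le\frac{4}{1-t}\le 8$.

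Translated back, this gives $|P^{(3,1)}_{\ell-1}|\le 4(\ell+1)$ on $[-1,0]$. That is weaker than the endpoint value you hoped for, but it suffices. With this bound in place, the rest of your argument goes through as in the paper.
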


The proof of \cref{lem:zero-slope} is postponed to \cref{sec:zero-slope-proof}. We now apply this result to the classical simulation problem.

\subsection{Proof of the main theorem}
\label{sec:d4-proof-of-main-theorem}

\mainthm*

\begin{proof}
Assume, toward a contradiction, that ququart communication admits an exact classical simulation with finite classical message dimension $d_C$ (see \cref{def:classical-simulation}). In particular, the same model must exactly reproduce the statistics of the restricted scenario consisting of pure preparations $P_x$ and binary measurements $M_y$, with $x,y\in\CP^3$. Therefore, for every $x,y\in\CP^3$, and $b\in\{0,1\}$,
\begin{equation}
p_Q(b\mid x,y)
=\int_\Lambda
\sum_{c=1}^{d_C}
p_A(c\mid x,\lambda)
p_B(b\mid y,c,\lambda)
\,\dd\mu(\lambda).
\label{eq:restricted-classical-simulation}
\end{equation}
We will focus on the outcome $b=0$. For each $\lambda\in\Lambda$ and $c\in[d_C]$, define
\begin{align}
A_{\lambda,c}(x)
&:=
p_A(c\mid x,\lambda),
\label{eq:branch-encoder}
\\
B_{\lambda,c}(y)
&:=
p_B(0\mid y,c,\lambda).
\label{eq:branch-decoder}
\end{align}
For fixed $(\lambda,c)$, the functions $A_{\lambda,c}$ and $B_{\lambda,c}$ are measurable and take values in $[0,1]$, by \cref{def:classical-simulation}. 

Taking $b=0$ in \cref{eq:restricted-classical-simulation} and using $p_Q(0\mid x,y)=f(x,y)$ gives, for every $x,y\in\CP^3$,
\begin{equation}
f(x,y)
=
\sum_{c=1}^{d_C}
\int_\Lambda
A_{\lambda,c}(x)
B_{\lambda,c}(y)
\,\dd\mu(\lambda).
\label{eq:factor-average}
\end{equation}
We now average this identity over pairs $(x,y)$ with fixed fidelity $t$, using the probability measure $\eta_t$. By construction, $\eta_t$ is supported on pairs satisfying $f(x,y)=t$. Therefore,
\begin{equation}
\int_{\CP^3\times\CP^3}
f(x,y)
\,\dd\eta_t(x,y)
=
\int_{\CP^3\times\CP^3}
t
\,\dd\eta_t(x,y)
=
t,
\label{eq:quantum-fixed-fidelity-average}
\end{equation}
where the last equality follows because $\eta_t$ is a probability measure.

On the classical side, the integrand is measurable and nonnegative. Fubini-Tonelli's theorem therefore allows us to exchange the order of integration in the average of \cref{eq:factor-average}, yielding
\begin{equation}
t
=
\sum_{c=1}^{d_C}
\int_\Lambda
\left[
\int_{\CP^3\times\CP^3}
A_{\lambda,c}(x)
B_{\lambda,c}(y)
\,\dd\eta_t(x,y)
\right]
\,\dd\mu(\lambda).
\label{eq:averaged-factorization}
\end{equation}
The inner integral is precisely the fixed-fidelity average introduced in \cref{eq:fixed-fidelity-functional}. Accordingly, for each $(\lambda,c)$, we write
\begin{equation}
a_{\lambda,c}(t)
:=
a_{A_{\lambda,c},B_{\lambda,c}}(t)
=
\int_{\CP^3\times\CP^3}
A_{\lambda,c}(x)
B_{\lambda,c}(y)
\,\dd\eta_t(x,y).
\label{eq:branch-curve}
\end{equation}
Therefore, \cref{eq:averaged-factorization} takes the form
\begin{equation}
t
=
\sum_{c=1}^{d_C}
\int_\Lambda
a_{\lambda,c}(t)
\,\dd\mu(\lambda).
\label{eq:averaged-curve}
\end{equation}

Evaluating \cref{eq:averaged-curve} at $t=0$ gives
\begin{equation}
0
=
\sum_{c=1}^{d_C}
\int_\Lambda
a_{\lambda,c}(0)
\,\dd\mu(\lambda).
\label{eq:average-at-zero}
\end{equation}
Since $a_{\lambda,c}(0)\geq0$ for every $\lambda$ and $c$, each integral in the sum is nonnegative. Hence, for every $c\in[d_C]$, $a_{\lambda,c}(0)=0$
for $\mu$-almost every $\lambda$.

For each $c\in[d_C]$, define
\begin{equation}
\Gamma_c
:=
\left\{
\lambda\in\Lambda:
a_{\lambda,c}(0)>0
\right\}.
\label{eq:almost-every-branch-excludes}
\end{equation}
Then $\mu(\Gamma_c)=0$. Since there are only finitely many messages
\begin{equation}
\Gamma
:=
\bigcup_{c\in[d_C]}
\Gamma_c
\end{equation}
is also $\mu$-null. Therefore, $a_{\lambda,c}(0)=0,$
or every $\lambda\in\Lambda\setminus\Gamma$ and every $c\in[d_C]$. Hence, applying \cref{lem:zero-slope} to $a_{\lambda,c}(t)$ gives
\begin{equation}
    \frac{a_{\lambda,c}(t)}{t}
\longrightarrow
0
\qquad
\text{as }t\to0^+,    
\end{equation}
and, for $0<t\leq\frac12$,
\begin{equation}
    0
\leq
\frac{a_{\lambda,c}(t)}{t}
\leq
8.    
\end{equation}

We now return to the averaged identity \cref{eq:averaged-curve}, dividing it by $t$ gives
\begin{align}
1
&=
\sum_{c=1}^{d_C}
\int_\Lambda
\frac{a_{\lambda,c}(t)}{t}
\,\dd\mu(\lambda)
=
\sum_{c=1}^{d_C}
\int_{\Lambda\setminus\Gamma}
\frac{a_{\lambda,c}(t)}{t}
\,\dd\mu(\lambda),
\label{eq:final-average}
\end{align}
where the second equality follows from $\mu(\Gamma)=0$.

Since $\mu$ is a probability measure, the constant function $8$ is integrable. The dominated convergence theorem therefore gives, for each $c\in[d_C]$,
\begin{equation}
\lim_{t\to0^+}
\int_{\Lambda\setminus\Gamma}
\frac{a_{\lambda,c}(t)}{t}
,\dd\mu(\lambda)
=
0.
\end{equation}
Because the number of messages is finite, we may pass the limit through the sum over $c$, obtaining
\begin{equation}
\lim_{t\to0^+}
\sum_{c=1}^{d_C}
\int_{\Lambda\setminus\Gamma}
\frac{a_{\lambda,c}(t)}{t}
,\dd\mu(\lambda)
=
0.
\end{equation}
This contradicts \cref{eq:final-average}, since the expression inside the limit is equal to $1$ for every $t>0$.

Therefore,
\begin{equation}
    C_{\mathrm{cl}}(4)=\infty.    
\end{equation}
The conclusion for every $d_Q\geq4$ then follows directly from \cref{rem:bigger_classical_and_quantum_dimensions}.
\end{proof}

The reason why dimension $d = 4$ appears lies in the averaging mechanism behind \cref{lem:zero-slope}.
Starting at $d=4$, averaging over all tests orthogonal to a fixed state smooths the classical response sufficiently to force the zero slope behavior.
In smaller dimensions, this averaging is not strong enough (see \cref{sec:dimension-threshold} for details).
Therefore, this argument does not contradict the fact that the simulation cost for $d = 2$ is finite \cite{TonerBacon,RennerTavakoliQuintino},
and it is not sufficient to determine whether the classical communication cost for $d=3$ is finite.

\subsection{Proof of the zero slope lemma}
\label{sec:zero-slope-proof}

We now turn to the proof of \cref{lem:zero-slope}. The main point is to
reformulate the correlation function appearing in the statement of the
lemma in terms of a family of averaging operators on complex projective
space. This formulation makes the $U(4)$-symmetry of the problem explicit
and allows us to exploit the spectral decomposition of the
Laplace--Beltrami operator.

Throughout this section, let
\begin{equation}
  X:=\CP^3
\end{equation}
denote complex projective three-space equipped with the Fubini--Study
Riemannian metric. If $[x]\in X$ and $x\in\mathbb C^4$ is a unit
representative, we identify the tangent space $T_{[x]}X$ with the
orthogonal complement
\begin{equation}
  x^\perp
  :=
  \{v\in\mathbb C^4:\langle x|v\rangle=0\}.
\end{equation}
With our normalization, the Fubini--Study metric is given by
\begin{equation}
  g_{[x]}(v,w):=g_{\mathrm{FS},[x]}(v,w)
  :=
  \operatorname{Re}\langle v|w\rangle,
  \qquad
  v,w\in x^\perp.
\end{equation}

We denote by $\sigma$ the normalized
$U(4)$-invariant Riemannian measure on $X$, so that
\begin{equation}
  \sigma(X)=1.
\end{equation}
The nonpositive Laplace--Beltrami operator associated with this metric will
be denoted by
\begin{equation}
  \Delta:=\Delta_{\CP^3}
\end{equation}

We work in the Hilbert space $L^2(X,\sigma)$, equipped with the inner
product
\begin{equation}
  \ip{F}{H}
  :=
  \int_X F(x)\overline{H(x)}\,\dd\sigma(x),
\end{equation}
and the corresponding norm
\begin{equation}
  \norm{F}_2:=\ip{F}{F}^{1/2}.
\end{equation}

For each $0\le t<1$, let $\eta_t$ be the joint probability measure on
$X\times X$ introduced above, and let
$\eta_t(\dd y\mid x)$ denote its conditional distribution in the second
variable given the first. Associated with $\eta_t$ is the averaging
operator
\begin{equation}
  (R_tB)(x)
  :=
  \int_X B(y)\,\eta_t(\dd y\mid x),
  \qquad
  B\in L^2(X).
  \label{eq:Rt-definition}
\end{equation}
Equivalently,
\begin{equation}
  (R_tB)(x)
  =
  \mathbb E\!\left[B(\mathsf{Y}_t)\mid \mathsf{X}=x\right].
\end{equation}
Thus $R_tB$ is obtained by averaging $B$ over all points $y$ having the
prescribed fidelity parameter $t$ relative to $x$. In particular, $R_t$
is a Markov averaging operator.

For measurable functions $A,B:X\to[0,1]$, define
\begin{equation}
  a_{A,B}(t)
  :=
  \int_{X\times X}
  A(x)B(y)\, \dd\eta_t(x, y),
  \qquad 0\le t<1.
  \label{eq:aAB-definition}
\end{equation}
Since $A$ and $B$ will remain fixed throughout the argument, we write
simply
\begin{equation}
  a(t):=a_{A,B}(t).
\end{equation}
Using the disintegration of $\eta_t$ with respect to its first marginal,
which is $\sigma$, Tonelli's theorem gives
\begin{align}
  a(t)
  &=
  \int_X
  A(x)
  \left(
    \int_X B(y)\,\eta_t(\dd y\mid x)
  \right)
  \dd\sigma(x)=
  \ip{A}{R_tB}.
  \label{eq:B38}
\end{align}
Hence the scalar quantity $a(t)$ is completely encoded by the action of
the operator $R_t$ on $B$.

We first record the basic $L^2$ bound for $R_t$. By Jensen's inequality,
for $\sigma$-almost every $x\in X$,
\begin{equation}
  \abs{R_tB(x)}^2
  \le
  \int_X
  \abs{B(y)}^2\,\eta_t(\dd y\mid x).
\end{equation}
Integrating with respect to $\sigma$ and using the fact that the second
marginal of $\eta_t$ is again $\sigma$, we obtain
\begin{align}
  \norm{R_tB}_2^2
  &\le
  \int_X
  \int_X
  \abs{B(y)}^2
  \,\eta_t(\dd y\mid x)\,
  \dd\sigma(x)=
  \int_X
  \abs{B(y)}^2\,\dd\sigma(y)
  =
  \norm{B}_2^2.
  \label{eq:Rt-contraction-squared}
\end{align}
Therefore
\begin{equation}
  \norm{R_tB}_2
  \le
  \norm{B}_2,
  \qquad
  0\le t<1,
  \label{eq:Rt-contraction}
\end{equation}
so each $R_t$ is a contraction on $L^2(X)$. The significance of this reformulation becomes apparent at the
perfect-exclusion endpoint $t=0$. Under the hypothesis
\begin{equation}
  a(0)=0,
\end{equation}
we have, by \cref{eq:B38},
\begin{equation}
  \frac{a(t)}{t}
  =
  \frac{a(t)-a(0)}{t}
  =
  \ip{A}{
    \frac{R_t-R_0}{t}B
  },
  \qquad t>0.
\end{equation}
Thus the zero-slope assertion in \cref{lem:zero-slope} is reduced to
understanding the strong $L^2$ behavior of the operator difference
quotient
\begin{equation}
  \frac{R_t-R_0}{t}
  \qquad\text{as }t\downarrow0.
\end{equation}

The remainder of the proof is devoted to this question: The
$U(4)$-symmetry of $X$ and of the measures $\eta_t$ implies that $R_t$
commutes with the natural $U(4)$-action on $L^2(X)$. We shall therefore
diagonalize $R_t$ with respect to the irreducible $U(4)$-decomposition of
$L^2(X)$. The corresponding multipliers are expressed in terms of
Jacobi polynomials. Their behavior at $t=0$ will show, first mode by mode
and then after recombination of all modes, that
\begin{equation}
  \frac{R_t-R_0}{t}B
  \longrightarrow
  \frac14\Delta R_0B
  \qquad\text{in }L^2(X).
\end{equation}
The final step will be to use the positivity assumptions
$A,B\ge0$ together with $a(0)=0$ to prove that
\begin{equation}
  \ip{A}{\Delta R_0B}=0,
\end{equation}
which forces the first-order slope of $a(t)$ at the perfect-exclusion
endpoint to vanish.

\subsubsection{Spectral diagonalization of the averaging operators}

The spectral theory of the Fubini--Study Laplacian gives the orthogonal
Hilbert decomposition
\begin{equation}
  L^2(X)
  =
  \widehat{\bigoplus}_{\ell=0}^{\infty}\cH_\ell.
  \label{eq:B48}
\end{equation}
Each $\cH_\ell$ is finite-dimensional,
\begin{equation}
  \dim\cH_\ell=N_\ell<\infty,
\end{equation}
and is an eigenspace of the Laplace--Beltrami operator:
\begin{equation}
  \Delta F
  =
  -4\ell(\ell+3)F,
  \qquad
  F\in\cH_\ell.
  \label{eq:B49}
\end{equation}
This eigenvalue normalization agrees with the metric fixed above; see
Lu~\cite{Lu1998}.

The natural representation of $U(4)$ on each $\cH_\ell$ is irreducible,
and the decomposition in \cref{eq:B48} is multiplicity free. Moreover, the
corresponding zonal kernels are expressed in terms of Jacobi polynomials. See \cite{Koornwinder1973}.

The $U(4)$-equivariance of $\eta_t$ implies that $R_t$ commutes with the
natural action of $U(4)$. Since the decomposition in \cref{eq:B48} is
multiplicity free, each space $\cH_\ell$ is preserved by $R_t$.
Schur's lemma then shows that $R_t$ acts on every angular frequency space
by multiplication by a scalar. The following proposition identifies that
scalar explicitly.

\begin{proposition}[Multiplier formula]
\label{prop:multiplier}
For every $F\in\cH_\ell$ and $0\le t<1$,
\begin{equation}
  R_tF=\phi_\ell(t)F,
  \label{eq:B50}
\end{equation}
where
\begin{equation}
  \phi_\ell(t)
  =
  \frac{P_\ell^{(2,0)}(2t-1)}
       {P_\ell^{(2,0)}(1)}.
  \label{eq:B51}
\end{equation}
Here $P_\ell^{(\alpha,\beta)}$ denotes the Jacobi polynomial in the
standard normalization.
\end{proposition}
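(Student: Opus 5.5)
We already know from the discussion preceding the statement that $R_t$ commutes with the $U(4)$-action and that each $\cH_\ell$ is irreducible and appears with multiplicity one. By Schur's lemma, $R_t$ therefore acts on $\cH_\ell$ as a scalar $\phi_\ell(t)$. What remains is to identify that scalar, and we plan to do so by evaluating $R_t$ on one convenient element of $\cH_\ell$: the zonal function.

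Fix the base point $x_0=[\ket{0}]$. Let $Z_\ell\in\cH_\ell$ be the unique element of $\cH_\ell$ invariant under the stabilizer $U(1)\times U(3)$ of $x_0$, normalized by $Z_\ell(x_0)=1$. By Koornwinder's description \cite{Koornwinder1973} of the spherical functions of the pair $(U(4),U(1)\times U(3))$, this function is
\begin{equation}
  Z_\ell(y)=\frac{P_\ell^{(2,0)}\bigl(2f(x_0,y)-1\bigr)}{P_\ell^{(2,0)}(1)},
\end{equation}
with parameters $(\alpha,\beta)=(q-1,0)=(2,0)$ for $\CP^q$, $q=3$. As a consistency check, one can verify that $Z_\ell$ has eigenvalue $-4\ell(\ell+3)$. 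To do this, write the radial part of $\Delta$ in the variable $s=f(x_0,y)$ and use the Jacobi differential equation; this also confirms that the metric normalization matches \cref{eq:B49}.

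The key computation evaluates the Schur identity $R_tZ_\ell=\phi_\ell(t)Z_\ell$ at the point $x_0$. Since $Z_\ell(x_0)=1$, this gives
\begin{equation}
  \phi_\ell(t)=(R_tZ_\ell)(x_0)=\int_X Z_\ell(y)\,\eta_t(\dd y\mid x_0).
\end{equation}
The measure $\eta_t(\cdot\mid x_0)$ is supported on the set $\{y: f(x_0,y)=t\}$, and on that set $Z_\ell$ is constant, equal to $P_\ell^{(2,0)}(2t-1)/P_\ell^{(2,0)}(1)$. This yields \cref{eq:B51} immediately.

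The main obstacle is rigor at the level of $L^2$ and conditional measures. A pointwise evaluation at $x_0$ is not meaningful for a general $L^2$ class, so we need a version of $R_t$ defined at every point. We will use the explicit disintegration: let $\eta_t(\cdot\mid x)$ be the pushforward of Haar measure on the stabilizer of a representative, applied to $\sqrt t\ket{x}+\sqrt{1-t}\ket{v}$ with $v\perp x$. This is a genuine kernel, defined for every $x$, and it is $U(4)$-equivariant: $\eta_t(U\cdot\mid Ux)=\eta_t(\cdot\mid x)$. Since $Z_\ell$ is a polynomial in $y$ and hence continuous, the function $R_tZ_\ell$ defined with this kernel is continuous in $x$. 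Two continuous functions that agree $\sigma$-almost everywhere agree everywhere, so the identity $R_tZ_\ell=\phi_\ell(t)Z_\ell$ holds at $x_0$. Schur's lemma then extends the scalar action from $Z_\ell$ to all of $\cH_\ell$, because $\cH_\ell$ is irreducible and $Z_\ell\neq0$.

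If Koornwinder's formula cannot be cited in exactly this normalization, the fallback is to derive $Z_\ell$ directly. The stabilizer-invariant polynomials in $\cH_\ell$ are polynomials of degree $\ell$ in $s=\abs{\langle x_0|y\rangle}^2$. Requiring orthogonality to lower degrees with respect to the pushforward of $\sigma$ to the variable $s$ singles out a unique family. For $\CP^3$ this pushforward has density proportional to $(1-s)^2$ on $[0,1]$, which is exactly the Jacobi weight for the parameters $(2,0)$ after the substitution $u=2s-1$.
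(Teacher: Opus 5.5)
Your proposal is correct and follows essentially the same route as the paper: Schur's lemma reduces $R_t$ to a scalar on $\cH_\ell$, and you read off the scalar by evaluating on the zonal function at its pole. That zonal function is just the normalized reproducing kernel $K_\ell(\cdot,x_0)/K_\ell(x_0,x_0)$, which the paper uses via Koornwinder's addition formula. Your extra care about pointwise evaluation (an explicit equivariant kernel and continuity of $R_tZ_\ell$) fills in a step the paper takes for granted, but it does not change the argument.
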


\begin{proof}
Fix $t\in[0,1)$. Let $U(4)$ act on functions on $X$ by
\begin{equation}
  (g\cdot F)(x):=F(g^{-1}x),
  \qquad g\in U(4).
\end{equation}
The $U(4)$-equivariance of the conditional measures
$\eta_t(\dd y\mid x)$ implies
\begin{equation}
  R_t(g\cdot F)
  =
  g\cdot(R_tF).
\end{equation}
Thus $R_t$ is an intertwining operator for the $U(4)$-representation.

Since \cref{eq:B48} is multiplicity free, $R_t$ preserves each
$\cH_\ell$. The representation on $\cH_\ell$ is irreducible, so Schur's
lemma gives
\begin{equation}
  R_tF=\phi_\ell(t)F,
  \qquad F\in\cH_\ell,
\end{equation}
for some scalar $\phi_\ell(t)$. It remains to identify this scalar.

Let $\{Y_{\ell,j}\}_{j=1}^{N_\ell}$ be an orthonormal basis of
$\cH_\ell$, and define its reproducing kernel by
\begin{equation}
  K_\ell(y,x)
  :=
  \sum_{j=1}^{N_\ell}
  Y_{\ell,j}(y)\overline{Y_{\ell,j}(x)}.
  \label{eq:B52}
\end{equation}
For each fixed $x\in X$, the function
\begin{equation}
  y\longmapsto K_\ell(y,x)
\end{equation}
belongs to $\cH_\ell$. Applying the multiplier identity to this function
and evaluating at $x$ gives
\begin{equation}
  \phi_\ell(t)K_\ell(x,x)
  =
  \int_XK_\ell(y,x)\,\eta_t(\dd y\mid x).
  \label{eq:B54a}
\end{equation}

The addition formula on complex projective space gives
\begin{equation}
  K_\ell(y,x)
  =
  c_\ell
  P_\ell^{(2,0)}
  \!\left(
    2\abs{\langle x,y\rangle}^2-1
  \right),
  \qquad
  c_\ell>0.
  \label{eq:B53}
\end{equation}
See \cite[Eq.~(3.2)]{Koornwinder1973}. In the notation of
that reference, $X=\CP^{q-1}$ with $q=4$, which yields the Jacobi
parameters
\begin{equation}
  (q-2,0)=(2,0).
\end{equation}

By construction, $\eta_t(\,\cdot\,\mid x)$ is supported on
\begin{equation}
  \left\{
    y\in X:
    \abs{\langle x,y\rangle}^2=t
  \right\}.
\end{equation}
Hence the right-hand side of \cref{eq:B54a} is constant on the support
of the conditional measure. Since $\eta_t(\,\cdot\,\mid x)$ has total
mass one,
\begin{equation}
  \int_XK_\ell(y,x)\,\eta_t(\dd y\mid x)
  =
  c_\ell P_\ell^{(2,0)}(2t-1).
  \label{eq:B54}
\end{equation}
On the other hand, setting $y=x$ in \cref{eq:B53} gives
\begin{equation}
  K_\ell(x,x)
  =
  c_\ell P_\ell^{(2,0)}(1).
  \label{eq:B55}
\end{equation}
Since $P_\ell^{(2,0)}(1)\neq0$, division yields \cref{eq:B51}.
\end{proof}

\subsubsection{The exclusion multiplier and the smoothing effect of $R_0$}

We next examine the multiplier at the perfect-exclusion endpoint $t=0$.
The endpoint values of Jacobi polynomials,
\begin{equation}
  P_\ell^{(\alpha,\beta)}(1)
  =
  \binom{\ell+\alpha}{\ell},
  \qquad
  P_\ell^{(\alpha,\beta)}(-1)
  =
  (-1)^\ell
  \binom{\ell+\beta}{\ell},
  \label{eq:B56}
\end{equation}
give
\begin{equation}
  r_\ell
  :=
  \phi_\ell(0)
  =
  \frac{2(-1)^\ell}
       {(\ell+1)(\ell+2)}.
  \label{eq:B57}
\end{equation}
In particular,
\begin{equation}
  \abs{r_\ell}
  =
  \frac{2}{(\ell+1)(\ell+2)}
  =
  O(\ell^{-2}).
\end{equation}
Thus the perfect-exclusion operator $R_0$ suppresses the $\ell$th angular
mode at a quadratic rate.

Let
\begin{equation}
  B
  =
  \sum_{\ell=0}^{\infty}B_\ell,
  \qquad
  B_\ell\in\cH_\ell,
  \qquad
  \sum_{\ell=0}^{\infty}\norm{B_\ell}_2^2
  =
  \norm{B}_2^2
  \label{eq:B58}
\end{equation}
be the orthogonal spectral decomposition of $B$. By the multiplier
formula,
\begin{equation}
  R_0B
  =
  \sum_{\ell=0}^{\infty}r_\ell B_\ell
  \qquad\text{in }L^2(X).
  \label{eq:B59}
\end{equation}

The quadratic decay of $r_\ell$ precisely compensates for the quadratic
growth of the Laplace--Beltrami eigenvalues. Indeed,
\begin{equation}
  4\ell(\ell+3)\abs{r_\ell}
  =
  \frac{8\ell(\ell+3)}
       {(\ell+1)(\ell+2)}
  <8.
  \label{eq:B60}
\end{equation}
Therefore the spectral multipliers
\begin{equation}
  -4\ell(\ell+3)r_\ell
\end{equation}
are uniformly bounded, and the series
\begin{equation}
  \Delta R_0B
  :=
  \sum_{\ell=0}^{\infty}
  -4\ell(\ell+3)r_\ell B_\ell
  \label{eq:B61}
\end{equation}
converges in $L^2(X)$. By Parseval's identity,
\begin{align}
  \norm{\Delta R_0B}_2^2
  &=
  \sum_{\ell=0}^{\infty}
  \bigl(
    4\ell(\ell+3)r_\ell
  \bigr)^2
  \norm{B_\ell}_2^2
\le
  64
  \sum_{\ell=0}^{\infty}\norm{B_\ell}_2^2
  =
  64\norm{B}_2^2.
  \label{eq:B62}
\end{align}
Hence
\begin{equation}
  R_0B\in\mathcal D_{L^2}(\Delta),
  \qquad
  \norm{\Delta R_0B}_2
  \le
  8\norm{B}_2.
\end{equation}
In this spectral sense, $R_0$ exhibits a two-derivative smoothing effect.

\subsubsection*{Differentiation of the multipliers at $t=0$}

We now identify the first-order behavior of each spectral multiplier.
Recall that
\begin{equation}
  y(s)=P_\ell^{(\alpha,\beta)}(s)
\end{equation}
satisfies the Jacobi differential equation
\begin{equation}
  (1-s^2)y''(s)
  +
  \bigl(
    \beta-\alpha-(\alpha+\beta+2)s
  \bigr)y'(s)
  +
  \ell(\ell+\alpha+\beta+1)y(s)
  =
  0.
\end{equation}
Specializing to $(\alpha,\beta)=(2,0)$ and setting
\begin{equation}
  s=2t-1,
\end{equation}
we have
\begin{equation}
  \frac{\dd}{\dd s}
  =
  \frac12\frac{\dd}{\dd t},
  \qquad
  \frac{\dd^2}{\dd s^2}
  =
  \frac14\frac{\dd^2}{\dd t^2},
\end{equation}
as well as
\begin{equation}
  1-s^2=4t(1-t),
  \qquad
  -2-4s=2(1-4t).
\end{equation}
Consequently, the normalized multiplier
\begin{equation}
  \phi_\ell(t)
  =
  \frac{P_\ell^{(2,0)}(2t-1)}
       {P_\ell^{(2,0)}(1)}
\end{equation}
satisfies
\begin{equation}
  t(1-t)\phi_\ell''(t)
  +
  (1-4t)\phi_\ell'(t)
  +
  \ell(\ell+3)\phi_\ell(t)
  =
  0.
  \label{eq:B63}
\end{equation}

Since $\phi_\ell$ is a polynomial, \cref{eq:B63} extends regularly to
$t=0$. Evaluating there and using $\phi_\ell(0)=r_\ell$, we obtain
\begin{equation}
  \phi_\ell'(0)
  =
  -\ell(\ell+3)r_\ell.
  \label{eq:B64}
\end{equation}
Therefore, for every $F\in\cH_\ell$,
\begin{equation}
  \left.
  \frac{\dd}{\dd t}R_tF
  \right|_{t=0^+}
  =
  -\ell(\ell+3)r_\ell F.
\end{equation}
Since
\begin{equation}
  \Delta F=-4\ell(\ell+3)F
  \qquad\text{and}\qquad
  R_0F=r_\ell F,
\end{equation}
we conclude that
\begin{equation}
  \left.
  \frac{\dd}{\dd t}R_tF
  \right|_{t=0^+}
  =
  \frac14\Delta R_0F.
  \label{eq:B65}
\end{equation}
Thus the desired differentiation formula is already valid mode by mode.

\subsubsection{A frequency-uniform difference-quotient estimate}
\label{sec:mode-by-mode-to-whole-function}

To pass from \cref{eq:B65} to a general function
$B\in L^2(X)$, pointwise convergence for each fixed $\ell$ is not enough:
the spectral expansion of $B$ may contain infinitely many modes. We need a
bound on the difference quotients that is uniform in $\ell$. To prove this, define the auxiliary polynomial
\begin{equation}
  \psi_\ell(t)
  :=
  \phi_\ell(t)
  -
  \frac{1-t}{2}\phi_\ell'(t).
  \label{eq:B67}
\end{equation}
Its particular form is chosen so that it satisfies another Jacobi
equation with uniformly bounded normalized solutions.

Introduce
\begin{equation}
  E_\ell[\phi]
  :=
  t(1-t)\phi''
  +(1-4t)\phi'
  +\ell(\ell+3)\phi.
\end{equation}
By \cref{eq:B63},
\begin{equation}
  E_\ell[\phi_\ell]=0.
\end{equation}
A direct differentiation of \cref{eq:B67} and collection of terms gives
\begin{equation}
\begin{aligned}
  &t(1-t)\psi_\ell''(t)
  +(2-4t)\psi_\ell'(t)
  +\ell(\ell+3)\psi_\ell(t)
  \\
  &\qquad
  =
  E_\ell[\phi_\ell](t)
  -
  \frac{1-t}{2}
  \frac{\dd}{\dd t}E_\ell[\phi_\ell](t)
  =
  0.
  \label{eq:B68}
\end{aligned}
\end{equation}
After the change of variables $s=2t-1$, this is precisely the Jacobi
equation with parameters $(\alpha,\beta)=(1,1)$.

Since \cref{eq:B51} is a polynomial identity, it extends to $t=1$ and
gives
\begin{equation}
  \phi_\ell(1)=1.
\end{equation}
The factor $1-t$ in \cref{eq:B67} then yields
\begin{equation}
  \psi_\ell(1)=1.
\end{equation}
By uniqueness of the normalized polynomial solution,
\begin{equation}
  \psi_\ell(t)
  =
  \frac{P_\ell^{(1,1)}(2t-1)}
       {P_\ell^{(1,1)}(1)}.
  \label{eq:B69}
\end{equation}

We now use the standard Jacobi estimate
\begin{equation}
  \abs{P_\ell^{(\alpha,\beta)}(s)}
  \le
  P_\ell^{(\alpha,\beta)}(1),
  \qquad
  -1\le s\le1,
  \quad
  \alpha\ge\beta>-1,
  \quad
  \alpha\ge-\frac12,
  \label{eq:B70}
\end{equation}
see \cite[Eq.~18.14.1]{DLMF}. Applying this inequality to
$(\alpha,\beta)=(2,0)$ and $(1,1)$ gives
\begin{equation}
  \abs{\phi_\ell(t)}\le1,
  \qquad
  \abs{\psi_\ell(t)}\le1,
  \qquad
  0\le t\le1,
\end{equation}
uniformly in $\ell$.

Solving \cref{eq:B67} for $\phi_\ell'$ therefore gives, for
$0\le t\le\frac12$,
\begin{align}
  \abs{\phi_\ell'(t)}
  &=
  \frac{2}{1-t}
  \abs{\phi_\ell(t)-\psi_\ell(t)}\le
  \frac{
    2\bigl(
      \abs{\phi_\ell(t)}
      +
      \abs{\psi_\ell(t)}
    \bigr)
  }{1-t}
  \le
  \frac4{1-t}
  \le8.
  \label{eq:B71}
\end{align}
The mean-value theorem now yields
\begin{equation}
  \abs{
    \frac{\phi_\ell(t)-r_\ell}{t}
  }
  \le8,
  \qquad
  0<t\le\frac12,
  \label{eq:B72}
\end{equation}
uniformly in $\ell$.

\begin{remark}
The restriction $0\le t\le \frac12$ is only a convenient choice and is
not essential. More generally, for any fixed $\varepsilon\in(0,1)$, the
same argument applies on the interval
\begin{equation}
  0\le t\le 1-\varepsilon.
\end{equation}
Indeed, from \cref{eq:B67} and the bounds
\begin{equation}
  \abs{\phi_\ell(t)}\le1,
  \qquad
  \abs{\psi_\ell(t)}\le1,
\end{equation}
we obtain
\begin{equation}
  \abs{\phi_\ell'(t)}
  \le
  \frac{4}{1-t}
  \le
  \frac{4}{\varepsilon},
  \qquad
  0\le t\le1-\varepsilon,
\end{equation}
uniformly in $\ell$. Consequently, the mean-value theorem gives
\begin{equation}
  \abs{
    \frac{\phi_\ell(t)-r_\ell}{t}
  }
  \le
  \frac{4}{\varepsilon},
  \qquad
  0<t\le1-\varepsilon.
\end{equation}
Thus any compact subinterval of $[0,1)$ yields a uniform bound of the same
type. The choice $t\le\frac12$ corresponds to $\varepsilon=\frac12$ and
gives the convenient constant $8$. As the interval approaches the endpoint
$t=1$, however, the resulting constant deteriorates like
$(1-t)^{-1}$.
\end{remark}

We can now pass from the modewise identity \cref{eq:B65} to a strong
$L^2$ statement. Define
\begin{equation}
  d_\ell(t)
  :=
  \frac{\phi_\ell(t)-r_\ell}{t}.
  \label{eq:B73}
\end{equation}
For each fixed $\ell$,
\begin{equation}
  \lim_{t\downarrow0}d_\ell(t)
  =
  \phi_\ell'(0).
  \label{eq:B74}
\end{equation}
Moreover, for
\begin{equation}
  B=\sum_{\ell=0}^{\infty}B_\ell,
  \qquad
  B_\ell\in\cH_\ell,
\end{equation}
we have
\begin{equation}
  \frac{R_t-R_0}{t}B
  =
  \sum_{\ell=0}^{\infty}d_\ell(t)B_\ell,
  \qquad
  \frac14\Delta R_0B
  =
  \sum_{\ell=0}^{\infty}\phi_\ell'(0)B_\ell.
  \label{eq:B75}
\end{equation}

Combining \cref{eq:B57,eq:B64}, we also have
\begin{equation}
  \abs{\phi_\ell'(0)}
  =
  \frac{
    2\ell(\ell+3)
  }{
    (\ell+1)(\ell+2)
  }
  <2.
  \label{eq:B76}
\end{equation}
Therefore, by \cref{eq:B72},
\begin{equation}
  \abs{d_\ell(t)-\phi_\ell'(0)}
  \le
  \abs{d_\ell(t)}
  +
  \abs{\phi_\ell'(0)}
  <10,
  \qquad
  0<t\le\frac12.
\end{equation}
It follows that
\begin{equation}
  \abs{
    d_\ell(t)-\phi_\ell'(0)
  }^2
  \norm{B_\ell}_2^2
  \le
  100\norm{B_\ell}_2^2,
  \qquad
  0<t\le\frac12.
  \label{eq:B77}
\end{equation}
The dominating sequence is summable:
\begin{equation}
  \sum_{\ell=0}^{\infty}
  100\norm{B_\ell}_2^2
  =
  100\norm{B}_2^2
  <\infty.
  \label{eq:B78}
\end{equation}

By the orthogonality of the spaces $\cH_\ell$,
\begin{equation}
  \norm{
    \frac{R_t-R_0}{t}B
    -
    \frac14\Delta R_0B
  }_2^2
  =
  \sum_{\ell=0}^{\infty}
  \abs{
    d_\ell(t)-\phi_\ell'(0)
  }^2
  \norm{B_\ell}_2^2.
  \label{eq:B79}
\end{equation}
For each fixed $\ell$, the summand tends to zero by \cref{eq:B74}, while
\cref{eq:B77,eq:B78} provide a summable dominating sequence.
Dominated convergence therefore gives
\begin{equation}
  \lim_{t\downarrow0}
  \sum_{\ell=0}^{\infty}
  \abs{
    d_\ell(t)-\phi_\ell'(0)
  }^2
  \norm{B_\ell}_2^2
  =
  0.
  \label{eq:B80}
\end{equation}
Consequently,
\begin{equation}
  \frac{R_t-R_0}{t}B
  \longrightarrow
  \frac14\Delta R_0B
  \qquad\text{in }L^2(X)
  \quad\text{as }t\downarrow0,
  \label{eq:B81}
\end{equation}
or equivalently,
\begin{equation}
  R_tB
  =
  R_0B
  +
  \frac{t}{4}\Delta R_0B
  +
  o_{L^2}(t).
\end{equation}

The same multiplier estimate gives the uniform bound
\begin{align}
  \norm{
    \frac{R_t-R_0}{t}B
  }_2^2
  &=
  \sum_{\ell=0}^{\infty}
  \abs{d_\ell(t)}^2
  \norm{B_\ell}_2^2
  \le
  64\sum_{\ell=0}^{\infty}\norm{B_\ell}_2^2
  =
  64\norm{B}_2^2.
\end{align}
Therefore
\begin{equation}
  \norm{
    \frac{R_t-R_0}{t}B
  }_2
  \le
  8\norm{B}_2,
  \qquad
  0<t\le\frac12.
  \label{eq:B82}
\end{equation}

\subsubsection{Perfect exclusion forces the first-order term to vanish}
\label{sec:perfect-exclusion-forces-zero}

It remains to exploit the positivity assumptions. Set
\begin{equation}
  h:=R_0B.
\end{equation}
Since $B\ge0$ and $R_0$ is a positive Markov operator,
\begin{equation}
  h\ge0
  \qquad\text{almost everywhere on }X.
\end{equation}
The perfect-exclusion hypothesis $a(0)=0$, together with
\cref{eq:B38}, gives
\begin{equation}
  0
  =
  a(0)
  =
  \ip{A}{h}.
  \label{eq:B83}
\end{equation}
Since $A,h\ge0$, it follows that
\begin{equation}
  A(x)h(x)=0
  \qquad
  \text{for $\sigma$-almost every }x\in X.
  \label{eq:B84}
\end{equation}

Let
\begin{equation}
  E:=\{x\in X:h(x)=0\}.
  \label{eq:B85}
\end{equation}
Then
\begin{equation}
  A=0
  \qquad\text{almost everywhere on }X\setminus E.
\end{equation}
To prove
\begin{equation}
  \ip{A}{\Delta h}=0,
\end{equation}
it therefore remains to show that
\begin{equation}
  \Delta h=0
  \qquad\text{almost everywhere on }E,
  \qquad
  E:=\{x\in X:h(x)=0\}.
\end{equation}
We shall use this standard Sobolev locality argument here but we include its proof in \cref{sec:laplacian-vanishes} for completeness.
We emphasize that this conclusion is not an immediate pointwise consequence of $h=0$. Its justification requires
some regularity of $h$ and a locality property of weak derivatives.

Assuming this fact for the moment, recall that the perfect-exclusion
condition gives
\begin{equation}
  A=0
  \qquad
  \text{almost everywhere on }X\setminus E.
\end{equation}
Hence we have the complementary vanishing properties
\begin{equation}
  A=0
  \quad\text{a.e. on }X\setminus E,
  \qquad
  \Delta h=0
  \quad\text{a.e. on }E.
\end{equation}
It follows that
\begin{equation}
  A\,\Delta h=0
  \qquad\text{almost everywhere on }X,
\end{equation}
and therefore
\begin{equation}
  \ip{A}{\Delta R_0B}
  =
  \ip{A}{\Delta h}
  =
  0.
  \label{eq:B90}
\end{equation}
\subsubsection*{Conclusion}

We can now return to the behavior of $a(t)$ at $t=0$. Since $a(0)=0$,
\begin{equation}
  \frac{a(t)}{t}
  =
  \ip{A}{
    \frac{R_t-R_0}{t}B
  }.
\end{equation}
By Cauchy--Schwarz,
\begin{equation}
\begin{aligned}
  \abs{
    \frac{a(t)}{t}
    -
    \frac14\ip{A}{\Delta R_0B}
  }
  &=
  \abs{
    \ip{A}{
      \frac{R_t-R_0}{t}B
      -
      \frac14\Delta R_0B
    }
  }
  \\
  &\le
  \norm{A}_2
  \norm{
    \frac{R_t-R_0}{t}B
    -
    \frac14\Delta R_0B
  }_2.
  \label{eq:B91}
\end{aligned}
\end{equation}
By the strong $L^2$ convergence in \cref{eq:B81}, the right-hand side tends
to zero as $t\downarrow0$. Combining this with \cref{eq:B90}, we obtain
\begin{equation}
  \frac{a(t)}{t}
  \longrightarrow
  \frac14\ip{A}{\Delta R_0B}
  =
  0
  \qquad
  \text{as }t\downarrow0.
  \label{eq:B92}
\end{equation}
This proves the zero-slope assertion \cref{eq:zero-slope}.

Finally, \cref{eq:B82} gives the required uniform bound. Since
$a(t)\ge0$ and $a(0)=0$, for $0<t\le\frac12$,
\begin{equation}
\begin{aligned}
  0
  \le
  \frac{a(t)}{t}
  &=
  \ip{A}{
    \frac{R_t-R_0}{t}B
  }
  \\
  &\le
  \norm{A}_2
  \norm{
    \frac{R_t-R_0}{t}B
  }_2
  \\
  &\le
  8\norm{A}_2\norm{B}_2\\
  &  \le8,
  \label{eq:B93}
\end{aligned}
\end{equation}
where the last inequality follows from $0\le A,B\le1$ and
$\sigma(X)=1$. This proves \cref{eq:uniform-slope} and completes the
proof of \cref{lem:zero-slope}.
\qed

\begin{remark}\label{rmk:zero-slope-sharpness}
The nonnegativity assumption in \cref{lem:zero-slope} is essential, and in this sense the hypothesis $A,B:\CP^3\to[0,1]$ is sharp. Indeed, the conclusion fails if the range $[0,1]$ is replaced by $[-1,1]$, for example.

To see this, let $F\in\mathcal H_1$ be a real-valued, nonzero function normalized so that
$$
    \|F\|_{L^2(\CP^3)}=1.
$$
Since $\CP^3$ is compact and $F$ is smooth, $F$ is bounded. Hence, for $\varepsilon>0$ sufficiently small, the functions

$$
    A=B
    :=
    \varepsilon
    \left(
        \frac{1}{\sqrt{3}}+F
    \right)
$$
take values in $[-1,1]$. Moreover, since $F\in\mathcal H_1$,
$$
    R_tF
    =
    \frac{4t-1}{3}\,F.
$$
Using the orthogonality of $\mathcal H_0$ and $\mathcal H_1$, we therefore obtain

$$
\begin{aligned}
    a_{A,B}(t)
    &=
    \langle A|R_tB\rangle_{L^2(\CP^3)}
 =
    \varepsilon^2
    \left(
        \frac13
        +
        \frac{4t-1}{3}\|F\|_{L^2(\CP^3)}^2
    \right)
=
    \frac{4\varepsilon^2}{3}\,t.
\end{aligned}
$$
Consequently,

$$
    a_{A,B}(0)=0,
$$

while

$$
    \lim_{t\to0^+}
    \frac{a_{A,B}(t)}{t}
    =
    \frac{4\varepsilon^2}{3}
    >0.
$$
Thus, allowing $A$ and $B$ to change sign destroys the zero-slope conclusion.
\end{remark}

\subsubsection{Why the Laplacian vanishes on the level set?}
\label{sec:laplacian-vanishes}

We now justify the fact used in the proof above. The statement we need is
\begin{equation}
  E=\{x\in X:h(x)=0\}
  \qquad\Longrightarrow\qquad
  \Delta h=0
  \quad\text{almost everywhere on }E.
  \label{eq:laplacian-zero-level-set}
\end{equation}
This deserves some care. The assertion is not the pointwise implication
\begin{equation}
  h(x)=0
  \quad\Longrightarrow\quad
  \Delta h(x)=0,
\end{equation}
which is false even for smooth functions. For example, $h(x)=x^2$
satisfies $h(0)=0$ but $h''(0)=2$. What is relevant here is instead an
almost-everywhere statement on the whole level set $\{h=0\}$.

The argument uses Sobolev regularity. Recall that, for an integer
$k\ge0$ and $1\le p\le\infty$, $W^{k,p}(X)$ denotes the standard Sobolev
space of order $k$ on $X$. In local coordinates, membership in
$W^{k,p}(X)$ means that all weak derivatives up to order $k$ belong
locally to $L^p$. In particular, these derivatives are not required to
exist in the classical pointwise sense.

From the smoothing estimate in \cref{eq:B62}, with $h=R_0B$, we know that
\begin{equation}
  h\in L^2(X),
  \qquad
  \Delta h\in L^2(X),
  \label{eq:B86}
\end{equation}
where $\Delta h$ is initially understood in the distributional,
equivalently spectral, sense. Since the Laplace--Beltrami operator is
elliptic of order two, elliptic regularity implies
\begin{equation}
  h\in H^2(X)=W^{2,2}(X).
  \label{eq:h-H2-regularity}
\end{equation}
See, for instance, \cite[Thm.~15.1]{Dyatlov2022}. Thus $h$ possesses first and second weak derivatives in $L^2$. We shall use the following standard locality property of Sobolev functions.

\begin{lemma}[Locality of weak derivatives on level sets]
\label{lem:locality}
Let $\Omega\subset\R^n$ be open, let $1\le p<\infty$, and let
$u\in W^{1,p}(\Omega)$. Then, for every $c\in\R$,
\begin{equation}
  \nabla u=0
  \qquad
  \text{almost everywhere on }\{u=c\}.
  \label{eq:first-order-locality}
\end{equation}
Consequently, if $u\in W^{2,p}(\Omega)$, then, for every
$i,j\in\{1,\ldots,n\}$,
\begin{equation}
  \partial_i u=0,
  \qquad
  \partial_j\partial_i u=0
  \qquad
  \text{almost everywhere on }\{u=c\}.
  \label{eq:second-order-locality}
\end{equation}
\end{lemma}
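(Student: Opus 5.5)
We prove the first-order statement \cref{eq:first-order-locality} first and then obtain \cref{eq:second-order-locality} by applying it to the first derivatives.

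\textbf{First-order part.} Replacing $u$ by $u-c$ (and noting $\nabla(u-c)=\nabla u$, with $u-c\in W^{1,p}_{\mathrm{loc}}$, which is all we need since the claim is local), we may assume $c=0$. We use the chain rule for Lipschitz truncations. For $\varepsilon>0$ set
\begin{equation}
  G_\varepsilon(s):=\max\{\min\{s,\varepsilon\},-\varepsilon\}.
\end{equation}
This function is Lipschitz with $G_\varepsilon(0)=0$. The chain rule then gives $G_\varepsilon(u)\in W^{1,p}_{\mathrm{loc}}$ with
\begin{equation}
  \nabla G_\varepsilon(u)=\1_{\{|u|<\varepsilon\}}\nabla u .
\end{equation}
Alternatively, we can prove this directly by mollifying $G_\varepsilon$ to smooth $G_{\varepsilon,\delta}$, applying the smooth chain rule to smooth approximants of $u$, and passing to limits.

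Now take $\varepsilon\to0$. Pointwise we have $|G_\varepsilon(u)|\le\varepsilon$, so $G_\varepsilon(u)\to0$ in $L^p_{\mathrm{loc}}$. Hence for every test function $\varphi\in C_c^\infty(\Omega)$,
\begin{equation}
  \int \1_{\{|u|<\varepsilon\}}\partial_i u\,\varphi=-\int G_\varepsilon(u)\,\partial_i\varphi\to0 .
\end{equation}
On the other hand, $\1_{\{|u|<\varepsilon\}}\to\1_{\{u=0\}}$ pointwise, so dominated convergence (with dominating function $|\partial_iu||\varphi|\in L^1$) shows that the left side tends to $\int\1_{\{u=0\}}\partial_iu\,\varphi$. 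This integral therefore vanishes for every $\varphi$, and so $\1_{\{u=0\}}\partial_iu=0$ almost everywhere.

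\textbf{Second-order part.} Let $u\in W^{2,p}(\Omega)$. The statement $\partial_iu=0$ a.e. on $\{u=c\}$ is already the first-order part. For the second derivatives, set $v:=\partial_iu\in W^{1,p}(\Omega)$ and apply the first-order part to $v$ at level $0$. This gives $\partial_jv=0$ almost everywhere on $\{v=0\}$. Since $\{u=c\}\subseteq\{v=0\}$ up to a null set, we conclude $\partial_j\partial_iu=0$ almost everywhere on $\{u=c\}$.

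\textbf{Main obstacle.} The only delicate step is justifying the chain rule for the non-smooth $G_\varepsilon$. We will handle it through the standard approximation argument: take $u_k\in C^\infty$ with $u_k\to u$ in $W^{1,p}_{\mathrm{loc}}$ and a.e., and take smooth $G$ with bounded derivative. Then $G(u_k)\to G(u)$ in $L^p_{\mathrm{loc}}$, and $G'(u_k)\nabla u_k\to G'(u)\nabla u$ by dominated convergence, using that $G'$ is bounded and continuous. Finally we pass from smooth $G$ to $G_\varepsilon$ using monotone or uniform approximations $G_{\varepsilon,\delta}\to G_\varepsilon$ whose derivatives converge pointwise to $\1_{(-\varepsilon,\varepsilon)}$ away from $\pm\varepsilon$. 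The sets $\{u=\pm\varepsilon\}$ may have positive measure, but this only affects countably many $\varepsilon$. We can simply choose $\varepsilon\to0$ along values avoiding those atoms of the distribution of $u$.
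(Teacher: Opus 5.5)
Your proof is correct. The paper itself never proves this lemma. Although it says the argument is included ``for completeness'', \cref{sec:laplacian-vanishes} only states \cref{lem:locality} as a standard locality property of Sobolev functions and then uses it to derive $\Delta h=0$ a.e.\ on $E$. Your argument therefore supplies what the paper takes for granted. Your second-order deduction is the natural reading of the paper's ``Consequently'': apply the first-order statement to $v=\partial_i u$, and use $\{u=c\}\subseteq\{\partial_i u=0\}$ up to a null set.

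Your route also differs from the usual textbook proof (Gilbarg--Trudinger, Lemmas~7.6--7.7). That proof shows $\nabla u^{+}=\1_{\{u>0\}}\nabla u$ using the $C^1$ approximants $\sqrt{s^2+\epsilon^2}-\epsilon$ on $s>0$, whose derivative vanishes at the kink for every $\epsilon$, so no level-set ambiguity arises. It then reads off $\nabla u=\nabla u^{+}-\nabla u^{-}=0$ on $\{u=0\}$. You instead truncate at $\pm\varepsilon$ and let $\varepsilon\to0$ in the weak formulation, handling the corners by avoiding the countably many levels $s$ with $|\{u=s\}|>0$. That step is valid, but it is not needed. With a symmetric mollifier, $G_{\varepsilon,\delta}'(u)$ converges pointwise to $\1_{\{|u|<\varepsilon\}}+\tfrac12\1_{\{|u|=\varepsilon\}}$. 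This limit still tends to $\1_{\{u=0\}}$ as $\varepsilon\to0$, so your dominated-convergence argument works for every $\varepsilon$. Both routes are equally elementary; yours avoids the splitting into $u^{\pm}$, while the textbook one avoids any discussion of the corner levels.
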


We now apply this lemma to $h$ on the manifold $X=\CP^3$. Let
\begin{equation}
  (U;y^1,\ldots,y^6)
\end{equation}
be a smooth real coordinate chart. By
\cref{eq:h-H2-regularity}, the coordinate representation of $h$ belongs
to $W^{2,2}_{\mathrm{loc}}$. Therefore, on
\begin{equation}
  E\cap U=\{x\in U:h(x)=0\},
\end{equation}
Lemma~\ref{lem:locality} gives
\begin{equation}
  \partial_i h=0
  \qquad
  \text{almost everywhere on }E\cap U,
  \qquad i=1,\ldots,6,
  \label{eq:B87}
\end{equation}
and
\begin{equation}
  \partial_i\partial_j h=0
  \qquad
  \text{almost everywhere on }E\cap U,
  \qquad i,j=1,\ldots,6.
  \label{eq:B88}
\end{equation}

It remains to relate these weak derivatives to the Laplace--Beltrami
operator. In the coordinates $(y^1,\ldots,y^6)$,
\begin{equation}
  \Delta h
  =
  |g|^{-1/2}
  \sum_{i=1}^{6}
  \partial_i
  \left(
    |g|^{1/2}
    \sum_{j=1}^{6}
    g^{ij}\partial_jh
  \right),
  \label{eq:laplace-beltrami-local}
\end{equation}
where
\begin{equation}
  g=(g_{ij}),
  \qquad
  |g|=\det(g_{ij}),
\end{equation}
and $(g^{ij})$ denotes the inverse metric matrix.

Since the coefficients of the metric are smooth, expanding
\cref{eq:laplace-beltrami-local} gives
\begin{equation}
  \Delta h
  =
  \sum_{i=1}^{6}\sum_{j=1}^{6}
  g^{ij}\partial_i\partial_jh
  +
  \sum_{j=1}^{6}
  b^j\partial_jh,
  \label{eq:laplace-beltrami-expanded}
\end{equation}
for suitable smooth coefficients $b^1,\ldots,b^6$. Because $h\in W^{2,2}(X)$, all terms in
\cref{eq:laplace-beltrami-expanded} are well-defined as $L^2$ functions,
and the formula agrees almost everywhere with the distributional
Laplacian in \cref{eq:B86}. On $E\cap U$, however,
\cref{eq:B87,eq:B88} imply
\begin{equation}
  \partial_jh=0,
  \qquad j=1,\ldots,6,
\end{equation}
and
\begin{equation}
  \partial_i\partial_jh=0,
  \qquad i,j=1,\ldots,6,
\end{equation}
almost everywhere. Consequently,
\begin{equation}
  \Delta h=0
  \qquad
  \text{almost everywhere on }E\cap U.
\end{equation}

Since $X$ is compact, it can be covered by finitely many such coordinate
charts. Combining the corresponding almost-everywhere statements gives
the global conclusion
\begin{equation}
  \Delta h=0
  \qquad
  \text{almost everywhere on }E=\{h=0\}.
  \label{eq:global-laplacian-zero-level}
\end{equation}

\subsection{Why the argument breaks down below dimension four}
\label{sec:dimension-threshold}

The role of the dimension becomes transparent at the spectral level.
Let
\begin{equation}
  X_d:=\CP^{d-1}.
\end{equation}
The same argument used above gives the multiplier formula
\begin{equation}
  \phi_\ell^{(d)}(t)
  =
  \frac{P_\ell^{(d-2,0)}(2t-1)}
       {P_\ell^{(d-2,0)}(1)},
  \qquad
  \Delta F
  =
  -4\ell(\ell+d-1)F,
  \quad F\in\mathcal H_\ell.
  \label{eq:general-phi}
\end{equation}
At the perfect-exclusion point $t=0$, the Jacobi endpoint identities give
\begin{equation}
  r_\ell^{(d)}
  :=
  \phi_\ell^{(d)}(0)
  =
  \frac{(-1)^\ell}
       {\binom{\ell+d-2}{\ell}}.
  \label{eq:general-r}
\end{equation}
Hence
\begin{equation}
  \abs{r_\ell^{(d)}}
  \simeq
  \ell^{-(d-2)}
  \qquad
  (\ell\to\infty).
  \label{eq:general-r-decay}
\end{equation}
Thus $R_0$ gains $d-2$ derivatives in the spectral scale associated with
the Laplace--Beltrami operator.

The proof of \cref{lem:zero-slope} requires at least two derivatives of
smoothing, since one needs
\begin{equation}
  \Delta R_0B\in L^2(X_d)
\end{equation}
for arbitrary $B\in L^2(X_d)$. On the $\ell$th eigenspace, the relevant
multiplier is
\begin{equation}
  4\ell(\ell+d-1)r_\ell^{(d)},
\end{equation}
whose magnitude satisfies
\begin{equation}
  4\ell(\ell+d-1)
  \abs{r_\ell^{(d)}}
  \simeq
  \ell^{4-d}.
  \label{eq:dimension-threshold-multiplier}
\end{equation}
This immediately identifies $d=4$ as the threshold:
\begin{equation}
  \ell^{4-d}
  \begin{cases}
    \longrightarrow\infty, & d<4,\\
    \simeq 1, & d=4,\\
    \longrightarrow0, & d>4.
  \end{cases}
\end{equation}
Consequently, $\Delta R_0$ is not bounded on $L^2$ when $d<4$, is
bounded at the critical dimension $d=4$, and becomes increasingly
regularizing for $d>4$.

The same threshold also appears in the first variation at $t=0$. The
Jacobi differential equation gives
\begin{equation}
  \phi_\ell^{(d)\prime}(0)
  =
  -\ell(\ell+d-1)r_\ell^{(d)},
  \label{eq:general-derivative}
\end{equation}
and therefore
\begin{equation}
  \abs{\phi_\ell^{(d)\prime}(0)}
  \simeq
  \ell^{4-d}.
  \label{eq:general-derivative-growth}
\end{equation}
Thus the derivatives of the individual spectral multipliers are already
unbounded in $\ell$ when $d<4$. In particular, no frequency-uniform
$L^2$ estimate for the difference quotients analogous to the one used in
the proof above can hold in these dimensions.

For example, when $d=3$, so that $X_3=\CP^2$,
\begin{equation}
  r_\ell^{(3)}
  =
  \frac{(-1)^\ell}{\ell+1},
  \qquad
  \Delta F
  =
  -4\ell(\ell+2)F,
  \quad F\in\mathcal H_\ell.
\end{equation}
Hence
\begin{equation}
  \abs{\phi_\ell^{(3)\prime}(0)}
  =
  \frac{\ell(\ell+2)}{\ell+1}
  \simeq
  \ell,
\end{equation}
which diverges as $\ell\to\infty$. Equivalently, $R_0$ gains only one
derivative, whereas the zero-set argument requires control of the
second-order quantity $\Delta R_0B$. This is precisely where the present
proof fails for qutrits.

By contrast, there is no analogous spectral obstruction in dimensions
$d\ge4$. At $d=4$, the decay
\begin{equation}
  \abs{r_\ell^{(4)}}\simeq\ell^{-2}
\end{equation}
exactly compensates for the quadratic growth of the Laplacian and gives
the $H^2$ regularity used above. For $d>4$, one has the stronger decay
\begin{equation}
  \abs{r_\ell^{(d)}}\simeq\ell^{-(d-2)},
\end{equation}
so that $R_0$ gains strictly more than two derivatives and
\begin{equation}
  \abs{\phi_\ell^{(d)\prime}(0)}
  \simeq
  \ell^{4-d}
\end{equation}
even decays at high frequency. Thus the spectral mechanism underlying the
proof becomes more favorable as the dimension increases. In particular,
once the corresponding frequency-uniform Jacobi estimates are established,
the same strategy is expected to extend to every $d\ge4$.

Therefore, dimension four is not an accidental feature of the argument:
it is the smallest dimension in which perfect-exclusion averaging provides
the two derivatives of smoothing needed to control
$\Delta R_0B$ on its zero set.

This also explains why the present argument does not establish an infinite classical communication cost for qutrits. In
\cref{app:d3}, we show instead that the classical communication cost in
the qutrit case is finite.

%% file: 7_appendix_bell_v2.tex
In the previous section, we considered the exact classical simulation of
$d$-dimensional quantum communication. There, the task is to reproduce, for
every state $\rho\in\mathcal S_d$ and every measurement
$M\in\mathcal M_{d,n}$, the quantum probabilities
$p_Q(b\mid\rho,M)=\tr(\rho M_b)$ using shared randomness and a classical
message of bounded dimension, as in \cref{eq:classical-simulation}. In the
Bell setting, we instead fix a bipartite state $\rho_{AB}$ and ask whether,
for arbitrary local measurements $M^A$ and $M^B$, the probabilities
$p_Q(a,b\mid M^A,M^B)
=\tr[\rho_{AB}(M_a^A\otimes M_b^B)]$ can be reproduced by the classical
model in \cref{eq:bell-classical-simulation}.

These two simulation problems are closely related. It is known that a classical simulation of arbitrary $d$-dimensional quantum communication can be converted into an exact classical simulation of the local measurement statistics of any bipartite state whose Bob subsystem has dimension $d$, without increasing the message dimension \cite{CerfGisinMassar2000,RennerTavakoliQuintino}. For completeness, we briefly recall this construction in \cref{sec:bell-qutrit} and combine it with our qutrit protocol from \cref{app:d3}. As a consequence, the statistics generated by arbitrary local measurements on any bipartite state whose Bob subsystem is a qutrit can be simulated exactly with $357$ bits of one-way classical communication. This bound is not intended to be optimal; our purpose here is only to establish that the required communication is finite.

For maximally entangled states, a converse relation also holds at the level
of finiteness. Using a standard construction closely related to quantum
teleportation \cite{bennett1993teleporting}, a finite classical simulation
of all local measurement statistics of $\ket{\Phi_d}$ can be converted into
a finite classical simulation of arbitrary $d$-dimensional quantum
communication. We establish this reduction in \cref{sec:bell-ququart-full}. Combined with
\cref{thm:d4-infinite}, it implies that the statistics generated by arbitrary local measurements on $\ket{\Phi_d}$ cannot be simulated with any finite
message dimension for $d\geq4$.

The construction used in \cref{sec:bell-ququart-full}, however, relies on $d^2$-outcome measurements on Alice's side. It therefore leaves open whether the same obstruction persists for much simpler local measurements. This question is particularly natural because the impossibility result of \cref{thm:d4-infinite} already arises from binary projective measurements in the quantum communication setting. In \cref{sec:bell-ququart-binary}, we show that the analogous statement holds for the maximally entangled ququart: already the family of rank-one binary projective measurements requires an infinite classical message dimension for exact simulation. The proof is a direct adaptation of the zero-slope argument used to prove \cref{thm:d4-infinite}. This result can be contrasted with the seminal work of Regev and Toner \cite{RegevToner}, showing that arbitrary binary quantum correlators admit finite-communication simulations; we discuss this distinction in detail below.

\subsection{Finite classical simulation of bipartite states with a qutrit subsystem}
\label{sec:bell-qutrit}

Consider a bipartite state $\rho_{AB}$, with $\dim\mathcal H_B=d$, and
arbitrary local measurements $M^A=\{M_a^A\}_a$ and
$M^B=\{M_b^B\}_b$. Alice first generates her outcome according to its quantum
marginal,
\begin{equation}
    p_Q(a\mid M^A)
    =
    \tr\!\left[\rho_{AB}(M_a^A\otimes\eye_d)\right].
\end{equation}
Associated with this outcome is the normalized conditional state on Bob's
subsystem
\begin{equation}
    \rho_{B\mid a,M^A}
    :=
    \frac{
        \tr_A\!\left[(M_a^A\otimes\eye_d)\rho_{AB}\right]
    }{
        p_Q(a\mid M^A)
    },
    \label{eq:conditional-bob-state}
\end{equation}
whenever $p_Q(a\mid M^A)>0$. The joint quantum probability can therefore be
written as
\begin{equation}
    p_Q(a,b\mid M^A,M^B)
    =
    p_Q(a\mid M^A)\,
    \tr\!\left(\rho_{B\mid a,M^A}M_b^B\right).
    \label{eq:bell-as-pm}
\end{equation}
Conditioned on her outcome $a$, Alice can now use the classical
prepare-and-measure protocol to simulate the transmission of
$\rho_{B\mid a,M^A}$ to Bob, who uses $M^B$ as his measurement. Thus any
finite-message simulation of arbitrary $d$-dimensional quantum communication
gives a classical simulation of the local measurement statistics of $\rho_{AB}$ with exactly the same message dimension.

Combining this reduction with the qutrit simulation established in
\cref{app:d3} immediately gives a finite upper bound for the classical communication required to simulate the local measurement statistics of any bipartite state whose Bob subsystem is a qutrit.

\begin{corollary}[Finite simulation of bipartite states with a qutrit subsystem]
\label{cor:bell-qutrit-finite}

Let $\rho_{AB}$ be any bipartite state on
$\mathcal H_A\otimes\C^3$, with $\mathcal H_A$ finite dimensional.
Then the correlations $\tr[\rho_{AB}(M_a^A\otimes M_b^B)]$ generated by $\rho_{AB}$ under arbitrary
finite-outcome local POVMs admit an exact classical simulation with
one-way communication from Alice to Bob using a message alphabet of dimension
\begin{equation}
    d_C < 2^{357}.
\end{equation}
In particular, $357$ classical bits suffice.
\end{corollary}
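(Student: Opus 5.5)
The plan is to combine the reduction just described with the qutrit prepare-and-measure protocol of \cref{app:d3}, which we take as given. That protocol simulates arbitrary mixed qutrit states and arbitrary finite-outcome POVMs with message dimension $d_C = 9216\,(9328)^{26} < 2^{357}$. The construction has three layers. Alice samples her outcome $a$ from the quantum marginal. She then runs the qutrit encoder on the conditional state $\rho_{B\mid a,M^A}$. Bob runs the qutrit decoder on his POVM $M^B$.

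Concretely, the shared randomness is the same probability space $(\Lambda,\Sigma_\Lambda,\mu)$ as in the qutrit protocol, enlarged by an independent uniform variable $u\in[0,1]$. Alice uses $u$ to select $a$ with probability $p_Q(a\mid M^A)$. When $p_Q(a\mid M^A)=0$ that outcome is never selected, so we may assign an arbitrary fixed state, say $\eye_3/3$, to avoid division by zero. Formally,
\begin{equation}
    p_A(a,c\mid M^A,\lambda) = p_Q(a\mid M^A)\, p_A^{\mathrm{PM}}\bigl(c\mid \rho_{B\mid a,M^A},\lambda\bigr),
\end{equation}
and Bob's decoder is exactly $p_B^{\mathrm{PM}}(b\mid M^B,c,\lambda)$. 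Randomizing $a$ this way can be written directly in the product form above, so $u$ is not even needed.

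Correctness is checked by substituting into \cref{eq:bell-classical-simulation}. Pull the factor $p_Q(a\mid M^A)$ outside the integral. What remains is \cref{eq:classical-simulation} applied to the state $\rho_{B\mid a,M^A}$ and the POVM $M^B$, which evaluates to $\tr(\rho_{B\mid a,M^A}M_b^B)$. By \cref{eq:bell-as-pm} the product equals $p_Q(a,b\mid M^A,M^B)$. Normalization holds because, for each fixed $a$, the encoder probabilities sum to one over $c$.

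The main technical point is measurability, as required by \cref{def:bell-classical-simulation}. The map $M^A\mapsto p_Q(a\mid M^A)$ is continuous. The map $M^A\mapsto\rho_{B\mid a,M^A}$ is continuous on the set where $p_Q(a\mid M^A)>0$, which is open, and constant on its complement. It is therefore Borel. Composing it with the jointly measurable map $(\rho,\lambda)\mapsto p_A^{\mathrm{PM}}(c\mid\rho,\lambda)$ gives a $\mathcal B(\mathcal M_{d_A,n_A})\otimes\Sigma_\Lambda$-measurable function, since the composition of a jointly measurable function with a measurable map is measurable. The message alphabet is unchanged, so $d_C<2^{357}$, which requires at most $357$ bits.
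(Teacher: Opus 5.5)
Your proposal is correct and takes the same route as the paper. Alice samples $a$ from $p_Q(a\mid M^A)$ and runs the qutrit encoder on $\rho_{B\mid a,M^A}$, while Bob runs the qutrit decoder on $M^B$ with the message alphabet unchanged; your explicit treatment of outcomes with $p_Q(a\mid M^A)=0$ and of the measurability conditions in \cref{def:bell-classical-simulation} fills in details the paper leaves implicit, and both are handled correctly.
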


\begin{proof}
By \cref{app:d3}, arbitrary qutrit states and finite-outcome qutrit POVMs
can be simulated exactly with a classical message alphabet of dimension
$d_C<2^{357}$. Applying this quantum communication simulation to the
conditional states $\rho_{B\mid a,M^A}$ in
\cref{eq:bell-as-pm} gives the claimed simulation of the local measurement statistics without increasing the message dimension.
\end{proof}

\subsection{Infinite cost for maximally entangled ququarts under arbitrary local measurements}
\label{sec:bell-ququart-full}

The reduction discussed above shows that a classical simulation of arbitrary
$d$-dimensional quantum communication immediately yields a simulation of the
local measurement statistics of any bipartite state whose Bob subsystem has
dimension $d$. For maximally entangled states, there is also a converse
relation at the level of finiteness. Namely, a classical simulation of all
local measurement statistics of a maximally entangled state can be converted
into a classical simulation of arbitrary quantum communication of the
corresponding dimension. The construction is closely related to teleportation \cite{bennett1993teleporting}.

For $d\geq2$, let
\begin{equation}
    \ket{\Phi_d}
    :=
    \frac{1}{\sqrt d}
    \sum_{j=0}^{d-1}
    \ket{j}\otimes\ket{j}
    \in\C^d\otimes\C^d
    \label{eq:max-ent-state}
\end{equation}
be the maximally entangled state.

\begin{proposition}[From maximally entangled states to quantum communication]
\label{prop:bell-to-pm-max-ent}

Suppose that the local measurement statistics of $\ket{\Phi_d}$ admit a classical simulation with classical message dimension $d_C$. Then arbitrary $d$-dimensional quantum communication statistics admit a classical simulation with message dimension
\begin{equation}
    d'_{C} \le d^2 d_C.
\end{equation}
\end{proposition}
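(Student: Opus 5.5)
The plan is to turn the Bell simulation into a prepare-and-measure simulation through a teleportation-like argument. Alice's Bell measurement plays the role of the $d^2$-outcome measurement, and her outcome is communicated alongside the simulation message, which costs a factor $d^2$. Fix $\rho\in\mathcal S_d$ and an $n$-outcome POVM $M=(M_1,\dots,M_n)$ on Bob's side.

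On the quantum side, take the generalized Pauli (Weyl) unitaries $W_{jk}=X^jZ^k$, $j,k\in\{0,\dots,d-1\}$. Define Alice's $d^2$-outcome POVM $N^\rho$ on $\C^d$ by
\begin{equation}
N^\rho_{jk}:=\tfrac1d\,\overline{W_{jk}\rho W_{jk}^\dagger},
\end{equation}
with complex conjugation in the computational basis. These effects are positive semidefinite. Since $\sum_{jk}W_{jk}\sigma W_{jk}^\dagger=d\tr(\sigma)\eye_d$, they sum to $\eye_d$, so $N^\rho$ is a valid POVM. Using $\tr[\Phi_d(\bar P\otimes Q)]=\tfrac1d\tr(PQ)$, the same identity quoted in the main text for rank one, we get
\begin{equation}
p_{\Phi_d}(jk,b\mid N^\rho,M)=\tfrac{1}{d^2}\tr\bigl(W_{jk}\rho W_{jk}^\dagger M_b\bigr).
\end{equation}
Summing over $jk$ and rotating Bob's measurement jointly is the teleportation trick. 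Concretely, Bob will measure the POVM $M^{(jk)}:=(W_{jk}^\dagger M_bW_{jk})_b$, which gives
\begin{equation}
\tfrac1{d^2}\sum_{jk}\tr\bigl(W_{jk}\rho W_{jk}^\dagger W_{jk}^\dagger M_bW_{jk}\bigr).
\end{equation}
That expression is not right as written. The fix is to rotate by the inverse, so that the correction cancels:
\begin{equation}
\tr\bigl(W\rho W^\dagger\, W M_b W^\dagger\bigr)=\tr(\rho M_b),
\end{equation}
so Bob should measure $M^{(jk)}=(W_{jk}M_bW_{jk}^\dagger)_b$. Each term then equals $\tr(\rho M_b)/d^2$, and the $d^2$ terms sum to $\tr(\rho M_b)$.

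On the classical side, let the given Bell simulation have $(\Lambda,\mu,p_A,p_B)$ with message dimension $d_C$. The prepare-and-measure encoder, on input $\rho$, runs $p_A(jk,c\mid N^\rho,\lambda)$ and sends the pair $(jk,c)$, an alphabet of size $d^2d_C$. The decoder, on receiving $(jk,c)$ and $M$, outputs $b$ with probability $p_B(b\mid M^{(jk)},c,\lambda)$. These are valid conditional distributions, normalized because $\sum_{jk,c}p_A=1$ and $\sum_bp_B=1$. The resulting statistics are
\begin{equation}
\sum_{jk}\int\sum_c p_A(jk,c\mid N^\rho,\lambda)\,p_B(b\mid M^{(jk)},c,\lambda)\,\dd\mu=\sum_{jk}p_{\Phi_d}(jk,b\mid N^\rho,M^{(jk)}),
\end{equation}
by exact reproduction in \cref{def:bell-classical-simulation}, and this equals $\tr(\rho M_b)$ by the computation above.

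The main obstacle is the measurability requirement in \cref{def:classical-simulation}. The maps $\rho\mapsto N^\rho$ and $(M,jk)\mapsto M^{(jk)}$ are continuous into $\mathcal M_{d,d^2}$ and $\mathcal M_{d,n}$. Composing the Borel-measurable responses of the Bell model with these continuous maps therefore keeps joint measurability in $(\rho,\lambda)$ and $(M,\lambda)$, which settles it. The other point to check carefully is the conjugation convention, so that $\tr[\Phi_d(\bar A\otimes B)]=\tfrac1d\tr(AB)$ holds for general positive $A$ and not just projectors; this follows from $(A\otimes\eye)\ket{\Phi_d}=(\eye\otimes A^{T})\ket{\Phi_d}$.
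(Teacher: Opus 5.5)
Your proposal is correct and follows the paper's proof: you use the same Weyl-twirled $d^2$-outcome POVM (the paper writes it with a transpose, which coincides with your entrywise conjugate because $W_{jk}\rho W_{jk}^\dagger$ is Hermitian), the same rotated Bob measurements $W_{jk}M_bW_{jk}^\dagger$, and the same encoder that sends the pair $(jk,c)$. Your measurability check, composing the Bell responses with continuous maps, is a detail the paper leaves implicit; in a final write-up, delete the false start with $W_{jk}^\dagger M_b W_{jk}$.
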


\begin{proof}
Let
\begin{equation}
    X\ket{j}=\ket{j+1 \!\!\!\pmod d},
    \qquad
    Z\ket{j}=\omega^j\ket{j},
    \qquad
    \omega=e^{2\pi i/d},
\end{equation}
and define the $d^2$ unitaries
\begin{equation}
    U_{r,s}:=X^r Z^s,
    \qquad
    r,s\in\{0,\ldots,d-1\}.
\end{equation}
For convenience, we denote the pair $(r,s)$ by a single index
$k\in[d^2]$ and write the corresponding unitary as $U_k$.

These unitaries satisfy the identity
\begin{equation}
    \frac{1}{d^2}
    \sum_{k=1}^{d^2}
    U_k A U_k^\dagger
    =
    \frac{\tr(A)}{d}\eye_d
    \label{eq:weyl-twirl}
\end{equation}
for every operator $A\in\C^{d\times d}$ \cite{bennett1993teleporting}. In particular, for every quantum
state $\rho\in\mathcal S_d$,
\begin{equation}
    \frac{1}{d^2} \sum_{k=1}^{d^2}
     U_k\rho U_k^\dagger
    =
    \frac{\eye_d}{d}.
    \label{eq:weyl-state-twirl}
\end{equation}

Fix now an arbitrary state $\rho\in\mathcal S_d$. Associated with $\rho$,
consider on Alice's subsystem the $d^2$-outcome POVM
\begin{equation}
    E^\rho
    =
    \left\{
        E_k^\rho
    \right\}_{k=1}^{d^2},
    \qquad
    E_k^\rho
    :=
    \frac1d
    \left(
        U_k\rho U_k^\dagger
    \right)^\top,
    \label{eq:rsp-povm}
\end{equation}
where the transpose is taken in the computational basis used to define
$\ket{\Phi_d}$. By \cref{eq:weyl-state-twirl},
\begin{equation}
    \sum_{k=1}^{d^2}E_k^\rho=\eye_d,
\end{equation}
so $E^\rho$ is indeed a POVM.
A well-known identity for the maximally entangled state is
\begin{equation}
    \bra{\Phi_d}
    A\otimes B
    \ket{\Phi_d}
    =
    \frac1d\tr(A^\top B),
    \label{eq:max-ent-identity-general}
\end{equation}
valid for arbitrary operators $A,B\in\C^{d\times d}$.

Now let
\begin{equation}
    M=\{M_b\}_{b=1}^n\in\mathcal M_{d,n}
\end{equation}
be an arbitrary measurement that is to be performed on $\rho$. For each
$k\in[d^2]$, define
\begin{equation}
    M^{(k)}
    :=
    \left\{
        M_b^{(k)}
    \right\}_{b=1}^n,
    \qquad
    M_b^{(k)}
    :=
    U_k M_b U_k^\dagger.
    \label{eq:rotated-bob-measurement}
\end{equation}
Since unitary conjugation preserves positivity and normalization,
$M^{(k)}$ is a valid POVM.

For the pair of local measurements $(E^\rho,M^{(k)})$, the probability of the
joint outcome $(k,b)$ is
\begin{align}
    p_Q(k,b\mid E^\rho,M^{(k)})
    &=
    \bra{\Phi_d}
    E_k^\rho\otimes M_b^{(k)}
    \ket{\Phi_d}
    \nonumber\\
    &=
    \frac1d
    \tr\left[
        (E_k^\rho)^\top M_b^{(k)}
    \right]
    \nonumber\\
    &=
    \frac1{d^2}
    \tr\left[
        U_k\rho U_k^\dagger
        U_kM_bU_k^\dagger
    \right]
    \nonumber\\
    &=
    \frac1{d^2}\tr(\rho M_b).
    \label{eq:rsp-joint-probability}
\end{align}

By assumption, the local measurement statistics of $\ket{\Phi_d}$ admit a
classical simulation with message dimension $d_C$. Hence there exist shared
randomness $\lambda$ and response functions such that, for every pair of local measurements $E$ and $N$,
\begin{equation}
    p_Q(k,b\mid E,N)
    =
    \int_\Lambda
    \sum_{c=1}^{d_C}
    p_A(k,c\mid E,\lambda)
    p_B(b\mid N,c,\lambda)
    \,\dd\mu(\lambda).
    \label{eq:bell-model-rsp-proof}
\end{equation}
In particular, taking $E=E^\rho$ and $N=M^{(k)}$ gives
\begin{equation}
    \frac1{d^2}\tr(\rho M_b)
    =
    \int_\Lambda
    \sum_{c=1}^{d_C}
    p_A(k,c\mid E^\rho,\lambda)
    p_B(b\mid M^{(k)},c,\lambda)
    \,\dd\mu(\lambda).
    \label{eq:bell-model-rsp-specialized}
\end{equation}

We now turn this local measurement statistics simulation into the quantum communication protocol. Given
the state $\rho$, encoder of the classical simulation to the measurement $E^\rho$
and obtains a pair $(k,c)$. She sends both values to Bob, which requires a
message alphabet of dimension at most $d^2d_C$. Given the target measurement
$M$ and the received pair $(k,c)$, Bob applies the corresponding decoder associated
with $M^{(k)}$. Thus define
\begin{align}
    \widetilde p_A(k,c\mid\rho,\lambda)
    &:=
    p_A(k,c\mid E^\rho,\lambda),
    \\
    \widetilde p_B(b\mid M,k,c,\lambda)
    &:=
    p_B(b\mid M^{(k)},c,\lambda).
\end{align}

Finally, summing \cref{eq:bell-model-rsp-specialized} over $k$ yields
\begin{align}
    &
    \int_\Lambda
    \sum_{k=1}^{d^2}
    \sum_{c=1}^{d_C}
    \widetilde p_A(k,c\mid\rho,\lambda)
    \widetilde p_B(b\mid M,k,c,\lambda)
    \,\dd\mu(\lambda)
    \nonumber\\
    &\hspace{3cm}
    =
    \sum_{k=1}^{d^2}
    \frac1{d^2}
    \tr(\rho M_b)
    =
    \tr(\rho M_b).
\end{align}
This is an exact classical simulation of arbitrary $d$-dimensional quantum
communication with message dimension at most $d^2d_C$.
\end{proof}

The proposition above, together with the converse reduction discussed in
\cref{sec:bell-qutrit}, shows that, at the level of finiteness, the
classical simulation of $d$-dimensional quantum communication and the
classical simulation of all measurement statistics of $\ket{\Phi_d}$ are
equivalent, up to a finite overhead in the message dimension.

In particular, our impossibility result for quantum communication immediately
gives the following consequence.

\begin{corollary}[Infinite cost for maximally entangled ququarts]
\label{cor:bell-ququart-full-infinite}

The local measurement statistics of the maximally entangled ququart state $\ket{\Phi_4}$ do not admit a
classical simulation with any finite message dimension.
\end{corollary}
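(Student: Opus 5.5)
The plan is to argue by contradiction, combining \cref{prop:bell-to-pm-max-ent} with \cref{thm:d4-infinite}. Suppose that the local measurement statistics of $\ket{\Phi_4}$ admit a classical simulation, in the sense of \cref{def:bell-classical-simulation}, with some finite message dimension $d_C$. Applying \cref{prop:bell-to-pm-max-ent} with $d=4$ then produces an exact classical simulation of arbitrary $4$-dimensional quantum communication, in the sense of \cref{def:classical-simulation}. Its message dimension is at most $d^2 d_C = 16\,d_C$, which is finite. This gives $C_{\mathrm{cl}}(4)\le 16\,d_C<\infty$, contradicting $C_{\mathrm{cl}}(4)=\infty$ from \cref{thm:d4-infinite}. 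Hence no finite message dimension can suffice.

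The only step that needs care is checking that the protocol built in the proof of \cref{prop:bell-to-pm-max-ent} meets every requirement of \cref{def:classical-simulation}, and not merely the reproduction identity. Normalization holds because summing $p_A(k,c\mid E^\rho,\lambda)$ over $(k,c)$ gives $1$, and because each $p_B(\cdot\mid M^{(k)},c,\lambda)$ is a probability distribution. For measurability, the maps
\[
\rho\mapsto E^\rho=\tfrac14\bigl((U_k\rho U_k^\dagger)^\top\bigr)_k
\qquad\text{and}\qquad
M\mapsto M^{(k)}=(U_kM_bU_k^\dagger)_b
\]
are continuous, affine maps between the compact sets $\mathcal S_4\to\mathcal M_{4,16}$ and $\mathcal M_{4,n}\to\mathcal M_{4,n}$. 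Composing a jointly Borel-measurable response function with a continuous map in the first argument, while keeping the identity in $\lambda$, preserves $\mathcal B\otimes\Sigma_\Lambda$-measurability. Thus the new encoder and decoder are measurable as required. The message $(k,c)$ ranges over a set of size $16\,d_C$, which can be relabeled as $[16\,d_C]$.

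I do not expect a genuine obstacle here; the substance lies in \cref{thm:d4-infinite} and in the teleportation-type reduction already established. The point requiring the most attention is that the Bell simulation must hold for all $n_A$ and all POVMs. In particular it must hold for the $16$-outcome POVM $E^\rho$, which depends on $\rho$, and that dependence must enter only through Alice's local input. This matches \cref{def:bell-classical-simulation}, so the reduction applies.
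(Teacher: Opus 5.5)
Your proposal is correct and matches the paper's proof: assume a finite-$d_C$ simulation for $\ket{\Phi_4}$, apply \cref{prop:bell-to-pm-max-ent} with $d=4$ to get a ququart communication simulation with message dimension at most $16\,d_C$, and contradict \cref{thm:d4-infinite}. Your extra checks of normalization and measurability for the induced encoder and decoder are sound, and they fill in details the paper leaves implicit in the proof of the proposition.
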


\begin{proof}
If such a simulation existed with some finite $d_C$, then
\cref{prop:bell-to-pm-max-ent} would yield an exact classical simulation of
arbitrary ququart communication with message dimension at most $16d_C$.
This contradicts \cref{thm:d4-infinite}.
\end{proof}

The same argument, together with the higher-dimensional part of
\cref{thm:d4-infinite}, gives the corresponding statement in every larger
dimension.

\begin{corollary}[Higher-dimensional maximally entangled states]
\label{cor:bell-max-ent-full-higher-d}

For every $d\geq4$, the maximally entangled state $\ket{\Phi_d}$ does not
admit an exact classical simulation with any finite message dimension.
\end{corollary}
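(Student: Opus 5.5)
The plan mirrors \cref{cor:bell-ququart-full-infinite}, composing \cref{prop:bell-to-pm-max-ent} with \cref{thm:d4-infinite}, now for general $d\geq4$. Fix $d\geq4$ and suppose, toward a contradiction, that the local measurement statistics of $\ket{\Phi_d}$ admit an exact classical simulation in the sense of \cref{def:bell-classical-simulation}, with some finite message dimension $d_C$. Nothing in the proof of \cref{prop:bell-to-pm-max-ent} used a specific value of $d$: the Weyl--Heisenberg unitaries $U_k=X^rZ^s$, the twirling identity \cref{eq:weyl-twirl}, and the identity $\bra{\Phi_d}A\otimes B\ket{\Phi_d}=\tr(A^\top B)/d$ all hold for every $d\geq2$. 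Applying the proposition therefore yields an exact classical simulation of arbitrary $d$-dimensional quantum communication with message dimension at most $d^2d_C<\infty$.

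This means $C_{\mathrm{cl}}(d)\le d^2d_C<\infty$. But \cref{thm:d4-infinite} gives $C_{\mathrm{cl}}(d)=\infty$ for every $d\geq4$, via the embedding argument of \cref{rem:bigger_classical_and_quantum_dimensions}. This contradiction shows that no finite $d_C$ can exist.

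The only point needing care is that the derived protocol satisfies all conditions of \cref{def:classical-simulation}, in particular measurability of $\rho\mapsto\widetilde p_A(k,c\mid\rho,\lambda)=p_A(k,c\mid E^\rho,\lambda)$. I would check this by noting that $\rho\mapsto E^\rho$ is a continuous map $\mathcal S_d\to\mathcal M_{d,d^2}$. Composing the jointly measurable $p_A$ with the measurable map $(\rho,\lambda)\mapsto(E^\rho,\lambda)$ then preserves product-$\sigma$-algebra measurability. The same reasoning applies to $M\mapsto M^{(k)}$ on Bob's side.

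I expect this measurability bookkeeping to be the only mild obstacle; the rest is a direct composition of earlier results.
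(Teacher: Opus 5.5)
Your proposal is correct and follows the paper's own route: apply \cref{prop:bell-to-pm-max-ent} in dimension $d$ to get a simulation of $d$-dimensional quantum communication with message dimension at most $d^2 d_C$, which contradicts $C_{\mathrm{cl}}(d)=\infty$ for $d\geq4$ from \cref{thm:d4-infinite}. Your measurability check is also sound and fills in a detail the paper leaves implicit: $\rho\mapsto E^\rho$ and $M\mapsto M^{(k)}$ are continuous, so the composed response functions are measurable with respect to the product $\sigma$-algebras.
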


\subsection{Infinite cost already for binary projective measurements}
\label{sec:bell-ququart-binary}

The construction in \cref{sec:bell-ququart-full} relies on the
$d^2$-outcome POVM $E^\rho$ on Alice's side defined in
\cref{eq:rsp-povm}. While this is enough to establish that the statistics generated by local measurements in maximally
entangled ququart cannot be simulated with any finite message dimension, it
does not show whether the same obstruction already appears under simpler
measurement restrictions.

This question is particularly natural in view of the proof of
\cref{thm:d4-infinite}, where infinite classical communication is already
required in the quantum communication setting when restricted to pure states
and binary projective measurements with a rank-one effect, \cref{sec:ReductionPureStatesAndMeas_ququart}. We now show that the same phenomenon occurs for the local measurement statistics of the maximally entangled ququart. More precisely, the statistics generated by binary projective measurements with a rank-one effect on the maximally
entangled ququart state already admits no exact classical simulation with a finite
message alphabet. The proof closely follows that of \cref{thm:d4-infinite}, again relying on
the zero-slope argument of \cref{lem:zero-slope}.

We use the notation introduced in
\cref{sec:ReductionPureStatesAndMeas_ququart}. In particular, for
$x,y\in\CP^3$,
\begin{equation}
    P_x=\ketbra{x},
    \qquad
    f(x,y)=\tr(P_xP_y)=\abs{\langle x|y\rangle}^2.
    \label{eq:bell-fidelity}
\end{equation}

Fix the computational basis $\{\ket{j}\}_{j=0}^3$ and let $\bar{x}$ denote the ray obtained by entrywise complex conjugation in this basis. Alice and Bob perform the binary projective measurements
\begin{equation}
    M_x^A
    =
    \left\{
        P_{\bar{x}},
        \eye_4-P_{\bar{x}}
    \right\},
    \qquad
    M_y^B
    =
    \left\{
        P_y,
        \eye_4-P_y
    \right\}.
    \label{eq:bell-binary-measurements}
\end{equation}
We label the rank-one effect in each measurement by the outcome $0$.\footnote{The complex conjugation on Alice's side is only a convenient parametrization, chosen so that the joint probability below is directly
expressed in terms of the fidelity $f(x,y)$. One could instead use $P_x$; this would replace $f(x,y)$ by $f(\bar{x},y)$ without affecting the argument, while making the parallel with \cref{thm:d4-infinite}
slightly less transparent.}

For $\ket{\Phi_4}$, \cref{eq:max-ent-identity-general} gives
\begin{equation}
    \bra{\Phi_4}
    A\otimes B
    \ket{\Phi_4}
    =
    \frac14\tr(A^\top B).
    \label{eq:max-ent-transpose-identity}
\end{equation}
Since
\begin{equation}
    P_{\bar{x}}^\top=P_x,
\end{equation}
the probability of the joint outcome $a=b=0$ is
\begin{align}
    p_Q(0,0\mid x,y)
    &=
    \bra{\Phi_4}
    P_{\bar{x}}\otimes P_y
    \ket{\Phi_4}
    \nonumber\\
    &=
    \frac14\tr(P_xP_y)
    =
    \frac14 f(x,y).
    \label{eq:bell-ququart-target}
\end{align}

Thus, up to the constant factor $1/4$, the same fidelity dependence that
arises in the problem of classically simulating ququart quantum communication
also appears here. The proof of \cref{thm:d4-infinite} can therefore be
closely followed, again using the zero-slope argument of
\cref{lem:zero-slope}, to show that the probabilities in
\cref{eq:bell-ququart-target} cannot be reproduced exactly with any finite
message alphabet.

\begin{theorem}[Infinite communication with binary projective measurements]
\label{thm:bell-ququart-binary-infinite}

The local measurement statistics generated by $\ket{\Phi_4}$ and the family of binary projective measurements in \cref{eq:bell-binary-measurements} do not admit an exact classical simulation with any finite message dimension $d_C$.
\end{theorem}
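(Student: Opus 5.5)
We argue by contradiction, following the proof of \cref{thm:d4-infinite} almost verbatim. Suppose a finite $d_C$ and a model as in \cref{def:bell-classical-simulation} reproduce the statistics of $\ket{\Phi_4}$ for all measurements $M_x^A$, $M_y^B$ in \cref{eq:bell-binary-measurements}. Restricting to the outcome $a=b=0$ and using \cref{eq:bell-ququart-target}, we obtain for every $x,y\in\CP^3$
\begin{equation*}
\frac14 f(x,y)=\sum_{c=1}^{d_C}\int_\Lambda A_{\lambda,c}(x)\,B_{\lambda,c}(y)\,\dd\mu(\lambda),
\end{equation*}
where $A_{\lambda,c}(x):=p_A(0,c\mid M_x^A,\lambda)$ and $B_{\lambda,c}(y):=p_B(0\mid M_y^B,c,\lambda)$.

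The first step is to check that these are legitimate inputs to \cref{lem:zero-slope}. Both take values in $[0,1]$. For measurability, compose the measurable response functions with the continuous maps $x\mapsto M_x^A$ and $y\mapsto M_y^B$ into $\mathcal M_{4,2}$; since $x\mapsto \bar x$ is a homeomorphism of $\CP^3$, the first map is continuous. Joint measurability in $(x,\lambda)$ is preserved, which is what Tonelli requires. Alice's normalization now runs over the pair $(a,c)$ rather than $c$ alone, but \cref{lem:zero-slope} only uses $0\le A\le 1$, so this causes no difficulty.

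Next, integrate the identity against $\eta_t$ and exchange integrals by Tonelli, exactly as in \cref{eq:averaged-factorization}. This gives
\begin{equation*}
\frac t4=\sum_{c=1}^{d_C}\int_\Lambda a_{\lambda,c}(t)\,\dd\mu(\lambda),
\qquad a_{\lambda,c}(t):=a_{A_{\lambda,c},B_{\lambda,c}}(t).
\end{equation*}
At $t=0$, nonnegativity of each branch forces $a_{\lambda,c}(0)=0$ for all $c$ and for $\lambda$ outside a $\mu$-null set $\Gamma$; finiteness of $d_C$ is used here to take the finite union of null sets. \Cref{lem:zero-slope} then gives $a_{\lambda,c}(t)/t\to0$ together with the uniform bound $0\le a_{\lambda,c}(t)/t\le 8$ for $0<t\le\frac12$.

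Finally, divide by $t$ to get $\frac14=\sum_c\int_{\Lambda\setminus\Gamma}a_{\lambda,c}(t)/t\,\dd\mu$. Dominated convergence with the constant $8$, applied to each of the finitely many $c$, sends the right side to $0$ as $t\to0^+$. This contradicts its constant value $\tfrac14$.

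No step is a genuine obstacle, since the analytic content sits entirely in \cref{lem:zero-slope}. The only points needing care are the measurability of the reparametrized responses and the observation that the lemma never uses a sum-to-one normalization of $A$. The extra factor $\tfrac14$ is a positive constant and plays no role.
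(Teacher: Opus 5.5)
Your proposal is correct and follows the paper's proof essentially step by step: you restrict to the outcome $a=b=0$ and average the factorization against $\eta_t$ via Tonelli. You then use nonnegativity at $t=0$ to kill each branch off a $\mu$-null set, and conclude with \cref{lem:zero-slope} plus dominated convergence against the constant $8$. Your extra remarks, on measurability of $x\mapsto M_x^A$ and on the lemma needing only $0\le A\le 1$ rather than a normalization over $c$, are correct details that the paper leaves implicit.
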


\begin{proof}
Assume, toward a contradiction, that such a simulation exists for some
$d_C<\infty$. Applying \cref{eq:bell-classical-simulation} to the joint
outcome $a=b=0$ and the measurements in
\cref{eq:bell-binary-measurements}, and using
\cref{eq:bell-ququart-target}, gives, for every $x,y\in\CP^3$,
\begin{equation}
    \frac14 f(x,y)
    =
    \int_\Lambda
    \sum_{c=1}^{d_C}
    p_A(0,c\mid x,\lambda)
    p_B(0\mid y,c,\lambda)
    \,\dd\mu(\lambda).
    \label{eq:bell-factorization}
\end{equation}

For every $\lambda\in\Lambda$ and $c\in[d_C]$, define
\begin{align}
    A_{\lambda,c}(x)
    &:=
    p_A(0,c\mid x,\lambda),
    \\
    B_{\lambda,c}(y)
    &:=
    p_B(0\mid y,c,\lambda).
\end{align}

For each fixed $(\lambda,c)$, let
\begin{equation}
    a_{\lambda,c}(t)
    :=
    \int_{\CP^3\times\CP^3}
    A_{\lambda,c}(x)
    B_{\lambda,c}(y)
    \,\dd\eta_t(x,y),
    \label{eq:bell-branch-curve}
\end{equation}
where $\eta_t$ is the fixed-fidelity measure introduced in
\cref{sec:fixed-fidelity-pairs}.

We now average \cref{eq:bell-factorization} with respect to $\eta_t$.
On the left-hand side, $f(x,y)=t$ on the support of $\eta_t$. On the
classical side, the integrand is measurable and nonnegative, so the
Fubini--Tonelli theorem allows us to exchange the integrations over
$\CP^3\times\CP^3$ and $\Lambda$. Hence,
\begin{equation}
    \frac{t}{4}
    =
    \sum_{c=1}^{d_C}
    \int_\Lambda
    a_{\lambda,c}(t)
    \,\dd\mu(\lambda).
    \label{eq:bell-averaged-factorization}
\end{equation}

At $t=0$,
\begin{equation}
    0
    =
    \sum_{c=1}^{d_C}
    \int_\Lambda
    a_{\lambda,c}(0)
    \,\dd\mu(\lambda).
    \label{eq:bell-average-zero}
\end{equation}
Every term in this expression is nonnegative. Therefore, for every
$c\in[d_C]$,
\begin{equation}
    a_{\lambda,c}(0)=0
\end{equation}
for $\mu$-almost every $\lambda$.

For each $c\in[d_C]$, define
\begin{equation}
    \Gamma_c
    :=
    \left\{
        \lambda\in\Lambda:
        a_{\lambda,c}(0)>0
    \right\},
\end{equation}
and let
\begin{equation}
    \Gamma
    :=
    \bigcup_{c\in[d_C]}\Gamma_c.
\end{equation}
Since $d_C$ is finite, $\mu(\Gamma)=0$. Thus, for every
$\lambda\in\Lambda\setminus\Gamma$ and every $c\in[d_C]$, the hypothesis
of \cref{lem:zero-slope} is satisfied. Consequently,
\begin{equation}
    \frac{a_{\lambda,c}(t)}{t}
    \longrightarrow0
    \qquad
    \text{as }t\to0^+,
    \label{eq:bell-branch-zero-slope}
\end{equation}
and, for $0<t\leq\frac12$,
\begin{equation}
    0
    \leq
    \frac{a_{\lambda,c}(t)}{t}
    \leq8.
    \label{eq:bell-branch-uniform-bound}
\end{equation}

Dividing \cref{eq:bell-averaged-factorization} by $t>0$ gives
\begin{equation}
    \frac14
    =
    \sum_{c=1}^{d_C}
    \int_{\Lambda\setminus\Gamma}
    \frac{a_{\lambda,c}(t)}{t}
    \,\dd\mu(\lambda).
    \label{eq:bell-final-average}
\end{equation}
By \cref{eq:bell-branch-zero-slope,eq:bell-branch-uniform-bound} and the
dominated convergence theorem, for every $c\in[d_C]$,
\begin{equation}
    \lim_{t\to0^+}
    \int_{\Lambda\setminus\Gamma}
    \frac{a_{\lambda,c}(t)}{t}
    \,\dd\mu(\lambda)
    =
    0.
\end{equation}
Since the number of messages is finite, we may also pass the limit through
the sum over $c$. Therefore, the right-hand side of
\cref{eq:bell-final-average} converges to zero as $t\to0^+$, contradicting
the fact that its left-hand side is equal to $1/4$.

Hence no finite message dimension can reproduce these statistics exactly.
\end{proof}

The same restricted family can be embedded into maximally entangled systems
of any larger local dimension.

\begin{corollary}[Higher-dimensional maximally entangled states]
\label{cor:bell-max-ent-binary-higher-d}

For every $d\geq4$, there exists a family of binary local projective
measurements on $\ket{\Phi_d}$ whose local measurement statistics cannot be simulated with any finite amount of one-way classical communication.
\end{corollary}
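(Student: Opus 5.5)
The plan is to reduce to Theorem~\ref{thm:bell-ququart-binary-infinite} by embedding the ququart into $\C^d$, exactly as \cref{rem:bigger_classical_and_quantum_dimensions} does in the communication setting. The one new point is that $\ket{\Phi_d}$ does not restrict to $\ket{\Phi_4}$, so the constant factor changes. For $x,y\in\CP^3$, regarded as rays in $\C^d$ via the inclusion $\C^4\subset\C^d$ spanned by the first four computational basis vectors, I take Alice's and Bob's measurements to be
\begin{equation}
M^A_x=\{P_{\bar x},\eye_d-P_{\bar x}\},
\qquad
M^B_y=\{P_y,\eye_d-P_y\}.
\end{equation}
Both are rank-one binary projective measurements on $\C^d$.

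By \cref{eq:max-ent-identity-general}, the joint probability of the outcome $a=b=0$ is
\begin{equation}
\bra{\Phi_d}P_{\bar x}\otimes P_y\ket{\Phi_d}=\tfrac1d\tr(P_{\bar x}^\top P_y)=\tfrac1d f(x,y).
\end{equation}
The transpose and the conjugation are both taken in the computational basis, so they commute with the inclusion and $P_{\bar x}^\top=P_x$ still holds. The target is therefore the same fidelity function as before, now with the factor $1/d$ in place of $1/4$.

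Next I suppose a simulation with finite $d_C$ exists. For each $\lambda$ and $c$, I define $A_{\lambda,c}(x):=p_A(0,c\mid M^A_x,\lambda)$ and $B_{\lambda,c}(y):=p_B(0\mid M^B_y,c,\lambda)$ as functions on $\CP^3$. These are measurable because $x\mapsto M^A_x$ and $y\mapsto M^B_y$ are continuous maps into $\mathcal M_{d,2}$, and composing a product-measurable function with a continuous map preserves measurability. They also take values in $[0,1]$.

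From there I repeat the proof of Theorem~\ref{thm:bell-ququart-binary-infinite} word for word, with $1/4$ replaced by $1/d$:
\begin{itemize}
\item average the simulation identity over $\eta_t$ on $\CP^3\times\CP^3$ to get $t/d=\sum_c\int_\Lambda a_{\lambda,c}(t)\,\dd\mu(\lambda)$;
\item evaluate at $t=0$ and use nonnegativity to conclude $a_{\lambda,c}(0)=0$ for all $c$ and $\mu$-almost every $\lambda$;
\item apply \cref{lem:zero-slope}, which is stated on $\CP^3$, to each such branch to get $a_{\lambda,c}(t)/t\to0$ together with the uniform bound $a_{\lambda,c}(t)/t\le8$;
\item use dominated convergence and the finiteness of the sum over $c$ to make the right-hand side of $1/d=\sum_c\int_\Lambda a_{\lambda,c}(t)/t\,\dd\mu$ tend to $0$, which is a contradiction.
\end{itemize}

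The key design choice is to work with the restriction to the embedded $\CP^3$, where \cref{lem:zero-slope} already applies. This avoids needing the general-$d$ spectral estimates, which \cref{sec:dimension-threshold} only sketches.

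The main thing to check is that the joint probability really is proportional to $f(x,y)$ for every $x,y$ in the family. The only dimension-dependent change is the constant $1/d$, which is positive and so plays no role in the zero-slope argument. A second point to check is that restricting the simulation model to this subfamily still yields measurable response functions on $\CP^3$; this is the continuity argument given above.
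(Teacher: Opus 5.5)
Your proposal is correct and follows essentially the same route as the paper: embed $\CP^3$ into $\C^d$, note that the joint probability becomes $\frac1d f(x,y)$, and rerun the proof of Theorem~\ref{thm:bell-ququart-binary-infinite} with the measures $\eta_t$ on the embedded $\CP^3\times\CP^3$, so that \cref{lem:zero-slope} applies unchanged. You also make explicit two points the paper leaves implicit: the choice of the span of the first four computational basis vectors, and measurability via the continuous maps $x\mapsto M^A_x$ and $y\mapsto M^B_y$.
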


\begin{proof}
Fix a four-dimensional subspace of $\C^d$ and identify it with $\C^4$.
Restrict $x,y$ to rays in the corresponding copy of $\CP^3$, and define the
measurements as in \cref{eq:bell-binary-measurements}, with the complementary
effects now taken with respect to $\eye_d$.

For these measurements,
\begin{equation}
    p_Q(0,0\mid x,y)
    =
    \bra{\Phi_d}
    P_{\bar{x}}\otimes P_y
    \ket{\Phi_d}
    =
    \frac1d\tr(P_xP_y)
    =
    \frac1d f(x,y).
\end{equation}
If a finite classical simulation existed, averaging its decomposition over
the same measures $\eta_t$ on $\CP^3\times\CP^3$ would give
\begin{equation}
    \frac{t}{d}
    =
    \sum_{c=1}^{d_C}
    \int_\Lambda
    a_{\lambda,c}(t)
    \,\dd\mu(\lambda).
\end{equation}
The proof of \cref{thm:bell-ququart-binary-infinite} then applies without
change: \cref{lem:zero-slope} forces the right-hand side divided by $t$ to
converge to zero as $t\to0^+$, whereas the left-hand side divided by $t$ is
the nonzero constant $1/d$.
\end{proof}

\subsection{Relation with the Regev--Toner simulation of binary correlators}
\label{sec:regev-toner}

A particularly important comparison is with the work of Regev and Toner
\cite{RegevToner}. They showed that, for an arbitrary bipartite state on
$\C^d\otimes\C^d$ and arbitrary binary local measurements, the corresponding
quantum correlator can be reproduced exactly with only two bits of one-way
classical communication, for every finite $d$.

This result applies, in particular, to the maximally entangled ququart and
the binary projective measurements considered in
\cref{sec:bell-ququart-binary}. Nevertheless, it does not contradict
\cref{thm:bell-ququart-binary-infinite}. As already emphasized by Regev and
Toner, the classical communication protocol they construct reproduces only
the binary correlator. The resulting classical model has uniform marginals,
which in general differ from the corresponding quantum marginals, and
therefore does not reproduce the full local measurement statistics.

What is particularly striking is the abrupt separation exhibited by this binary-outcome setting: while its correlators can be reproduced exactly with only two bits of classical communication, reproducing the complete quantum statistics requires an infinite classical message dimension.

\subsubsection{The Regev--Toner simulation problem}

Consider local measurement statistics with binary outcomes $a,b\in\{0,1\}$. The correlator associated with the inputs $x$ and $y$ is
\begin{align}
    \langle \alpha_x\beta_y\rangle
    &:=
    p(0,0\mid x,y)
    +
    p(1,1\mid x,y)
    -
    p(0,1\mid x,y)
    -
    p(1,0\mid x,y).
    \label{eq:binary-correlator}
\end{align}

The simulation problem considered by Regev and Toner is to reproduce these
correlators, rather than the complete conditional distribution
$p(a,b\mid x,y)$. They show that, for arbitrary bipartite quantum states of
finite local dimension and arbitrary binary local measurements, all such
quantum correlators can be reproduced exactly using only two bits of
one-way classical communication \cite{RegevToner}.

Reproducing the correlator is, however, strictly weaker than reproducing the
full local measurements statistics. For binary outcomes, the conditional distribution is
determined not only by $\langle\alpha_x\beta_y\rangle$, but also by the two
local averages $\langle\alpha_x\rangle : p(0|x) - p(1|x)$ and $\langle\beta_y\rangle := p(0|y) - p(1|y)$. More
explicitly,
\begin{equation}
    p(a,b\mid x,y)
    =
    \frac14
    \left[
        1
        +
        (-1)^a\langle\alpha_x\rangle
        +
        (-1)^b\langle\beta_y\rangle
        +
        (-1)^{a+b}\langle\alpha_x\beta_y\rangle
    \right].
    \label{eq:binary-behavior-moments}
\end{equation}

Regev and Toner explicitly point out this limitation. The classical
communication protocol they construct reproduces the desired correlators,
but its local marginals are uniform,
\begin{equation}
    p_C(a\mid x)
    =
    p_C(b\mid y)
    =
    \frac12,
    \label{eq:RT-uniform-marginals}
\end{equation}
for all $a,b,x,y$. These marginals need not coincide with the corresponding
quantum marginals, and therefore their protocol does not in general
reproduce the complete quantum statistics.

In the next subsection, we make this distinction explicit for the maximally
entangled ququart and the binary rank-one projective measurements appearing
in \cref{thm:bell-ququart-binary-infinite}, by writing out the full quantum
behavior and comparing its marginals with those of the Regev--Toner model.

\subsubsection{The ququart statistics: correlator and marginals}

Consider now the binary local measurement statistics generated by the maximally entangled
ququart state $\ket{\Phi_4}$ and the binary projective measurements introduced in
\cref{eq:bell-binary-measurements},
\begin{equation}
    M_x^A
    =
    \left\{
        P_{\bar{x}},
        \eye_4-P_{\bar{x}}
    \right\},
    \qquad
    M_y^B
    =
    \left\{
        P_y,
        \eye_4-P_y
    \right\},
    \label{eq:RT-ququart-measurements}
\end{equation}
where outcome $0$ corresponds to the rank-one effect.

The probability of the joint outcome
$(0,0)$ is
\begin{align}
    p_Q(0,0\mid x,y)
    &=
    \bra{\Phi_4}
    P_{\bar{x}}\otimes P_y
    \ket{\Phi_4} =
    \frac14\tr(P_xP_y)
    =
    \frac14\abs{\langle x|y\rangle}^2.
    \label{eq:RT-ququart-p00}
\end{align}

The other three probabilities can be obtained in the same way. Namely,
\begin{align}
    p_Q(0,1\mid x,y)
    &=
    \frac14
    \tr\!\left[
        P_x(\eye_4-P_y)
    \right]
    =
    \frac{1-\abs{\langle x|y\rangle}^2}{4},
    \label{eq:RT-ququart-p01}
    \\
    p_Q(1,0\mid x,y)
    &=
    \frac14
    \tr\!\left[
        (\eye_4-P_x)P_y
    \right]
    =
    \frac{1-\abs{\langle x|y\rangle}^2}{4},
    \label{eq:RT-ququart-p10}
    \\
    p_Q(1,1\mid x,y)
    &=
    \frac14
    \tr\!\left[
        (\eye_4-P_x)(\eye_4-P_y)
    \right]
    =
    \frac{2+\abs{\langle x|y\rangle}^2}{4}.
    \label{eq:RT-ququart-p11}
\end{align}

The local marginals follow immediately:
\begin{align}
    p_Q(0\mid x)
    &=
    p_Q(0,0\mid x,y)
    +
    p_Q(0,1\mid x,y)
    =
    \frac14,
    \\
    p_Q(0\mid y)
    &=
    p_Q(0,0\mid x,y)
    +
    p_Q(1,0\mid x,y)
    =
    \frac14.
\end{align}

This should be contrasted with the classical communication model of
Regev and Toner, whose marginals satisfy
\cref{eq:RT-uniform-marginals}. Thus, although the model reproduces the
correct correlator, it cannot reproduce the quantum statistics above, since
the local marginals are different.

It is nevertheless interesting that the correlator of this quantum statistics recovers precisely
the fidelity function that played a central role in the previous arguments.
Using \cref{eq:binary-correlator} together with
\crefrange{eq:RT-ququart-p00}{eq:RT-ququart-p11}, we obtain
\begin{align}
    \langle\alpha_x\beta_y\rangle_Q
    &=
    p_Q(0,0\mid x,y)
    +
    p_Q(1,1\mid x,y)
    -
    p_Q(0,1\mid x,y)
    -
    p_Q(1,0\mid x,y)
    \nonumber\\
    &=
    \xy^2\\
    &=
    f(x,y).
    \label{eq:RT-ququart-correlator}
\end{align}

By the Regev--Toner result, the correlator
$\langle\alpha_x\beta_y\rangle_Q=f(x,y)$ can be reproduced exactly using
only two bits of one-way classical communication. In sharp contrast, the
same local measurement statistics satisfy
$p_Q(0,0\mid x,y)=f(x,y)/4$, and
\cref{thm:bell-ququart-binary-infinite} shows that reproducing this joint
probability within an exact classical simulation requires an infinite
classical message dimension.

Given that the correlator is again exactly the fidelity $f(x,y)$, it is
natural to ask why the zero-slope argument used in
\cref{thm:d4-infinite,thm:bell-ququart-binary-infinite} cannot be applied
directly to it. We address this point in the next subsection.

\subsubsection{Why the zero-slope argument does not apply to correlators}

The identity
\begin{equation}
    \langle\alpha_x\beta_y\rangle_Q
    =
    f(x,y)
\end{equation}
may suggest applying directly the same zero-slope argument used in
\cref{thm:d4-infinite,thm:bell-ququart-binary-infinite}.
Such an argument would, however, imply that this correlator cannot be
simulated with finite classical communication, in direct contradiction with
the Regev--Toner result. The obstruction to applying the argument can be
seen explicitly at the level of the classical decomposition.

Consider a one-way classical simulation with finite message alphabet
$c\in[d_C]$. For every $\lambda$ and $c$, define
\begin{align}
    A_{\lambda,c}(x)
    &:=
    p_A(0,c\mid x,\lambda)
    -
    p_A(1,c\mid x,\lambda),
    \\
    B_{\lambda,c}(y)
    &:=
    p_B(0\mid y,c,\lambda)
    -
    p_B(1\mid y,c,\lambda).
\end{align}
The classical correlator can then be written as
\begin{equation}
    \langle\alpha_x\beta_y\rangle_C
    =
    \sum_{c=1}^{d_C}
    \int_\Lambda
    A_{\lambda,c}(x)
    B_{\lambda,c}(y)
    \,\dd\mu(\lambda).
    \label{eq:RT-signed-correlator-factorization}
\end{equation}
In contrast with the decomposition of the joint probability in
\cref{eq:bell-factorization}, the functions appearing here are not
nonnegative. Instead,
\begin{equation}
    A_{\lambda,c}(x),B_{\lambda,c}(y)\in[-1,1].
\end{equation}

To make the comparison with the previous proofs explicit, define
\begin{equation}
    r_{\lambda,c}(t)
    :=
    \int_{\CP^3\times\CP^3}
    A_{\lambda,c}(x)
    B_{\lambda,c}(y)
    \,\dd\eta_t(x,y).
    \label{eq:RT-correlator-branch}
\end{equation}
Averaging \cref{eq:RT-signed-correlator-factorization} over the
fixed-fidelity measure $\eta_t$ and using
\cref{eq:RT-ququart-correlator} gives
\begin{equation}
    t
    =
    \sum_{c=1}^{d_C}
    \int_\Lambda
    r_{\lambda,c}(t)
    \,\dd\mu(\lambda).
    \label{eq:RT-correlator-average}
\end{equation}
At $t=0$, this yields
\begin{equation}
    0
    =
    \sum_{c=1}^{d_C}
    \int_\Lambda
    r_{\lambda,c}(0)
    \,\dd\mu(\lambda).
    \label{eq:RT-correlator-zero}
\end{equation}

This is precisely where the argument differs from the proof of
\cref{thm:bell-ququart-binary-infinite}. There, each contribution to
$p_C(0,0\mid x,y)$ is nonnegative. Consequently, the vanishing of the total
average at $t=0$ forces each contribution to vanish individually, which
allows \cref{lem:zero-slope} to be applied to each term separately. In
\cref{eq:RT-correlator-zero}, by contrast, the quantities
$r_{\lambda,c}(0)$ may have either sign. The vanishing of their sum may
therefore result from cancellations, and no termwise vanishing condition
can be inferred.

Moreover, even if a particular signed contribution happened to satisfy
$r_{\lambda,c}(0)=0$, the zero-slope lemma would still not apply, since its
nonnegativity hypothesis is essential. As shown explicitly in
\cref{rmk:zero-slope-sharpness}, there exist functions
$A,B:\CP^3\to[-1,1]$ for which
\begin{equation}
    a_{A,B}(0)=0
\end{equation}
but
\begin{equation}
    \lim_{t\to0^+}
    \frac{a_{A,B}(t)}{t}
    >0.
\end{equation}
Thus, once signed functions are allowed, the zero-slope conclusion itself
fails.

%% file: 4_appendix_d3proof_v3.tex
We consider the definitions in \cref{app:preliminaries} and let $d_Q = 3$.
Alice prepares a qutrit state $\rho$, and Bob performs an arbitrary finite-outcome qutrit POVM $M=\{M_b\}_{b\in B}$, where $M_b\geq0$ and $\sum_{b\in B}M_b=\eye_3$.
The quantum behavior is then
\begin{equation}
	p_Q(b\mid \rho, M) = \tr(M_b \, \rho)
	\label{eq:d3-povm-target}
\end{equation}
In this section, we construct a protocol that can perfectly simulate \cref{eq:d3-povm-target} using only shared randomness and a finite amount of classical communication.
More specifically, we prove the following theorem.

\begin{theorem}[Finite classical communication cost of simulating qutrit communication]
	\label{thm:d3-finite}
	There exists a shared randomness space $\Lambda$ with distribution $\mu$, an encoder $p_A$, and a decoder $p_B$ such that \cref{eq:classical-simulation} reproduces \cref{eq:d3-povm-target} for every qutrit state $\rho$ and every finite-outcome qutrit POVM $M$, with classical communication of dimension
	\begin{equation}
		d_C<2^{357}.
	\end{equation}
	Thus $357$ one-way classical bits suffice to simulate qutrit communication.
\end{theorem}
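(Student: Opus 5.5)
The proof proceeds in three layers: the joint-acceptance lemma, the finite binary protocol, and the extension to arbitrary states and POVMs.

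\textbf{Step 1: the joint-acceptance identity.} First I establish \cref{lem:qutrit-joint-acceptance}: for $x,y\in\CP^2$ and $\theta=(h,\alpha,\beta)\sim\nu$,
\[
\Pr_\nu\{|\langle x|h\rangle|^2>\alpha,\ |\langle y|h\rangle|^2>\beta\}=\tfrac{1}{18}\xy^2 .
\]
To compute it, I integrate out the thresholds first. For fixed $h$, set $a=|\langle x|h\rangle|^2$ and $b=|\langle y|h\rangle|^2$. The $\nu$-probability of the event $\{a>\alpha,\ b>\beta\}$ is an explicit piecewise-polynomial kernel $k(a,b)$: it combines the line part and the triangle part with weights $1/3$ and $2/3$. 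By unitary invariance the remaining average over Haar-random $h$ depends only on $t=\xy^2$. I evaluate it by harmonic analysis on $\CP^2$: expand $k$ in the zonal functions $P_\ell^{(1,0)}$. The weights $1/3,2/3$ are chosen exactly so that only the $\ell=0,1$ components survive, which leaves an affine function of $t$. Its values at $t=0$ and $t=1$ then give $t/18$. I expect this to be the most computation-heavy step, and I would verify it by direct integration in coordinates.

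\textbf{Step 2: the finite binary protocol.} This is the protocol in the Methods section, and I must specify the reserved slices $\mathrm S_x$ and the constant $\gamma$. The fallback branch with threshold $\alpha_{\mathrm{fb}}\in[3/4,5/6]$ always accepts, by the cover of \cref{lem:finite-cap-cover}. The fallback coordinate, viewed as a distribution over $(h,\alpha,\beta)$ conditioned on $x$, is however not the ideal conditional law $\nu(\cdot\mid\text{Alice accepts})$.

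The plan is to write the ideal conditional law as a mixture
\[
\nu(\cdot\mid A_x)=\gamma'\,\kappa_x+(1-\gamma')\,\text{(rest)},
\]
where $\kappa_x$ is a component that the fallback can reproduce exactly. Concretely, $\kappa_x$ is the part of the line component of $\nu$ with $\alpha\in[3/4,5/6]$ and $h$ near $x$; the set $\mathrm S_x$ is defined to be this region. The ordinary branch must then produce exactly the rest. To arrange this:
\begin{itemize}
\item Alice searches only for accepted samples outside $\mathrm S_x$. Because samples are i.i.d., the first such sample has law $\nu(\cdot\mid A_x\setminus\mathrm S_x)$, independently of whether the search succeeds.
\item The search succeeds with probability $1-(1-q_x)^N$, where $q_x=\nu(A_x\setminus \mathrm S_x)$.
\item Thinning by $\Gamma\le\gamma$ reduces this to a constant $\gamma(1-(1-q_x)^N)$, which must be at most the required weight $1-\gamma'$ uniformly in $x$.
\item The leftover probability goes to the fallback.
\end{itemize}
The fallback must then produce the law $\nu(\cdot\mid \mathrm S_x\cap A_x)$ mixed with whatever residual is needed. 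Bob's output depends only on $|\langle y|h\rangle|^2$ versus $\beta$, so it suffices to match Bob's acceptance probability $\Pr\{|\langle y|h\rangle|^2>\beta\}$, rather than the full law. I would therefore design the fallback's $h$-distribution (uniform in the ball) and the reserved slice so that both induce the same expected response. If exact matching is awkward, I would instead define $\mathrm S_x$ as the set where the fallback's distribution is dominated by $\nu$, and use rejection sampling inside it.

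The choice $N=112$ comes from requiring $(1-q_x)^N$ to be small enough, uniformly in $x$, with $q_x\ge 1/18-\nu(\mathrm S_x)$. Getting this bookkeeping exact and $x$-uniform is the main obstacle.

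\textbf{Step 3: mixed states and general POVMs.} Mixed states reduce to pure states by having Alice sample an eigenvector according to the spectrum; this is encoder-side randomness and costs no communication. For a POVM $\{M_b\}$, I decompose each effect into rank-one pieces. I then simulate the outcome sequentially as a binarized tree of binary rank-one tests, each realized by an independent copy of the binary protocol. The rank-one pieces come from the spectral decomposition, with at most $27$ relevant binary decisions after a Carathéodory-type reduction; this accounts for the exponent $26$ plus an extra cover factor. Since Alice does not know $M$, she cannot run the tree herself. Instead, she sends the messages of all copies, each depending only on $x$ and independent shared randomness, and Bob computes the adaptive conditionals, using normalization of rank-one projective pieces. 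Multiplying the alphabet sizes gives $d_C=9216\cdot 9328^{26}<2^{357}$.
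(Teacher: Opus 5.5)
There is a genuine gap in Step 3. Bob cannot run a sequential tree of binary tests. After rejecting $b_1$, the next node needs probability $\gamma_2\abs{\langle x|y_2\rangle}^2/(1-\gamma_1\abs{\langle x|y_1\rangle}^2)$. An independent copy of the binary protocol only gives Bob a bit with mean $\abs{\langle x|y\rangle}^2$ for a ray $y$ of his choosing, and he does not know $x$, so he cannot realize this ratio. Alice cannot help either, since she does not know $M$.

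Evaluating all effects on a single copy does not make Bob's output normalized pointwise. On a given coordinate $(h,\alpha,\beta)$, the responses $\1\{\abs{\langle y_b|h\rangle}^2>\beta\}$ may all fail or several may succeed. Hence $W=\sum_b\gamma_b\,p_B^{\mathrm{bin}}(0\mid c,y_b,\lambda)$ equals $1$ only on average. Turning correct averages into a pointwise valid decoder with a finite alphabet is the real content of the extension. Your proposal has no mechanism for this, nor for the event that a truncated search over copies fails.

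The paper handles it as follows:
\begin{itemize}
\item The weights $w_b$ computed from the same message become sub-normalized probabilities $\widehat w_b/4$, plus a reject symbol $\perp$.
\item Since $\int\sum_b\widehat w_b\,\dd\mu_{\mathrm{bin}}=1$, each of $n$ independent copies rejects with probability exactly $3/4$, regardless of $x$ and $M$.
\item The leftover mass $\varepsilon=(3/4)^n$ is supplied by a multi-outcome fallback on the ball $j_x$. This is where the extra factor $K_{\mathrm{cover}}$ comes from. The fallback is pointwise normalized because $W^{\mathrm{mo}}\geq1/3$.
\item On binary-fallback messages, the term $(\varepsilon/\eta)\,w_b/W$ is subtracted. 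This removes exactly the share that the multi-outcome fallback reinstates.
\end{itemize}
Keeping that subtraction nonnegative requires $\varepsilon/\eta<1/3$, i.e.\ $(3/4)^n<1/1728$, which forces $n=26$. So $26$ counts rejection-sampling copies, not binary decisions after a Carath\'eodory reduction. The message size does not depend on the number of outcomes.

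Step 2 has the paper's structure, but the bookkeeping you leave open would not close as you phrase it. The fallback can only ever output the law $\nu(\cdot\mid\mathrm S_x)$, so there is no room for ``whatever residual is needed.'' The ordinary weight must equal $1-\eta$ exactly, not merely be at most it. This works for two reasons. First, $\mathrm S_x\subseteq\mathsf A_x$, since the overlap is $>5/6\geq\alpha$ within a ball. Second, every chordal ball of radius $1/\sqrt{24}$ has Haar measure $1/576$. Hence $\nu(\mathrm S_x)=1/10368$ and $q_x=575/10368$ for every $x$, and one sets $\gamma=(1-\eta)/(1-\delta)$ with $\eta=1/576$ and $\delta=(1-q_x)^{112}<\eta$.

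No rejection sampling is needed either. Under the line component, $h$ is Haar and independent of $\alpha$, so the fallback coordinate $(h^{\mathrm{fb}}_{j_x},\alpha_{\mathrm{fb}},1-\alpha_{\mathrm{fb}})$ has exactly the law $\nu(\cdot\mid\mathrm S_x)$. It is also independent of the switching event.

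Finally, in Step 1 the claim that the weights $1/3,2/3$ kill every mode $\ell\geq2$ restates the lemma rather than proving it. The paper's reason uses $\kappa=(a+b-1)_+$: the two components contribute $\frac13\int\kappa\,\dd h$ and $\frac23\int\kappa^2\,\dd h$. The Dirichlet moments $\int\kappa\,\dd h=s^2/(6(1+s))$ and $\int\kappa^2\,\dd h=s^3/(12(1+s))$ then combine to $s^2/18$.
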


Exact average-cost and worst-case protocols were previously known for qubit prepare-and-measure scenarios and for the closely related Bell nonlocality simulation problem \cite{maudlin1992bell,BrassardCost1999,Steiner2000,massar2001classical,CerfGisinMassar2000,DegorreLaplanteRoland2005,TonerBacon,RennerTavakoliQuintino}.
To the best of our knowledge, before the present construction only lower bounds on $d_C$ were known for $d_Q > 2$ \cite{BuhrmanQuantumClassical1998,BrassardCost1999,MontinaCommunication2011,HavlicekSimpleCommunication2020,SchlosserKleinmann2026}.

This appendix is organized as follows.
First, in \cref{sec:average-communication}, we construct a protocol that simulates binary projective measurements on pure qutrits and has finite average communication cost, but with unbounded message dimension.
In \cref{sec:finite-sampling}, we modify this protocol to obtain finite communication.
Finally, in \cref{sec:d3-more-outcomes}, we extend the construction to arbitrary finite-outcome POVMs and mixed states.

\subsection{A protocol for binary measurements with finite average communication}
\label{sec:average-communication}

We first restrict to pure qutrit states and binary projective measurements with a rank-one effect. As in \cref{sec:ReductionPureStatesAndMeas_ququart}, we parametrize both by rays $x,y\in\CP^2$, with associated rank-one projectors
$$
P_x:=\ketbra{x},
\qquad
P_y:=\ketbra{y}.
$$
Alice prepares $P_x$, while Bob performs the binary measurement
\begin{equation}
M_y:=\{P_y,\eye_3-P_y\}.
\label{eq:restricted-scenario}
\end{equation}
We label by $b=0$ the outcome associated with $P_y$ and by $b=1$ the outcome associated with $(\eye_3-P_y)$. In this restricted scenario,
\begin{equation}
p_Q(0\mid x,y)
=
\tr(P_xP_y)
=
|\langle x|y\rangle|^2.
\label{eq:born-target}
\end{equation}
It is therefore sufficient to reproduce the $b=0$ probability, since the probability of $b=1$ follows from normalization.

\subsubsection{A postselected local construction}
\label{sec:postselected-local-construction}

Before introducing any communication, we first identify a simpler probabilistic construction that already contains the desired quantum statistics. Alice and Bob will share a random variable and apply local tests depending only on their respective inputs. We will choose the distribution of the shared variable so that, conditioned on Alice's test succeeding, the probability that Bob's test succeeds is exactly the Born probability
$$
|\langle x|y\rangle|^2.
$$
This does not yet define a classical simulation protocol, since the conditioning event depends on Alice's input and occurs only with some probability. The communication protocol introduced in the next subsection will be used to realize this conditioning.

To this end, consider a shared random variable
$$
\theta=(h,\alpha,\beta)\in\CP^2\times\mathcal R,
$$
where
\begin{equation}
\mathcal R
:=
\left\{
(\alpha,\beta)\in[0,1]^2:
\alpha\geq\beta,\ 
\alpha+\beta\geq1
\right\}.
\label{eq:R}
\end{equation}
The ray $h$ is sampled according to normalized Haar measure on $\CP^2$. Independently of $h$, the thresholds are sampled as follows: with probability $1/3$,
\begin{equation}
\alpha\sim\operatorname{Unif}[1/2,1],
\qquad
\beta=1-\alpha,
\end{equation}
while with probability $2/3$, the pair $(\alpha,\beta)$ is sampled uniformly from $\mathcal R$.

We denote by $\nu$ the resulting probability distribution of $\theta$. More explicitly, let $\sigma$ denote the normalized Haar measure on $\CP^2$, let $\nu_{\mathrm{line}}$ denote the probability measure on $\mathcal R$ obtained by sampling
$$
\alpha\sim\operatorname{Unif}[1/2,1]
$$
and setting $\beta=1-\alpha$, and let $\nu_{\mathcal R}$ denote the uniform probability measure on $\mathcal R$. Then
\begin{equation}
\nu
=
\sigma\otimes
\left(
\frac13\nu_{\mathrm{line}}
+
\frac23\nu_{\mathcal R}
\right).
\label{eq:single-coordinate-distribution}
\end{equation}

Given inputs $x,y\in\CP^2$, Alice can locally compute
$$
|\langle x|h\rangle|^2,
$$
while Bob can locally compute
$$
|\langle y|h\rangle|^2.
$$
We say that Alice's local test succeeds when
\begin{equation}\label{eq:AliceTestSucceedsLocal}
    |\langle x,h\rangle|^2>\alpha
\end{equation}

while Bob's local test succeeds when
\begin{equation}
    |\langle y|h\rangle|^2>\beta.
\end{equation}

The probability that both local tests succeed is
\begin{equation}
\Pr_\nu\left\{
|\langle x|h\rangle|^2>\alpha,\ 
|\langle y|h\rangle|^2>\beta
\right\}
=
\int_{\CP^2\times\mathcal R}
\1_{\{|\langle x|h\rangle|^2>\alpha\}}
\1_{\{|\langle y|h\rangle|^2>\beta\}}
\,\dd\nu(h,\alpha,\beta).
\label{eq:joint-acceptance-nu}
\end{equation}
Using the decomposition of $\nu$ in \cref{eq:single-coordinate-distribution}, this becomes
\begin{align}
\Pr_\nu\left\{
|\langle x|h\rangle|^2>\alpha,\ 
|\langle y|h\rangle|^2>\beta
\right\}
=
&\frac13
\int_{\CP^2\times\mathcal R}
\1_{\{|\langle x|h\rangle|^2>\alpha\}}
\1_{\{|\langle y|h\rangle|^2>\beta\}}
\,\dd(\sigma\otimes\nu_{\mathrm{line}})
\notag\\
&\quad
+
\frac23
\int_{\CP^2\times\mathcal R}
\1_{\{|\langle x|h\rangle|^2>\alpha\}}
\1_{\{|\langle y|h\rangle|^2>\beta\}}
\,\dd(\sigma\otimes\nu_{\mathcal R}).
\label{eq:joint-acceptance-decomposition}
\end{align}
The first term corresponds to $\alpha$ uniformly distributed on $[1/2,1]$ with $\beta=1-\alpha$. Since the corresponding density for $\alpha$ is $2$, while the uniform density on $\mathcal R$ is $4$ because $\operatorname{area}(\mathcal R)=1/4$, we obtain
\begin{align}
\Pr_\nu\left\{
|\langle x|h\rangle|^2>\alpha,\ 
|\langle y|h\rangle|^2>\beta
\right\}
&=
\frac{2}{3}
\int_{\CP^2}
\int_{1/2}^1
\1_{\{|\langle x|h\rangle|^2>\alpha\}}
\1_{\{|\langle y|h\rangle|^2>1-\alpha\}}
\,\dd\alpha\,\dd h
\notag\\
&\quad
+
\frac{8}{3}
\int_{\CP^2}
\int_{\mathcal R}
\1_{\{|\langle x|h\rangle|^2>\alpha\}}
\1_{\{|\langle y|h\rangle|^2>\beta\}}
\,\dd\alpha\,\dd\beta\,\dd h.
\label{eq:single-joint-acceptance-integral}
\end{align}
Here $\dd h$ denotes normalized Haar measure. The following lemma evaluates this integral.

\begin{restatable}{lemma}{jointacceptancequtrit}
\label{lem:qutrit-joint-acceptance}
For every $x,y\in\CP^2$,
\begin{equation}
\frac{2}{3}
\int_{\CP^2}
\int_{1/2}^1
\1_{\{|\langle x|h\rangle|^2>\alpha\}}
\1_{\{|\langle y|h\rangle|^2>1-\alpha\}}
\,\dd\alpha\,\dd h
+
\frac{8}{3}
\int_{\CP^2}
\int_{\mathcal R}
\1_{\{|\langle x|h\rangle|^2>\alpha\}}
\1_{\{|\langle y|h\rangle|^2>\beta\}}
\,\dd\alpha\,\dd\beta\,\dd h
=
\frac{|\langle x|y\rangle|^2}{18}.
\label{eq:single-joint-acceptance}
\end{equation}
\end{restatable}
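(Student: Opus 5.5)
The plan is to interchange the order of integration: integrate out the thresholds first for fixed $h$, and only then average over $h$. Fix $x,y\in\CP^2$ and set $u:=|\langle x|h\rangle|^2$, $v:=|\langle y|h\rangle|^2$. By Tonelli, the left-hand side of \cref{eq:single-joint-acceptance} equals $\int_{\CP^2} w(u,v)\,\dd h$ with
\begin{equation*}
w(u,v):=\frac23\,\bigl(\min(u,1)-\max(\tfrac12,1-v)\bigr)_+ \;+\;\frac83\,\operatorname{area}\bigl\{(\alpha,\beta)\in\mathcal R:\alpha<u,\ \beta<v\bigr\}.
\end{equation*}
Both terms vanish whenever $u+v\le1$. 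For the first term this holds because it requires $1-v<\alpha<u$. For the second, it holds because $\mathcal R$ forces $\alpha+\beta\ge1$. This already explains why the answer vanishes at $t=|\langle x|y\rangle|^2=0$, where $u+v\le1$ always, and it is the mechanism behind exact state exclusion.

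The first step is to write $w$ explicitly as a piecewise polynomial of degree at most two in $(u,v)$. $\mathcal R$ is the triangle with vertices $(1/2,1/2)$, $(1,0)$ and $(1,1)$. The pieces are determined by the position of $(u,v)$ relative to the lines $u=1/2$, $v=1/2$, $u=v$ and $u+v=1$. The area of the clipped triangle $\{\alpha<u,\ \beta<v,\ \alpha+\beta\ge1,\ \alpha\ge\beta\}$ is elementary in each piece. I expect that the choices of the weights $1/3$ and $2/3$ (the densities $2$ and $4$) are tuned so that the two terms combine into something as simple as possible. I would first check whether $w$ collapses to a single expression, for instance a multiple of $(u+v-1)_+$ or of $(u+v-1)_+^2$, possibly plus a correction on $\{u<1/2\}$ or $\{v<1/2\}$.

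The second step is to parametrize the Haar average concretely. Choose an orthonormal basis $\{e_0,e_1,e_2\}$ with $x=e_0$ and $y=\sqrt t\,e_0+\sqrt{1-t}\,e_1$, where $t=|\langle x|y\rangle|^2$. For Haar-random $h$, the vector $(|h_0|^2,|h_1|^2,|h_2|^2)=(p_0,p_1,p_2)$ is uniform on the 2-simplex, with density $2$. The relative phase $\varphi$ of $h_0$ and $h_1$ is independent and uniform. Then
\begin{equation*}
u=p_0,\qquad v=t\,p_0+(1-t)\,p_1+2\sqrt{t(1-t)p_0p_1}\cos\varphi .
\end{equation*}
The quantity to compute is therefore $g(t):=2\int_{\text{simplex}}\frac{1}{2\pi}\int_0^{2\pi}w(u,v)\,\dd\varphi\,\dd p$, and the goal is $g(t)=t/18$.

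The main obstacle is this final integral. The region $u+v>1$ and the internal breakpoints of $w$ cut the $(p_0,p_1,\varphi)$ domain along curves involving $\cos\varphi$, so a brute-force evaluation is messy. I would first try a cleaner route. Each indicator $\1_{\{u>\alpha\}}$ is a function of $h$ restricted to a cap of $\CP^2$ around $x$, so $K(x,y)=\langle A_x, A'_y\rangle$ is a zonal kernel in $|\langle x|y\rangle|^2$. Its expansion in the Jacobi polynomials $P_\ell^{(1,0)}(2t-1)$ has coefficients equal to products of cap transforms, in analogy with \cref{prop:multiplier} for $\CP^2$. The claim is then that the $\nu$-mixture of thresholds kills every mode $\ell\ge2$ and fixes the $\ell=0,1$ coefficients to give $t/18$. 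If this spectral route proves awkward, I would fall back on direct integration. I would substitute $s=u+v-1$ and integrate first over $p_2$ and $\varphi$ for fixed $p_0$, where the $\varphi$-integrals of $(a+b\cos\varphi)_+^k$ are standard. I would then verify linearity in $t$, using the check that $g(0)=0$ and that setting $y=x$ gives $g(1)=1/18$, consistent with \cref{eq:alice-normalization-main}.
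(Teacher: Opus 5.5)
Your Tonelli reduction to $\int_{\CP^2}w(u,v)\,\dd h$ is correct, and so are your endpoint values $g(0)=0$ and $g(1)=1/18$. The gap is that the proposal stops exactly where the proof has to happen: you call the Haar integral ``the main obstacle'', and neither route you offer evaluates it.

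\emph{The spectral route only restates the claim.} The kernel is affine in $t$ if and only if its Jacobi components vanish for $\ell\geq2$. So ``the $\nu$-mixture kills every mode $\ell\geq2$'' is equivalent, given the endpoint values, to the lemma itself. Proving it this way would require the infinite family of identities $\int\hat a_\ell(\alpha)\hat a_\ell(\beta)\,\dd\nu=0$, and you give no mechanism for them.

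\emph{The direct route cannot be closed by checking $g(0)$ and $g(1)$.} Two endpoint values do not show that $g$ is affine; you would need $g(t)$ in closed form. Your hoped-for pointwise collapse of $w$ also fails. $w$ is neither symmetric nor a function of $u+v-1$ alone: for example $w(0.4,0.9)=0$ while $w(0.9,0.4)=8/25$, although $u+v-1=3/10$ in both cases. In coordinates adapted to $x$ you would therefore integrate a piecewise quadratic whose breakpoints $v=\tfrac12$, $u=v$ and $u+v=1$ are curves involving $\cos\varphi$.

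The paper supplies two ideas that reduce this to a few lines.

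\emph{First, symmetrize.} After fixing phases so that $\langle x|y\rangle\geq0$, a unitary swaps $x$ and $y$, so $(u,v)$ and $(v,u)$ have the same Haar law and you may replace $w$ by its symmetrization. Set $\kappa=(u+v-1)_+$ and use the layer-cake identities $\kappa=\int_0^1\1_{\{u>\alpha\}}\1_{\{v>1-\alpha\}}\,\dd\alpha$ and $\kappa^2=2\int_{\{\alpha+\beta\geq1\}}\1_{\{u>\alpha\}}\1_{\{v>\beta\}}\,\dd\alpha\,\dd\beta$. Then the $[1/2,1]$-integral and the $\mathcal R$-integral symmetrize to $\kappa/2$ and $\kappa^2/4$. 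The symmetrized integrand is therefore exactly $\tfrac13(\kappa+2\kappa^2)$; in the example above, $\tfrac12(0+8/25)=4/25=\tfrac13(\kappa+2\kappa^2)$ at $\kappa=3/10$.

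\emph{Second, change basis.} Compute $\int\kappa^n\,\dd h$ in the eigenbasis of $H=P_x+P_y-\eye_3$, not in a basis adapted to $x$. $H$ has spectrum $\{s,-s,-1\}$ with $s=\abs{\langle x|y\rangle}$. In this basis $u+v-1=\langle h|H|h\rangle=sX_1-sX_2-X_3$ is linear in the Dirichlet$(1,1,1)$ coordinates $X_j=\abs{\langle e_j|h\rangle}^2$, with no phase dependence. An affine change of variables on the simplex then gives $\int\kappa\,\dd h=s^2/(6(1+s))$ and $\int\kappa^2\,\dd h=s^3/(12(1+s))$. Neither moment is linear in $t=s^2$. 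Only the combination $\tfrac13(\kappa+2\kappa^2)$ integrates to $s^2(1+s)/(18(1+s))=t/18$, which is the real reason for the weights $1/3$ and $2/3$.
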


The proof of \cref{lem:qutrit-joint-acceptance} is given in \cref{app:haar-integrals}. It follows from a Haar integral identity specific to $\CP^2$ together with the two components of the threshold distribution in \cref{eq:single-coordinate-distribution}.

We can now determine Alice's success probability from the same lemma. Taking $y=x$, the two local overlaps coincide. Moreover, $\alpha\geq\beta$ almost surely under $\nu$, so whenever Alice's test succeeds,
$$
|\langle x|h\rangle|^2>\alpha,
$$
Bob's test also succeeds. Hence
\begin{equation}
\Pr_\nu\left\{
|\langle x|h\rangle|^2>\alpha
\right\}
= \Pr_\nu\left\{
|\langle x|h\rangle|^2>\alpha,\ 
|\langle x|h\rangle|^2>\beta
\right\} =
\frac1{18}.
\label{eq:single-alice-acceptance}
\end{equation}
In particular, Alice's success probability is independent of her input $x$.

For arbitrary $x,y\in\CP^2$, combining \cref{eq:single-joint-acceptance,eq:single-alice-acceptance} gives
\begin{equation}
\Pr_\nu\left\{
|\langle y|h\rangle|^2>\beta
\;\middle|\;
|\langle x|h\rangle|^2>\alpha
\right\}
=
|\langle x|y\rangle|^2.
\label{eq:single-conditional-acceptance}
\end{equation}
Thus, conditioned on Alice's test succeeding, Bob's local test reproduces exactly the quantum probability
$$
p_Q(0\mid x,y)=|\langle x|y\rangle|^2.
$$

Equation \cref{eq:single-conditional-acceptance} already contains the essential ingredient of the simulation. If Alice and Bob could sample $\theta$ according to $\nu$ conditioned on the event
$$
|\langle x|h\rangle|^2>\alpha,
$$
Bob could reproduce the $b=0$ outcome probability using only his local test. The difficulty is that this conditioning depends on Alice's input $x$, whereas the shared randomness must be sampled before the inputs are received. Moreover, an unconditioned sample satisfies Alice's test only with probability $1/18$.

In the next subsection, we overcome this by letting Alice and Bob share independent samples from $\nu$. Alice selects the first sample for which her test succeeds and communicates its index to Bob. This implements the required conditioning through rejection sampling and turns the construction above into an exact classical simulation protocol.

\subsubsection{From postselection to a protocol with average finite communication}
\label{sec:from-postselection-to-average-communication}

We now turn the postselected construction above into a classical simulation protocol. The basic idea is to let Alice and Bob share an infinite sequence of independent samples from $\nu$, with Alice scanning them until her local test succeeds. She then communicates the index of the successful sample to Bob, who applies his local test to that same sample. The selected sample therefore has the conditional statistics identified in the previous subsection, under which Bob's local test reproduces the quantum probability $|\langle x,y\rangle|^2$.

More precisely, let
\begin{equation}
\Lambda
=
\bigl(\CP^2\times\mathcal R\bigr)^{\mathbb N},
\qquad
\lambda=(\theta_1,\theta_2,\ldots),
\qquad
\theta_k=(h_k,\alpha_k,\beta_k),
\label{eq:average-proposals}
\end{equation}
where the variables $\theta_k$ are independent and identically distributed according to $\nu$. We equip $\Lambda$ with the corresponding product probability measure
\begin{equation}
\mu:=\nu^{\otimes\mathbb N}.
\end{equation}
The sequence $\lambda$ is generated independently of the inputs and made available to both Alice and Bob as shared randomness.

After receiving $x$, Alice examines the coordinates $\theta_1,\theta_2,\ldots$ in order. For each $k$, she computes
$$
a_k:=|\langle x|h_k\rangle|^2
$$
and checks whether $a_k>\alpha_k$. If such a coordinate exists, she stops at the first one and communicates its index
\begin{equation}
c(x,\lambda)
:=
\min\left\{
k\geq1:
a_k>\alpha_k
\right\}
\label{eq:first-accepted-index}
\end{equation}
to Bob. If no such coordinate exists, we set $c=1$ by convention.

Upon receiving $c$ and his input $y$, Bob computes
\begin{equation}
b_c:=|\langle y|h_c\rangle|^2
\end{equation}
and outputs
\begin{equation}
b=
\begin{cases}
0, & b_c>\beta_c,\\
1, & b_c\leq\beta_c.
\end{cases}
\label{eq:average-bob-output}
\end{equation}

The exceptional case in which Alice never finds a successful coordinate occurs with probability zero. Indeed, by \cref{eq:single-alice-acceptance}, each coordinate independently satisfies Alice's test with probability $1/18$. Hence, for every $N\geq1$,
\begin{equation}
\Pr_\mu\left\{
a_k\leq\alpha_k
\text{ for all }k=1,\ldots,N
\right\}
=
\left(\frac{17}{18}\right)^N.
\end{equation}
Therefore,
\begin{equation}
\Pr_\mu\left\{
a_k\leq\alpha_k
\text{ for every }k\geq1
\right\}
=
\lim_{N\to\infty}
\left(\frac{17}{18}\right)^N
=
0.
\end{equation}
Thus Alice finds a successful coordinate almost surely, and the convention $c=1$ in the exceptional case does not affect the output statistics.

The rules above define a deterministic encoder and decoder of the same form as in \cref{def:classical-simulation}, except that at this stage the message alphabet is the infinite set $\mathbb N$. More explicitly, Alice's encoder is
\begin{equation}
p_A(c\mid x,\lambda)
=
\1_{\{c=c(x,\lambda)\}},
\qquad c\in\mathbb N,
\end{equation}
where $c(x,\lambda)$ is defined in \cref{eq:first-accepted-index}.\footnote{We use the same symbol $c$ for a possible message value and for the deterministic response function $c(x,\lambda)$ that selects the message for fixed $x$ and $\lambda$. This slight abuse of notation is adopted for readability.} On the other hand, Bob's decoder is determined by
\begin{equation}
p_B(0\mid y,c,\lambda)
=
\1_{\{b_c>\beta_c\}},
\qquad
p_B(1\mid y,c,\lambda)
=
\1_{\{b_c\leq\beta_c\}}.
\end{equation}
Thus the remaining task is to verify that these response functions reproduce the quantum probabilities after averaging over the shared randomness $\lambda$.

For the outcome $b=0$, the classical probability obtained from these response functions is
\begin{align}
p_C(0\mid x,y)
&=
\int_\Lambda
\sum_{k=1}^{\infty}
p_A(k\mid x,\lambda)
p_B(0\mid y,k,\lambda)
\,\dd\mu(\lambda)
\notag\\
&=
\int_\Lambda
\sum_{k=1}^{\infty}
\1_{\{c(x,\lambda)=k\}}
\1_{\{b_k>\beta_k\}}
\,\dd\mu(\lambda)
\notag\\
&=
\sum_{k=1}^{\infty}
\Pr_\mu\{c=k,\ b_k>\beta_k\}
\notag\\
&=
\sum_{k=1}^{\infty}
\Pr_\mu\{c=k\}
\Pr_\mu\{b_k>\beta_k\mid c=k\}.
\label{eq:average-classical-decomposition}
\end{align}
The interchange of the sum and the integral is justified by Tonelli's theorem, since all terms are nonnegative.

To evaluate the conditional probability in \cref{eq:average-classical-decomposition}, consider the event that the $k$th coordinate is the first successful one,
\begin{equation}
\mathsf E_k
:=
\bigcap_{\ell=1}^{k-1}
\{a_\ell\leq\alpha_\ell\}
\cap
\{a_k>\alpha_k\}.
\end{equation}
Since the exceptional event in which no coordinate succeeds is $\mu$-null, the events $\{c=k\}$ and $\mathsf E_k$ agree up to a $\mu$-null set. The failures of the first $k-1$ coordinates depend only on $\theta_1,\ldots,\theta_{k-1}$ and are therefore independent of $\theta_k$. Hence, conditioned on $\mathsf E_k$, the variable $\theta_k$ has the same distribution as a single sample from $\nu$ conditioned on Alice's local test succeeding. By \cref{eq:single-conditional-acceptance},
\begin{equation}
\Pr_\mu\{b_k>\beta_k\mid c=k\}
= \Pr_\mu\{b_k>\beta_k\mid \mathsf E_k\}=
|\langle x,y\rangle|^2
\end{equation}
for every $k\geq1$.

Substituting this into \cref{eq:average-classical-decomposition} gives
\begin{align}
p_C(0\mid x,y)
&=
|\langle x,y\rangle|^2
\sum_{k=1}^{\infty}
\Pr_\mu\{c=k\}
\notag\\
&=
|\langle x,y\rangle|^2
=
p_Q(0\mid x,y),
\label{eq:average-classical-output}
\end{align}
Hence the protocol reproduces exactly the quantum statistics of pure qutrit states and binary projective measurements.

However, the communicated index is not bounded. Indeed, for every $N\geq1$,
\begin{equation}
\Pr_\mu\{c>N\}
=
\left(\frac{17}{18}\right)^{N}
>
0.
\label{eq:positive-probability-each-message}
\end{equation}
Thus arbitrarily large message values occur with nonzero probability. Consequently, the message alphabet is genuinely infinite, and the protocol does not yet satisfy the finite-message requirement in \cref{def:classical-simulation}. Nevertheless, its communication cost is finite on average, as shown below. In \cref{sec:finite-sampling}, we modify the construction to obtain a finite message alphabet and hence a protocol in the sense of \cref{def:classical-simulation}.

\begin{remark}[Average communication cost]\label{rmk:average-communication-cost}
Since each coordinate independently passes Alice's test with probability $1/18$, the communicated index $c$ follows the geometric distribution
\begin{equation}
\Pr_\mu\{c=k\}
=
\left(\frac{17}{18}\right)^{k-1}
\frac1{18},
\qquad
k\geq1.
\label{eq:geometric-index}
\end{equation}

A simple way to encode the index is to use the unary code
\begin{equation}
1\longmapsto0,
\qquad
2\longmapsto10,
\qquad
3\longmapsto110,
\qquad
4\longmapsto1110,
\qquad
\ldots .
\label{eq:unary-code}
\end{equation}
The message corresponding to $c=k$ then has length $k$ bits. Hence the average amount of bits communicated is given by
\begin{equation}
\mathbb E_\mu[\ell(c)]
=
 \sum_{k=1}^{\infty} \frac{k}{18}\left(\frac{17}{18}\right)^{k-1}
=
18.
\end{equation}

This encoding is not optimal. More efficient encodings, such as Golomb codes \cite{Golomb1966}, can reduce the average communication cost, but we will not pursue this optimization here, our interest is only in the finiteness of the average communication cost.
\end{remark}

\subsection{A finite-communication protocol for binary measurements}
\label{sec:finite-sampling}

The protocol in \cref{sec:average-communication} reproduces the desired quantum statistics with finite communication on average, but the index communicated by Alice is unbounded. We now modify the construction so that Alice always sends a message from a finite alphabet.

The rough idea is to fix a sufficiently large integer $N$ and truncate the previous protocol to the first $N$ shared coordinates, with a few additional selection conditions. If none of these $N$ coordinates is selected, Alice instead sends a message associated with an auxiliary \emph{fallback} procedure. The fallback compensates for the truncation so that the resulting finite protocol still reproduces exactly the quantum statistics.

The fallback relies on a finite covering of $\CP^2$, which we construct first in \cref{sec:finite-covering}. We then define the protocol and analyze its communication cost in \cref{sec:protocol}; its correctness is established in \cref{sec:worst-case-correctness}.

\subsubsection{Finite covering}
\label{sec:finite-covering}

Here we construct a covering of $\CP^2$ by finitely many balls of equal radius; see \cref{fig:worst-case-communication-protocol}(a).

Define the chordal distance between two rays $z,w\in\CP^2$ by
\begin{equation}
d(z,w)
:=
\sqrt{1-|\langle z|w\rangle|^2}.
\label{eq:chordal-distance}
\end{equation}
This defines a metric on $\CP^2$. For $v\in\CP^2$ and $0<r<1$, let
\begin{equation}
\mathcal D_r(v)
:=
\left\{
z\in\CP^2:
d(z,v)<r
\right\}.
\label{eq:general-chordal-ball}
\end{equation}

We will use the following standard bound on the number of such balls required to cover complex projective space; see \cite[Sec.~5.1.1, pp.~107--108, Eq.~(5.2), and Exercise~5.11, p.~112]{AubrunSzarek2017}.

\begin{lemma}[Finite covering, \cite{AubrunSzarek2017}]
\label{lem:finite-cap-cover}
For every $0<r<1$, there exist rays $v_1,\ldots,v_{K_{\mathrm{cover}}}\in\CP^2$ such that
\begin{equation}
\CP^2
=
\bigcup_{j=1}^{K_{\mathrm{cover}}}
\mathcal D_r(v_j),
\end{equation}
where
\begin{equation}
K_{\mathrm{cover}}
\leq
\left(\frac{2}{r}\right)^4.
\label{eq:general-K-bound}
\end{equation}
\end{lemma}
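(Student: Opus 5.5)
The plan is to use the standard volumetric argument with a maximal separated set. Let $\{v_1,\ldots,v_K\}\subset\CP^2$ be a maximal family of rays with pairwise distances $d(v_i,v_j)\geq r$. Such a family is finite, by the volume bound below. By maximality, every $z\in\CP^2$ satisfies $d(z,v_j)<r$ for some $j$, since otherwise $z$ could be added to the family. The balls $\mathcal D_r(v_j)$ therefore cover $\CP^2$, and it remains to bound $K$.

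First, I would check that $d$ is a metric. The cleanest route is the identity $d(z,w)=\tfrac{1}{\sqrt2}\norm{P_z-P_w}_{\mathrm{HS}}$. It follows from
\begin{equation}
\norm{P_z-P_w}_{\mathrm{HS}}^2=2-2\abs{\langle z|w\rangle}^2 ,
\end{equation}
so $d$ is the restriction of a Euclidean norm and the triangle inequality is automatic.

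Second, the balls $\mathcal D_{r/2}(v_j)$ are pairwise disjoint, by the triangle inequality. This gives
\begin{equation}
K\,\min_j \sigma\bigl(\mathcal D_{r/2}(v_j)\bigr)\leq 1 ,
\end{equation}
where $\sigma$ is the normalized Haar measure on $\CP^2$.

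Third, I would compute the Haar measure of a chordal ball exactly. The quantity $t=\abs{\langle z|v\rangle}^2$ for Haar-random $z\in\CP^{2}$ has density $2(1-t)$ on $[0,1]$; this is the Dirichlet$(1,1,1)$ marginal. Hence
\begin{equation}
\sigma(\mathcal D_s(v))=\Pr\{1-t<s^2\}=\int_{1-s^2}^1 2(1-t)\,\dd t=s^4 .
\end{equation}
With $s=r/2$, this gives $K\leq (2/r)^4$, which is exactly \cref{eq:general-K-bound}.

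The main point to get right is the exact volume formula $\sigma(\mathcal D_s)=s^4$. It holds for every $s\in(0,1]$, not merely asymptotically, and it is what yields the clean constant without extra factors. Consistently with the citation of \cite{AubrunSzarek2017}, one may alternatively quote their covering-number bound for projective space directly.
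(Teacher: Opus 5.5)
Your proof is correct and complete. The maximal $r$-separated family, the disjointness of the $r/2$-balls, and the exact cap volume $\sigma(\mathcal D_s(v))=s^4$ (from the $\mathrm{Beta}(1,2)$ law of $|\langle z|v\rangle|^2$) together give exactly $K\le(2/r)^4$. The paper gives no proof of its own and simply cites Aubrun--Szarek, whose bound comes from this same volumetric packing argument, so your proposal is essentially the intended proof; the paper also later uses the same cap volume $r^4$ when computing $\Pr_\nu(\mathsf S_x)$.
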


We choose
\begin{equation}
r=\frac1{\sqrt{24}}
\end{equation}
and, for simplicity, write
\begin{equation}
\mathcal D(v)
:=
\mathcal D_{1/\sqrt{24}}(v).
\label{eq:fallback-cap}
\end{equation}
With this choice, \cref{lem:finite-cap-cover} gives
\begin{equation}
K_{\mathrm{cover}}
\leq
(2\sqrt{24})^4
=
9216.
\label{eq:K-bound}
\end{equation}

The radius also determines a useful lower bound on the overlap between two rays belonging to the same covering ball. Indeed, if $x,h\in\mathcal D_r(v)$, the triangle inequality gives
$$
d(x,h)
\leq
d(x,v)+d(v,h)
<
2r,
$$
and therefore
\begin{equation}
|\langle x|h\rangle|^2
=
1-d(x,h)^2
>
1-4r^2.
\label{eq:same-ball-overlap}
\end{equation}
For $r=1/\sqrt{24}$, this yields
\begin{equation}
|\langle x|h\rangle|^2
>
\frac56
\qquad
\text{for all }
x,h\in\mathcal D(v).
\label{eq:fallback-overlap-goal}
\end{equation}

The choice $r=1/\sqrt{24}$ is made for convenience, as it yields the simple overlap bound in \cref{eq:fallback-overlap-goal}. This choice is not optimized: different radii lead to different covering sizes and ultimately to different communication bounds. We return to this point in \cref{rem:binary-protocol-optimization}.

\subsubsection{Protocol definition and communication cost}
\label{sec:protocol}

We now describe the finite-message protocol. We fix
\begin{equation}
N:=112
\label{eq:choice-of-N}
\end{equation}
and introduce the constants
\begin{equation}
\eta:=\frac1{576},
\qquad
\delta
:=
\left(
1-\frac{575}{10368}
\right)^N,
\qquad
\gamma
:=
\frac{1-\eta}{1-\delta}.
\label{eq:finite-protocol-constants}
\end{equation}
The value $N=112$ is chosen as the smallest integer for which $\delta<\eta$, and consequently
$
0<\gamma<1.
$
The meaning of these constants will be established in the proof of correctness. Roughly, $\eta$ will determine the weight of the fallback part of the protocol, $\delta$ the probability that the truncated search fails, and $\gamma$ a correction factor used to adjust their relative weights.

\paragraph{Shared randomness space and distribution.}

Let
\begin{align}
\Lambda
&=
\bigl(\CP^2\times\mathcal R\bigr)^N
\times[0,1]
\times
\prod_{j=1}^{K_{\mathrm{cover}}}\mathcal D(v_j)
\times[3/4,5/6],
\notag\\
\lambda
&=
\bigl(
\theta_1,\ldots,\theta_N,
\Gamma,
h_1^{\mathrm{fb}},\ldots,h_{K_{\mathrm{cover}}}^{\mathrm{fb}},
\alpha_{\mathrm{fb}}
\bigr),
\qquad
\theta_k=(h_k,\alpha_k,\beta_k).
\label{eq:worst-case-shared-randomness}
\end{align}
The components of $\lambda$ are sampled independently as follows:
\begin{itemize}
\item $\theta_1,\ldots,\theta_N$ are independent and identically distributed according to $\nu$, as in \cref{eq:single-coordinate-distribution};
\item $\Gamma\sim\operatorname{Unif}[0,1]$;
\item for each $j\in[K_{\mathrm{cover}}]$, the ray $h_j^{\mathrm{fb}}$ is sampled according to normalized Haar measure on $\mathcal D(v_j)$;
\item $\alpha_{\mathrm{fb}}\sim\operatorname{Unif}[3/4,5/6]$.
\end{itemize}
Their joint distribution defines the product probability measure $\mu$ on $\Lambda$. As before, $\lambda$ is generated independently of the inputs and made available to both parties as shared randomness.

The first $N$ components $\theta_1,\ldots,\theta_N$ are simply a truncation of the sequence used in \cref{sec:average-communication}. The remaining variables will be used to compensate for this truncation. The interval $[3/4,5/6]$ is chosen to simplify the numerical parameters of the construction and is not optimized. We discuss more general choices and their effect on the communication cost in \cref{rem:binary-protocol-optimization}.

\paragraph{Alice's encoder.}

Alice's encoder has two branches: a \emph{truncated} part and a \emph{fallback}. The truncated is based on the average-communication protocol restricted to the first $N$ shared coordinates, with two additional requirements. If these requirements are not met, Alice uses the fallback.

Both branches depend on the covering ball containing $x$. After receiving $x$, Alice therefore determines the first such ball,
\begin{equation}
j_x
:=
\min\left\{
j\in[K_{\mathrm{cover}}]:
x\in\mathcal D(v_j)
\right\}.
\label{eq:cover-selector}
\end{equation}
Such an index always exists by \cref{lem:finite-cap-cover}.

We now describe the truncated branch. Its first additional requirement is
$$
\Gamma\leq\gamma.
$$
When this holds, Alice considers the coordinates $\theta_1,\ldots,\theta_N$. As in the average-communication protocol \cref{eq:AliceTestSucceedsLocal}, a coordinate $\theta_k=(h_k,\alpha_k,\beta_k)$ is first required to satisfy
$$
|\langle x|h_k\rangle|^2>\alpha_k.
$$
It is convenient to denote the set of single-coordinate values satisfying this test by
\begin{equation}
\mathsf A_x
:=
\left\{
(h,\alpha,\beta)\in\CP^2\times\mathcal R:
|\langle x|h\rangle|^2>\alpha
\right\}.
\label{eq:single-coordinate-acceptance-set}
\end{equation}
Thus the original acceptance condition is simply $\theta_k\in\mathsf A_x$.

The truncated part imposes one further criterion: Alice excludes coordinates belonging to
\begin{equation}
\mathsf S_x
:=
\left\{
(h,\alpha,\beta)\in\CP^2\times\mathcal R:
h\in\mathcal D(v_{j_x}),
\quad
\frac34\leq\alpha\leq\frac56,
\quad
\beta=1-\alpha
\right\}.
\label{eq:reserved-slice}
\end{equation}
Accordingly, the coordinates eligible for the truncated part are those belonging to
\begin{equation}
\mathsf O_x
:=
\mathsf A_x\setminus\mathsf S_x.
\label{eq:single-coordinate-ordinary-set}
\end{equation}
The candidate indices are therefore
\begin{equation}
\mathcal K_x(\lambda)
:=
\left\{
k\in[N]:
\theta_k\in\mathsf O_x
\right\}.
\label{eq:ordinary-candidate-set}
\end{equation}

If $\Gamma\leq\gamma$ and $\mathcal K_x(\lambda)$ is nonempty, Alice sends the smallest index in $\mathcal K_x(\lambda)$. Otherwise, she switches to the fallback. In that case, she communicates the index $j_x$ of the covering ball containing $x$, shifted by $N$. More precisely,
\begin{equation}
c(x,\lambda)
=
\begin{cases}
\min\mathcal K_x(\lambda),
&
\Gamma\leq\gamma
\text{ and }
\mathcal K_x(\lambda)\neq\varnothing,
\\[1ex]
N+j_x,
&
\text{otherwise}.
\end{cases}
\label{eq:finite-binary-encoder}
\end{equation}
Hence
$$
c(x,\lambda)\in[N+K_{\mathrm{cover}}].
$$
In the notation of \cref{def:classical-simulation}, Alice's encoder is
\begin{equation}
p_A(c\mid x,\lambda)
=
\1_{\{c=c(x,\lambda)\}},
\qquad
c\in[N+K_{\mathrm{cover}}].
\label{eq:finite-binary-pA}
\end{equation}

Relative to the average-communication protocol, the finite construction therefore modifies the selection rule in three ways: only the first $N$ coordinates are considered, a coordinate satisfying the original acceptance condition $\theta_k\in\mathsf A_x$ is eligible only if $\theta_k\notin\mathsf S_x$, and the truncated part is used only when $\Gamma\leq\gamma$. If either $\Gamma>\gamma$ or no eligible coordinate exists, the protocol switches to the fallback. The role of these modifications in recovering exactly the quantum statistics will be established in the proof of correctness.

\paragraph{Bob's decoder.}

Upon receiving $c$ and his measurement ray $y$, Bob proceeds according to the type of message received.

If $c\leq N$, he uses the coordinate $\theta_c$ and outputs
\begin{equation}\label{eq:Output_Bob_Binary_finite_truncaded}
b=
\begin{cases}
0, & |\langle y|h_c\rangle|^2>\beta_c,\\
1, & |\langle y|h_c\rangle|^2\leq\beta_c.
\end{cases}
\end{equation}
If $c>N$, he sets $j=c-N$, uses the fallback ray $h_j^{\mathrm{fb}}$, and outputs
\begin{equation}\label{eq:Output_Bob_Binary_finite_fallback}
b=
\begin{cases}
0, & |\langle y|h_j^{\mathrm{fb}}\rangle|^2>1-\alpha_{\mathrm{fb}},\\
1, & |\langle y|h_j^{\mathrm{fb}}\rangle|^2\leq1-\alpha_{\mathrm{fb}}.
\end{cases}
\end{equation}

Equivalently, Bob's decoder is
\begin{equation}
p_B(0\mid y,c,\lambda)
=
\begin{cases}
\1_{\{|\langle y|h_c\rangle|^2>\beta_c\}},
&
c\leq N,
\\[1ex]
\1_{\{|\langle y|h_{c-N}^{\mathrm{fb}}\rangle|^2>1-\alpha_{\mathrm{fb}}\}},
&
c>N,
\end{cases}
\label{eq:finite-binary-pB}
\end{equation}
with
\begin{equation}
p_B(1\mid y,c,\lambda)
=
1-p_B(0\mid y,c,\lambda).
\end{equation}

\paragraph{Communication cost.}

Alice always sends a value from the finite alphabet $[N+K_{\mathrm{cover}}]$.
Since $N=112$ and \cref{lem:finite-cap-cover} gives $K_{\mathrm{cover}}\leq9216$,
\begin{equation}
d_C
=
N+K_{\mathrm{cover}}
\leq
112+9216
=
9328
<
2^{14}.
\label{eq:binary-message-bound}
\end{equation}
Thus $14$ classical bits suffice to implement the binary protocol.

\subsubsection{Correctness}
\label{sec:worst-case-correctness}

We now prove that the finite protocol exactly reproduces the quantum probabilities. More precisely, for every $x,y\in\CP^2$, we want to show that
\begin{equation}
p_C(0\mid x,y)
=
|\langle x|y\rangle|^2.
\label{eq:finite-binary-correctness-goal}
\end{equation}
Here, by definition of the classical protocol,
\begin{equation}
p_C(0\mid x,y)
=
\int_\Lambda
\sum_{c=1}^{N+K_{\mathrm{cover}}}
p_A(c\mid x,\lambda)
p_B(0\mid y,c,\lambda)
\,\dd\mu(\lambda).
\label{eq:finite-binary-classical-probability}
\end{equation}
Recalling that messages $c\leq N$ correspond to the truncated part of the protocol and messages $c>N$ to the fallback, it is useful to decompose this probability as
\begin{align}
p_C(0\mid x,y)
&=
\Pr_\mu\{c\leq N\}
\Pr_\mu\{b=0\mid c\leq N\}
\notag\\
&\quad+
\Pr_\mu\{c>N\}
\Pr_\mu\{b=0\mid c>N\}.
\label{eq:finite-binary-output-decomposition}
\end{align}

The construction is based on the identity established in \cref{eq:single-conditional-acceptance}. In terms of the acceptance set $\mathsf A_x$ introduced here, this identity reads
\begin{equation}
\Pr_\nu\left\{
|\langle y|h\rangle|^2>\beta
\,\middle|\,
\mathsf A_x
\right\}
=
|\langle x|y\rangle|^2.
\label{eq:finite-coordinate-quantum-statistics}
\end{equation}
Thus, as in the average-communication protocol, it suffices to reproduce through the communicated message the conditional probability above.

The finite protocol does this by decomposing the acceptance set $\mathsf A_x$ into the two parts $\mathsf O_x$ and $\mathsf S_x$ in such a way that the corresponding conditional probabilities can be associated with the truncated and fallback parts of the finite protocol. We will then show that these two contributions, together with their respective weights, coincide exactly with the decomposition of $p_C(0\mid x,y)$ in \cref{eq:finite-binary-output-decomposition}.

\paragraph{Decomposition of the target probability.}

We begin by observing that
\begin{equation}
\mathsf S_x
\subseteq
\mathsf A_x.
\label{eq:reserved-event-always-accepted}
\end{equation}
Indeed, if $\theta=(h,\alpha,\beta)\in\mathsf S_x$, then both $x$ and $h$ belong to $\mathcal D(v_{j_x})$. Hence \cref{eq:fallback-overlap-goal} gives
\begin{equation}
|\langle x|h\rangle|^2
>
\frac56.
\end{equation}
Since $\alpha\leq5/6$ on $\mathsf S_x$, it follows that $|\langle x,h\rangle|^2>\alpha$, and therefore $\theta\in\mathsf A_x$.

We next determine the relative weight of $\mathsf S_x$ inside $\mathsf A_x$. The measure $\nu_{\mathcal R}$ is uniform on $\mathcal R$ and therefore assigns zero probability to the boundary $\beta=1-\alpha$. Hence only the $\nu_{\mathrm{line}}$ component of \cref{eq:single-coordinate-distribution} contributes to $\mathsf S_x$.

A chordal ball of radius $1/\sqrt{24}$ in $\CP^2$ has normalized Haar measure
\begin{equation}
\sigma\bigl(\mathcal D(v_{j_x})\bigr)
=
\left(\frac1{\sqrt{24}}\right)^4
=
\frac1{576},
\end{equation}
Moreover, under $\nu_{\mathrm{line}}$, the variable $\alpha$ is uniformly distributed on $[1/2,1]$ and therefore has density $2$. Thus
\begin{align}
\Pr_\nu(\mathsf S_x)
&=
\frac13
\int_{\mathcal D(v_{j_x})}
\dd\sigma(h)
\int_{3/4}^{5/6}
2\,\dd\alpha
\notag\\
&=
\frac13
\cdot
\frac1{576}
\cdot
2\left(\frac56-\frac34\right)
\notag\\
&=
\frac1{10368}.
\label{eq:reserved-slice-probability}
\end{align}
Recall from \cref{eq:single-alice-acceptance} that
\begin{equation}
\Pr_\nu(\mathsf A_x)
=
\frac1{18}.
\label{eq:finite-coordinate-acceptance}
\end{equation}
Since $\mathsf S_x\subseteq\mathsf A_x$, we obtain from \cref{eq:reserved-slice-probability,eq:finite-coordinate-acceptance}
\begin{align}
\Pr_\nu(\mathsf S_x\mid\mathsf A_x)
=
\frac{\Pr_\nu(\mathsf S_x)}
{\Pr_\nu(\mathsf A_x)}=
\frac{1/10368}{1/18}
=
\frac1{576}
=
\eta.
\label{eq:value-of-eta}
\end{align}

Recall from \cref{eq:single-coordinate-ordinary-set} that $\mathsf O_x := \mathsf A_x\setminus\mathsf S_x.$ Since $\mathsf S_x\subseteq\mathsf A_x$, the set $\mathsf A_x$ therefore decomposes as the disjoint union
\begin{equation}
\mathsf A_x
=
\mathsf O_x
\mathbin{\dot\cup}
\mathsf S_x.
\end{equation}
Consequently,
\begin{align}
\Pr_\nu(\mathsf O_x\mid\mathsf A_x)
&=
1-\Pr_\nu(\mathsf S_x\mid\mathsf A_x)
=
1-\eta.
\label{eq:ordinary-conditional-weight}
\end{align}

Applying the law of total probability to the partition $\mathsf A_x
=
\mathsf O_x
\mathbin{\dot\cup}
\mathsf S_x,
$
we obtain
\begin{align}
\Pr_\nu\left\{
|\langle y|h\rangle|^2>\beta
\,\middle|\,
\mathsf A_x
\right\}
&=
\Pr_\nu(\mathsf O_x\mid\mathsf A_x)
\Pr_\nu\left\{
|\langle y|h\rangle|^2>\beta
\,\middle|\,
\mathsf O_x
\right\}
\notag\\
&\quad+
\Pr_\nu(\mathsf S_x\mid\mathsf A_x)
\Pr_\nu\left\{
|\langle y|h\rangle|^2>\beta
\,\middle|\,
\mathsf S_x
\right\}.
\label{eq:conditional-response-decomposition}
\end{align}
Combining the decomposition above with \cref{eq:finite-coordinate-quantum-statistics,eq:value-of-eta,eq:ordinary-conditional-weight}, we obtain
\begin{align}
|\langle x|y\rangle|^2
&=
(1-\eta)
\Pr_\nu\left\{
|\langle y|h\rangle|^2>\beta
\,\middle|\,
\mathsf O_x
\right\}
+
\eta
\Pr_\nu\left\{
|\langle y|h\rangle|^2>\beta
\,\middle|\,
\mathsf S_x
\right\}.
\label{eq:target-ordinary-fallback-decomposition}
\end{align}

The two terms in \cref{eq:target-ordinary-fallback-decomposition} naturally correspond to the two parts of the finite protocol: the conditional probability associated with $\mathsf O_x$ will be reproduced by the truncated part, while the conditional probability associated with $\mathsf S_x$ will be reproduced by the fallback. More precisely, we will show that
\begin{equation}
\Pr_\mu\{c\leq N\}=1-\eta,
\qquad
\Pr_\mu\{c>N\}=\eta,
\label{eq:desired-part-weights}
\end{equation}
and that
\begin{align}
\Pr_\mu\{b=0\mid c\leq N\}
&=
\Pr_\nu\left\{
|\langle y|h\rangle|^2>\beta
\,\middle|\,
\mathsf O_x
\right\},
\label{eq:desired-truncated-statistics}\\
\Pr_\mu\{b=0\mid c>N\}
&=
\Pr_\nu\left\{
|\langle y|h\rangle|^2>\beta
\,\middle|\,
\mathsf S_x
\right\}.
\label{eq:desired-fallback-statistics}
\end{align}
Together with \cref{eq:finite-binary-output-decomposition} and \cref{eq:target-ordinary-fallback-decomposition}, these identities will establish the desired equality
$$
p_C(0\mid x,y)=|\langle x|y\rangle|^2.
$$

\paragraph{Weights of the truncated and fallback parts.}

We first prove \cref{eq:desired-part-weights}. By \cref{eq:finite-binary-encoder}, Alice sends a message $c\leq N$ precisely when $\mathcal K_x(\lambda)\neq\varnothing$ and $\Gamma\leq\gamma$. Since $\Gamma$ is independent of $\theta_1, \ldots,\theta_N$ and uniformly distributed on $[0,1]$,
\begin{align}
\Pr_\mu\{c\leq N\}
&=
\Pr_\mu\left\{
\mathcal K_x(\lambda)\neq\varnothing,
\ \Gamma\leq\gamma
\right\}
\notag\\
&=
\gamma\,
\Pr_\mu\left\{
\mathcal K_x(\lambda)\neq\varnothing
\right\}.
\label{eq:ordinary-message-intermediate}
\end{align}

By \cref{eq:ordinary-candidate-set}, the set $\mathcal K_x(\lambda)$ is empty precisely when none of the first $N$ coordinates belongs to $\mathsf O_x$. Since $\theta_1,\ldots,\theta_N$ are independent and identically distributed according to $\nu$,
\begin{align}
\Pr_\mu\left\{
\mathcal K_x(\lambda)=\varnothing
\right\}
&=
\left(
1-\Pr_\nu(\mathsf O_x)
\right)^N.
\label{eq:truncated-failure-preliminary}
\end{align}

It remains to compute $\Pr_\nu(\mathsf O_x)$. Since $\mathsf O_x\subseteq\mathsf A_x$, \cref{eq:finite-coordinate-acceptance,eq:ordinary-conditional-weight} gives
\begin{align}
\Pr_\nu(\mathsf O_x)
&=
\Pr_\nu(\mathsf A_x)
\Pr_\nu(\mathsf O_x\mid\mathsf A_x)
\notag\\
&=
\frac1{18}(1-\eta)
\notag\\
&=
\frac{575}{10368}.
\label{eq:ordinary-success-probability}
\end{align}
Therefore,
\begin{align}
\Pr_\mu\left\{
\mathcal K_x(\lambda)=\varnothing
\right\}
&=
\left(
1-\frac{575}{10368}
\right)^N
\notag\\
&=
\delta.
\label{eq:truncated-failure-probability}
\end{align}

Substituting this into \cref{eq:ordinary-message-intermediate}, we obtain
\begin{align}
\Pr_\mu\{c\leq N\}
&=
\gamma(1-\delta)
\notag\\
&=
1-\eta,
\label{eq:ordinary-message-weight}
\end{align}
where the last equality follows from the definition of $\gamma$ in \cref{eq:finite-protocol-constants}. Consequently,
\begin{equation}
\Pr_\mu\{c>N\}
=
\eta.
\label{eq:binary-fallback-probability}
\end{equation}
This proves \cref{eq:desired-part-weights}.

This calculation also explains the choices of $N$ and $\gamma$, \cref{eq:choice-of-N,eq:finite-protocol-constants}. For $\gamma$ to define a valid probability, we require $\gamma\leq1$, which is equivalent to $\delta\leq\eta$. We chose $N=112$ as the smallest positive integer for which $\delta<\eta$, and $\gamma$ then adjusts the probability $(1-\delta)$ of finding an eligible coordinate to exactly the required weight $(1-\eta)$. 

\paragraph{Output statistics in the truncated part.}

We next prove \cref{eq:desired-truncated-statistics}. From Bob's output rule in \cref{eq:Output_Bob_Binary_finite_truncaded}, for each $k\in[N]$,
\begin{equation}
\Pr_\mu\{b=0\mid c=k\}
=
\Pr_\mu\left\{
|\langle y|h_k\rangle|^2>\beta_k
\,\middle|\,
c=k
\right\}.
\label{eq:ordinary-response-start}
\end{equation}
By \cref{eq:finite-binary-encoder}, Alice sends $c=k$ precisely when $\theta_k$ belongs to $\mathsf O_x$, none of the preceding coordinates $\theta_l$ does, and $\Gamma\leq\gamma$. Hence
\begin{equation}
\{c=k\}
=
\{\Gamma\leq\gamma\}
\cap
\{\theta_k\in\mathsf O_x\}
\cap
\bigcap_{\ell=1}^{k-1}
\{\theta_\ell\notin\mathsf O_x\}.
\label{eq:ordinary-message-k}
\end{equation}
The event
$$
\left\{
|\langle y|h_k\rangle|^2>\beta_k
\right\}
$$
depends only on $\theta_k$. It is therefore independent of $\{\Gamma\leq\gamma\}$ and of all the conditions involving the preceding coordinates $\theta_1,\ldots,\theta_{k-1}$. Among the conditions defining $\{c=k\}$, only $\theta_k\in\mathsf O_x$ affects its conditional probability. Hence
\begin{align}
\Pr_\mu\{b=0\mid c=k\}
&=
\Pr_\mu\left\{
|\langle y|h_k\rangle|^2>\beta_k
\,\middle|\,
\theta_k\in\mathsf O_x
\right\}
\notag\\
&=
\Pr_\nu\left\{
|\langle y|h\rangle|^2>\beta
\,\middle|\,
\mathsf O_x
\right\}.
\label{eq:ordinary-response-given-message}
\end{align}
The last equality follows because $\theta_k$ is distributed according to $\nu$. In particular, the right-hand side is the same for every $k\in[N]$. Consequently,
\begin{align}
\Pr_\mu\{b=0\mid c\leq N\}
&=
\sum_{k=1}^N
\Pr_\mu\{b=0\mid c=k\}
\Pr_\mu\{c=k\mid c\leq N\}
\notag\\
&=
\Pr_\nu\left\{
|\langle y|h\rangle|^2>\beta
\,\middle|\,
\mathsf O_x
\right\}
\sum_{k=1}^N
\Pr_\mu\{c=k\mid c\leq N\}
\notag\\
&=
\Pr_\nu\left\{
|\langle y|h\rangle|^2>\beta
\,\middle|\,
\mathsf O_x
\right\}.
\label{eq:ordinary-output-distribution}
\end{align}
This proves \cref{eq:desired-truncated-statistics}.

\paragraph{Output statistics in the fallback.}

It remains to prove \cref{eq:desired-fallback-statistics}. From Bob's fallback output rule in \cref{eq:Output_Bob_Binary_finite_fallback},
\begin{equation}
\Pr_\mu\{b=0\mid c>N\}
=
\Pr_\mu\left\{
|\langle y|h_{j_x}^{\mathrm{fb}}\rangle|^2
>
1-\alpha_{\mathrm{fb}}
\,\middle|\,
c>N
\right\}.
\label{eq:fallback-response-start}
\end{equation}
By \cref{eq:finite-binary-encoder}, the event $\{c>N\}$ occurs whenever either $\Gamma>\gamma$ or no coordinate is eligible for the truncated part. Hence
\begin{equation}
\{c>N\}
=
\{\Gamma>\gamma\}
\cup
\bigcap_{k=1}^N
\{\theta_k\notin\mathsf O_x\}.
\label{eq:fallback-message-event}
\end{equation}
This event depends only on $\Gamma$ and the coordinates $\theta_1,\ldots,\theta_N$, and is therefore independent of the fallback variables $h_{j_x}^{\mathrm{fb}}$ and $\alpha_{\mathrm{fb}}$. Consequently,
\begin{equation}
\Pr_\mu\{b=0\mid c>N\}
=
\Pr_\mu\left\{
|\langle y|h_{j_x}^{\mathrm{fb}}\rangle|^2
>
1-\alpha_{\mathrm{fb}}
\right\}.
\label{eq:fallback-response-unconditioned}
\end{equation}

We now compare this probability with the conditional distribution of a single coordinate $\theta=(h,\alpha,\beta)$ given $\mathsf S_x$. Since $\mathsf S_x$ imposes the condition $\beta=1-\alpha$, and the component $\nu_{\mathcal R}$ assigns zero probability to this boundary, conditioning on $\mathsf S_x$ selects the $\nu_{\mathrm{line}}$ component of $\nu$. Within this component, the conditions defining $\mathsf S_x$ further restrict
$$
h\in\mathcal D(v_{j_x}),
\qquad
\alpha\in[3/4,5/6],
\qquad
\beta=1-\alpha.
$$
Since $h$ and $\alpha$ are independent under $\nu_{\mathrm{line}}$, their conditional distributions are
$$
h\sim\sigma(\,\cdot\mid\mathcal D(v_{j_x})),
\qquad
\alpha\sim\operatorname{Unif}[3/4,5/6],
$$
with $\beta=1-\alpha$. This is precisely the joint distribution used by the fallback,
$$
h_{j_x}^{\mathrm{fb}}
\sim
\sigma(\,\cdot\mid\mathcal D(v_{j_x})),
\qquad
\alpha_{\mathrm{fb}}
\sim
\operatorname{Unif}[3/4,5/6],
$$
with the two variables independent. Therefore,
\begin{align}
\Pr_\mu\left\{
|\langle y|h_{j_x}^{\mathrm{fb}}\rangle|^2
>
1-\alpha_{\mathrm{fb}}
\right\}
&=
\Pr_\nu\left\{
|\langle y|h\rangle|^2>\beta
\,\middle|\,
\mathsf S_x
\right\}.
\label{eq:fallback-distribution-identification}
\end{align}
Combining \cref{eq:fallback-response-unconditioned,eq:fallback-distribution-identification}, we obtain
\begin{equation}
\Pr_\mu\{b=0\mid c>N\}
=
\Pr_\nu\left\{
|\langle y|h\rangle|^2>\beta
\,\middle|\,
\mathsf S_x
\right\},
\label{eq:fallback-output-distribution}
\end{equation}
which proves \cref{eq:desired-fallback-statistics}.

This completes the proof. Indeed, combining \cref{eq:finite-binary-output-decomposition,eq:target-ordinary-fallback-decomposition,eq:ordinary-message-weight,eq:binary-fallback-probability,eq:ordinary-output-distribution,eq:fallback-output-distribution}, we obtain
\begin{align}
p_C(0\mid x,y)
&=
|\langle x|y\rangle|^2
=
p_Q(0\mid x,y).
\label{eq:finite-binary-exactness}
\end{align}
The probability of the $b=1$ outcome agrees by normalization.

Hence pure qutrit states and binary projective measurements admit an exact classical simulation with
\begin{equation}
d_C
\leq
9328
<
2^{14}.
\end{equation}

\begin{remark}[Optimization of the communication cost]
\label{rem:binary-protocol-optimization}
The numerical parameters used above are not optimized. In particular, the covering radius $1/\sqrt{24}$ and the interval $[3/4,5/6]$ defining $\mathsf S_x$ were chosen to keep the construction and its constants simple.

More generally, suppose that the covering balls have radius $r$ and that the interval in the definition of $\mathsf S_x$ is replaced by $[L,U]\subseteq[1/2,1]$. If $x$ and $h$ belong to the same covering ball, then
$$
|\langle x|h\rangle|^2>1-4r^2.
$$
Thus, to guarantee $\mathsf S_x\subseteq\mathsf A_x$, it is sufficient to require
\begin{equation}
U\leq1-4r^2.
\end{equation}
Since we also require a nontrivial interval with $L\geq1/2$, this construction requires
\begin{equation}
r<\frac1{\sqrt8}.
\end{equation}
Increasing $r$ reduces the number of balls required to cover $\CP^2$, but at the same time decreases the range of admissible values for $U$.

For our choice $r=1/\sqrt{24}$, the bound above gives $U\leq5/6$, explaining the upper endpoint of the interval. The lower endpoint $L=3/4$, on the other hand, is chosen for numerical convenience: since a chordal ball of radius $r$ in $\CP^2$ has Haar measure $r^4$, the corresponding fallback weight is
\begin{equation}
\eta
=
\Pr_\nu(\mathsf S_x\mid\mathsf A_x)
=
12r^4(U-L).
\end{equation}
With $[L,U]=[3/4,5/6]$, this simplifies to $\eta=r^4$, which is also equal to volume of the balls.

Different admissible choices of $r$, $L$, and $U$ therefore change both the number of covering balls and the value of $\eta$. The latter determines how large $N$ must be in order for
\begin{equation}
\left(
1-\frac{1-\eta}{18}
\right)^N
<
\eta.
\end{equation}
Consequently, improving the communication bound amounts to balancing the size of the covering against the truncation length $N$. We do not pursue this optimization here.
\end{remark}

\subsection{Extension to any POVM}
\label{sec:d3-more-outcomes}

We now extend the finite protocol for binary measurements defined in \cref{sec:finite-sampling} to arbitrary qutrit POVMs $M = \{ M_b \}_{b \in B}$ with any finite number $\abs{B}$ of outcomes.

The general idea is straightforward.
Instead of starting from a general POVM, assume that the states are pure and $\operatorname{rank}(M_b) = 1$ for every $b$ (later, we will recover the general case by coarse graining the effects and decomposing mixed states as a convex combination of pure states).
For each effect, we can define a binary measurement as described below,
and use the simulation protocol for binary measurements to simulate $p_Q(b^\prime \mid \rho, M^{(b)})$.
The protocol below provides a way to recombine these simulations, so that they correctly reproduce $p_Q(b \mid \rho, M)$.
This recombination, however, requires some care to preserve normalization, for reasons that will be discussed in \cref{sec:povm-binary-weights}. After that, in \cref{sec:d3-povm-protocol} we define the protocol, and in \cref{sec:d3-povm-correctness} prove its correctness.

\subsubsection{Binarization}
\label{sec:povm-binary-weights}

Let $x\in\CP^2$, and $M=\{M_b\}_{b\in B}$ be a finite-outcome qutrit POVM with rank-$1$ effects,
\begin{equation}
	M_b = \gamma_b\dyad{y_b}, \qquad 0<\gamma_b\leq1, \qquad \sum_{b\in B}\gamma_b\dyad{y_b} = \eye_3,
	\label{eq:povm-rank-one-decomposition}
\end{equation}
where $y_b\in\CP^2$.
The upper bound on $\gamma_b$ follows from $M_b\leq\eye_3$, while taking the trace of the last equality gives
\begin{equation}
	\sum_{b\in B}\gamma_b = 3.
	\label{eq:povm-sum-gamma}
\end{equation}
For each effect $M_b$, we define a binary POVM
\begin{equation}
	M^{(b)} = \{ \dyad{y_b} ,\eye_3- \dyad{y_b} \}.
\end{equation}

Consider the simulation protocol for binary measurements from \cref{sec:protocol}.
Let $(\Lambda_{\mathrm{bin}},\mu_{\mathrm{bin}})$ be the shared randomness space of that protocol, and $c(x, \lambda)$ be Alice's message for the state $x$ for a given value of the shared variable $\lambda \in \Lambda_{\mathrm{bin}}$ (\cref{eq:finite-binary-encoder}).
Let Bob's response function be as in the binary protocol (\cref{eq:Output_Bob_Binary_finite_fallback}):
\begin{equation}
	p_B^{\mathrm{bin}}(0\mid c(x, \lambda), y_b, \lambda)
	=
	\begin{cases}
		\1_{\{\abs{\langle y_b|h_c\rangle}^2>\beta_c\}}, & c\leq N,\\[1ex]
		\1_{\{ |\langle y_b|h_{c-N}^{\mathrm{fb}}\rangle |^2>1-\alpha_{\mathrm{fb}}\}}, & c>N .
	\end{cases}
\label{eq:bob-det-response-binary}
\end{equation}
Correctness of the binary protocol gives
\begin{equation}
	\int_{\Lambda_{\mathrm{bin}}}p_B^{\mathrm{bin}}(0\mid c(x, \lambda),y_b,\lambda)\,\dd\mu_{\mathrm{bin}}(\lambda) = |\langle x|y_b\rangle |^2
	\label{eq:povm-binary-response}
\end{equation}
thus the outcome associated to the effect $M_b = \gamma_b \dyad{y_b}$ of the original POVM $M$ can be simulated by using the binary protocol and keeping the result $b = 0$ of the binary protocol with probability $\gamma_b$.

Let us try to combine the binary protocols in the following way.
For each $b\in B$, define
\begin{equation}
	w_b(c(x, \lambda), \lambda) := \gamma_b p_B^{\mathrm{bin}}(0\mid c(x, \lambda), y_b,\lambda), \qquad W(c,\lambda) := \sum_{b\in B}w_b(c,\lambda).
	\label{eq:povm-scores}
\end{equation}
It follows from \cref{eq:povm-binary-response,eq:povm-rank-one-decomposition} that
\begin{align}
	\int_{\Lambda_{\mathrm{bin}}}w_b(c(x, \lambda),\lambda)\,\dd\mu_{\mathrm{bin}}(\lambda) &= \gamma_b\abs{\langle x|y_b\rangle}^2 = \langle x|M_b|x\rangle .
	\label{eq:povm-score-average}
\end{align}
Thus, if the weights $w_b(c(x, \lambda), \lambda)$ formed a probability distribution over $b$ for every fixed $(c(x, \lambda), \lambda)$, Bob could output $b$ with probability $w_b$ and \cref{eq:povm-score-average} shows this would reproduce the POVM statistics.
However, this is not generally true, since $W(c(x, \lambda),\lambda)$ is not necessarily equal to $1$.
On the other hand, note that that the average of $W$ over $\lambda$ is such that
\begin{equation}
	\int_{\Lambda_{\mathrm{bin}}}W(c(x, \lambda),\lambda)\,\dd\mu_{\mathrm{bin}}(\lambda) = 1.
    \label{eq:povm-score-average-normalization}
\end{equation}
This fact will become useful later.

\subsubsection{Shared randomness and the protocol}
\label{sec:d3-povm-protocol}

We now define the multi-outcome protocol.
The validity of the sampling procedures defined in this section and the correctness of the protocol will be discussed in \cref{sec:d3-povm-correctness}.

It is instructive to first discuss the reasoning behind the definition, which is not too different from the ideas previously used when moving from the average-case binary protocol to the finite binary protocol.
Recall from the above section that, from one shared variable of the binary protocol, Bob can use the weights $w_b$ to evaluate the binary response for every ray $y_b$, and this would in principle yield the correct averages $\langle x|M_b|x\rangle$.
Since these weights do not form a probability distribution over the outcomes, Bob will instead use modified versions of them as sub-normalized probabilities: for each shared variable, he either selects an outcome $b$ or rejects that shared variable.
Because rejection can occur, Alice and Bob would need to sample several independent copies of the binary shared variable and Bob test them sequentially until one produces an outcome.
However, for any finite number of copies there will remain a nonzero probability that every copy is rejected, and since the number of copies will ultimately incur in additional communication cost, we must truncate them to some finite $n$.
To treat the case where Bob rejects all copies of the binary random variables, we introduce a new fallback, called the \emph{multi-outcome fallback}, which must produce an outcome whenever none of the binary copies does.
These two branches are then combined through a correction described below, so that the complete protocol reproduces exactly the desired POVM probabilities.

\paragraph{Shared randomness.}

Let $n$ be the number of copies of the binary protocol used to construct the multi-outcome protocol. For reasons that will later become clear, we choose $n := 26$.

Before receiving their inputs, Alice and Bob sample $n$ independent shared variables $\lambda^{(1)},\ldots,\lambda^{(n)}$, each from $\mu_{\mathrm{bin}}$.
They also sample the multi-outcome fallback, which uses the same distribution as in the binary case, namely, rays $h_j^{\mathrm{mo}}$ according to the normalized Haar measure on each covering ball $\mathcal D(v_j)$, and a threshold $\alpha_{\mathrm{mo}}\sim\operatorname{Unif}[3/4,5/6]$.

The shared randomness space then is
\begin{align}
	\Lambda &= \Lambda_{\mathrm{bin}}^n\times\prod_{j=1}^{K_{\mathrm{cover}}}\mathcal D(v_j)\times[3/4,5/6], \notag\\
	\lambda &= \bigl(\lambda^{(1)},\ldots,\lambda^{(n)},h_1^{\mathrm{mo}},\ldots,h_{K_{\mathrm{cover}}}^{\mathrm{mo}},\alpha_{\mathrm{mo}}\bigr) ,
	\label{eq:povm-shared-randomness}
\end{align}
Their joint distribution defines the product probability measure $\mu$ on $\Lambda$.

\paragraph{Alice's encoder.}
After receiving $x$, Alice computes
\begin{align}
	j_x &:= \min\{j \mid x\in\mathcal D(v_j)\} \\ 
	c_k &:= c(x, \lambda^{(k)}), \quad \text{for } k \in \{1, \ldots,n \} .
\end{align}
She communicates to Bob the message
\begin{equation}
	c^{\mathrm{mo}} = (j_x,c_1,\ldots,c_n).
	\label{eq:povm-composite-message}
\end{equation}

Note the similarity with the binary protocol: each message $c_k$ has the same encoding as the binary protocol, and $j_x$ is defined in the same way as the binary fallback.

\paragraph{Bob's decoder.}
Upon receiving $c^{\mathrm{mo}}$ and $M$, Bob does his decoding procedure according to two branches, which we will call \emph{ordinary} and \emph{fallback}.

First, in the ordinary branch he considers each component $c_k$ of $c^{\mathrm{mo}}$ by using sub-normalized probabilities $\hat{w}_b$ built from the $w_b$ (\cref{eq:povm-scores}). Using $c$ to denote a generic component, we define
\begin{equation}
	\widehat w_b(c,\lambda) := \frac{1}{1-\varepsilon} \left( w_b(c,\lambda)-\frac{\varepsilon}{\eta}r_b^{\mathrm{bin,fb}}(c,\lambda) \right),
    \qquad
	r_b^{\mathrm{bin,fb}}(c,\lambda) :=
	\begin{cases}
		0, & c\leq N,\\
		\frac{w_b(c,\lambda)}{W(c,\lambda)}, & c>N .
	\end{cases}
	\label{eq:povm-corrected-weights}
\end{equation}
Here, $\eta = 1 / 576$ stands for the probability that the binary protocol uses its own fallback \cref{eq:value-of-eta}, while $\varepsilon:=\left( 3/4 \right)^n$, for reasons explained later, is the probability that all $n$ copies of the binary shared variables are rejected.

Before continuing with the description of the protocol, let us explain some pieces of this definition.
The idea behind subtracting $( \varepsilon / \eta )r_b^{\mathrm{bin,fb}}$ from $\hat{w}_b$ is to remove part of the contribution of the fallback branch that is internal to the binary protocol.
Indeed, by the definition of $r_b^{\mathrm{bin,fb}}$, nothing is removed in the ordinary branch (i.e., when $c \leq N$).
This is necessary because the new multi-outcome fallback, define shortly, will use the same type of fallback construction and therefore supplies this contribution again whenever all ordinary trials fail.
Since the internal binary fallback occurs with probability $\eta$, multiplying its normalized contribution by $\varepsilon/\eta$ gives it average weight $\varepsilon$, matching the probability with which the multi-outcome fallback is used.
Lastly, the overall factor $1/(1-\varepsilon)$ rescales the remaining weights to account for the fact that the ordinary branch produces an outcome with total probability $1-\varepsilon$.

With this in hand, Bob considers each $\lambda^{(k)}$ in order for $k \in \{1, \ldots, n \}$, and samples $Y_k\in B\cup\{\perp\}$ according to
\begin{equation}
	\label{eq:povm-ordinary-rule}
\begin{aligned}
	\Pr\{Y_k=b\mid c_k,\lambda^{(k)},M\} &= \frac{\widehat w_b(c_k,\lambda^{(k)})}{4}, \\
	\Pr\{Y_k=\,\perp\, \mid c_k,\lambda^{(k)},M\} &= 1-\frac{ \sum_{b \in B} \hat{w}_b(c_k, \lambda^{(k)})}{4}.
\end{aligned}
\end{equation}
The additional event $\perp$ means that no outcome $b$ was identified, and in this case he proceeds to the next $k$.
Otherwise, if he obtains some $Y_k\in B$, he outputs $Y_k$.
It will be shown in the next section that this is a well-defined probability distribution.

When all $n$ trials result in $Y_k = \,\perp$, he uses the multi-outcome fallback.
In this branch, he uses the multi-outcome fallback coordinate $(h_{j_x}^{\mathrm{mo}}, \alpha_{\mathrm{mo}}, 1 - \alpha_{\mathrm{mo}})$ to compute
\begin{equation}
	w_b^{\mathrm{mo}} := \gamma_b\1_{\{ | \langle y_b|h_{j_x}^{\mathrm{mo}}\rangle|^2>1-\alpha_{\mathrm{mo}}\}}, \qquad W^{\mathrm{mo}} := \sum_{b\in B}w_b^{\mathrm{mo}},
	\label{eq:povm-fallback-scores}
\end{equation}
and then outputs $b$ with probability
\begin{equation}
	\pi_b^{\mathrm{mo}} = \frac{w_b^{\mathrm{mo}}}{W^{\mathrm{mo}}},
\label{eq:multioutcome-fallback-probabilities}
\end{equation}
This is similar to the fallback branch of the binary protocol, but with the difference that Bob evaluates this response for every POVM effect, multiplies it by the corresponding coefficient $\gamma_b$, and normalizes the resulting weights to select a single outcome $b$.
We will soon show that $\pi_b^{\mathrm{mo}}$ forms a probability distribution.

Combining the ordinary fallback branches, the probability that the multi-outcome protocol outputs $b$ is therefore
\begin{equation}
\begin{aligned}
	p_C(b\mid x,M)
	&=
	\sum_{k=1}^n\Pr\{Y_1=\cdots=Y_{k-1}=\,\perp,\ Y_k=b\mid x,M\}\\
    &\qquad+\Pr\{Y_1=\cdots=Y_n=\,\perp\mid x,M\}\int_\Lambda\pi_b^{\mathrm{mo}}\,\dd\mu.
	\label{eq:povm-output-decomposition}
\end{aligned}
\end{equation}

\paragraph{Communication cost.}
Alice sends the message $c^{\mathrm{mo}} = (j_x,c_1,\ldots,c_n)$, where each $c_k$ is equivalent to a message in the binary protocol, and $j_x$ selects a ball in the covering as the fallback.
Thus each $c_k$ has at most $N+K_{\mathrm{cover}}\leq9328$ possible values, while $j_x$ has at most $K_{\mathrm{cover}}\leq9216$ possible values.
Therefore
\begin{equation}
	d_C \leq K_{\mathrm{cover}}\bigl(N+K_{\mathrm{cover}}\bigr)^{26} \leq9216(9328)^{26}<2^{357} ,
	\label{eq:povm-message-bound}
\end{equation}
so the multi-outcome protocol uses at most $357$ one-way classical bits.

\subsubsection{Validity and correctness}
\label{sec:d3-povm-correctness}

We must establish the validity and correctness of the above protocol.
To establish validity, we must show that \cref{eq:povm-ordinary-rule,eq:multioutcome-fallback-probabilities} are well-defined probability distributions, since these are the two sampling procedures used in the decoding.
To establish correctness, the output probability of the protocol, given by \cref{eq:povm-output-decomposition}, must satisfy
\begin{equation}
	p_C(b\mid x,M)
	=
	\langle x|M_b|x\rangle
\end{equation}
for every $b\in B$.

\paragraph{Validity.}
We first establish that $\pi_b^{\mathrm{mo}}$ in \cref{eq:multioutcome-fallback-probabilities} is a probability distribution.
Recall that in a multi-outcome fallback coordinate, $\beta_{\mathrm{mo}} = 1-\alpha_{\mathrm{mo}}$.
For brevity, define the set of outcomes that pass the fallback test in \cref{eq:povm-fallback-scores} by
$\mathrm A_{\mathrm{mo}}:=\{b\in B \,:\, |\langle y_b|h_{j_x}^{\mathrm{mo}}\rangle|^2>\beta_{\mathrm{mo}}\}$.
Then
\begin{equation}
	W^{\mathrm{mo}}
	=
	\sum_{b\in\mathrm A_{\mathrm{mo}}}
	\gamma_b.
\end{equation}
Using \cref{eq:povm-rank-one-decomposition,eq:povm-sum-gamma} and splitting the sum according to $\mathrm{A}_{\mathrm{mo}}$, we obtain
\begin{align}
	1
	&=
	\sum_{b\in B} \gamma_b |\langle y_b|h_{j_x}^{\mathrm{mo}}\rangle|^2
	\leq
	\sum_{b\in\mathrm A_{\mathrm{mo}}} \gamma_b + \beta_{\mathrm{mo}} \sum_{b\notin\mathrm A_{\mathrm{mo}}} \gamma_b
	=
	W^{\mathrm{mo}} + \beta_{\mathrm{mo}}\bigl(3-W^{\mathrm{mo}}\bigr).
	\label{eq:povm-safe-estimate}
\end{align}
Since $\alpha_{\mathrm{mo}}\in[3/4,5/6]$, we have
$\beta_{\mathrm{mo}}\in[1/6,1/4]$, and hence
\begin{equation}
	W^{\mathrm{mo}}
	\geq
	\frac{1-3\beta_{\mathrm{mo}}}{1-\beta_{\mathrm{mo}}}
	\geq
	\frac13.
\label{eq:w-mo-positive}
\end{equation}
The denominator defining $\pi_b^{\mathrm{mo}}$ is therefore strictly positive, so it is true that $\pi_b^{\mathrm{mo}} \geq 0$ for every $b$ and $\sum_b \pi_b^{\mathrm{mo}} = 1$.

\medskip
We will now establish that \cref{eq:povm-ordinary-rule} also defines a probability distribution.
This distribution is defined in terms of $\widehat{w}_b(c, \lambda)$ in \cref{eq:povm-corrected-weights}, and it suffices to show that $\widehat{w}_b(c, \lambda) \geq 0$ and $\sum_b \widehat{w}_b(c, \lambda) \leq 4$, for every $c$ and $\lambda$.

Consider first the nonnegativity of $\widehat{w}_b$ for $c \leq N$.
From \cref{eq:povm-corrected-weights},
\begin{equation}
	\widehat w_b(c,\lambda) = \frac{w_b(c,\lambda)}{1 - \varepsilon}, \quad c \leq N .
\end{equation}
Recall from \cref{sec:d3-povm-protocol} that $\varepsilon = (3 / 4)^{26}$, thus $1 - \varepsilon > 0$.
By \cref{eq:bob-det-response-binary,eq:povm-scores}, the numerator is
\begin{equation}
    w_b(c, \lambda) = \gamma_b \1_{\{\abs{\langle y_b|h_c\rangle}^2>\beta_c\}}, \quad c\leq N
\end{equation}
which is also nonnegative for any $c$ and $\lambda$. 

Now consider the case $c > N$, for which
\begin{equation}
	\widehat w_b(c,\lambda) =
    \frac{w_b(c, \lambda)}{1 - \varepsilon}
    \left( 1 - \frac{\varepsilon}{\eta \, \sum_{b^\prime} w_{b^\prime}(c, \lambda)} \right)
    , \quad c > N .
\end{equation}
To prove its nonnegativity we must show that, for $c > N$ and any $\lambda$, we have $w_b(c, \lambda) \geq 0$ and $\sum_b w_b(c, \lambda) \geq \varepsilon / \eta$.
By \cref{eq:bob-det-response-binary,eq:povm-scores},
\begin{equation}
	w_b(c,\lambda)
	=
	\gamma_b
	\1_{\{ |\langle y_b|h_{c-N}^{\mathrm{fb}}\rangle|^2 > \beta_{\mathrm{fb}}\}}, \quad c > N ,
\end{equation}
thus $w_b(c, \lambda) \geq 0$.
Recall that the thresholds $\alpha_{\mathrm{fb}}$ and the rays $h^{\mathrm{fb}}_j$, used in the internal fallback of the binary protocol, are distributed exactly as $h^{\mathrm{mo}}_j$ and $\alpha_{\mathrm{mo}}$.
Therefore the same argument used to establish \cref{eq:w-mo-positive} applies in this case, giving
\begin{equation}
	\sum_{b \in B} w_b(c, \lambda) \geq \frac13,
	\quad
	c>N.
\end{equation}
Since $\varepsilon / \eta < 1/3$, this concludes the argument that $\widehat{w}_b(c, \lambda) \geq 0$.

\smallskip
It is only left to show that $\sum_b \widehat{w}_b(c, \lambda) \leq 4$.
By \cref{eq:povm-scores,eq:povm-sum-gamma} and the fact that
$p_B^{\mathrm{bin}}(0\mid c(x, \lambda) ,y_b,\lambda)\in\{0,1\}$, we have
\begin{equation}
	\sum_{b\in B}w_b(c,\lambda)
	=
	\sum_{b\in B}
	\gamma_b p_B^{\mathrm{bin}}(0\mid c(x, \lambda), y_b,\lambda)
	\leq
	\sum_{b\in B}\gamma_b
	=
	3.
\end{equation}
Summing the expressions for $\widehat w_b(c,\lambda)$ established above
gives
\begin{equation}
	\sum_{b\in B}\widehat w_b(c,\lambda)
	=
	\begin{cases}
		\dfrac{\sum_{b\in B}w_b(c,\lambda)}{1-\varepsilon},
		& c\leq N,\\[3ex]
		\dfrac{\sum_{b\in B}w_b(c,\lambda)-\varepsilon/\eta}
		{1-\varepsilon},
		& c>N.
	\end{cases}
\label{eq:sum-wb-hat-cases}
\end{equation}
In both cases,
\begin{equation}
	\sum_{b\in B}\widehat w_b(c,\lambda)
	\leq
	\frac{3}{1-\varepsilon}
	<
	4,
\end{equation}
where the last inequality follows from
$\varepsilon=(3/4)^{26}<1/4$.

\smallskip
This concludes the proof that both probability distributions used in the decoding procedure of the multi-outcome protocol are well-defined.

\paragraph{Correctness.}
Recall that Bob first considers the ordinary outcomes $Y_1,\ldots,Y_n$ in \cref{eq:povm-ordinary-rule} and outputs the first one that belongs to $B$.
If all these outcomes equal $\perp$, he instead samples an outcome from the fallback $\pi^{\mathrm{mo}}$ in \cref{eq:multioutcome-fallback-probabilities}.
The probabilities of his outputs are then given by \cref{eq:povm-output-decomposition}, whose terms we now analyze.

\smallskip
Let us first determine the probability that each ordinary trial returns $\perp$.
By \cref{eq:povm-ordinary-rule}, for every $k\in\{1,\ldots,n\}$ this probability is
\begin{equation}
	\Pr\{Y_k=\,\perp\,\mid x,M\}
	=
	1-\frac14\int_{\Lambda_{\mathrm{bin}}}\sum_{b\in B}\widehat w_b(c(x, \lambda),\lambda)\,\dd\mu_{\mathrm{bin}}(\lambda).
\end{equation}
The two expressions for $\sum_{b\in B}\widehat w_b(c,\lambda)$ established in \cref{eq:sum-wb-hat-cases} can be combined as
\begin{equation}
	\sum_{b\in B}\widehat w_b(c,\lambda)
	=
	\frac{1}{1-\varepsilon}\left(\sum_{b\in B}w_b(c,\lambda)-\frac{\varepsilon}{\eta}\1_{\{c>N\}}\right).
\end{equation}
We thus have
\begin{equation}
\begin{aligned}
	\int_{\Lambda_{\mathrm{bin}}}\sum_{b\in B}\widehat w_b(c(x, \lambda),\lambda)\,\dd\mu_{\mathrm{bin}}(\lambda)
	&=
	\frac{1}{1-\varepsilon}\int_{\Lambda_{\mathrm{bin}}}\left(\sum_{b\in B}w_b(c(x, \lambda),\lambda)-\frac{\varepsilon}{\eta}\1_{\{c(x, \lambda)>N\}}\right)\,\dd\mu_{\mathrm{bin}}(\lambda)
	\\
	&=
	\frac{1}{1-\varepsilon}\left(\int_{\Lambda_{\mathrm{bin}}}\sum_{b\in B}w_b(c(x, \lambda),\lambda)\,\dd\mu_{\mathrm{bin}}(\lambda)-\frac{\varepsilon}{\eta}\Pr_{\mu_{\mathrm{bin}}}\{c(x, \lambda)>N\}\right)
	\\
	&=
	1.
	\label{eq:povm-corrected-average}
\end{aligned}
\end{equation}
where we have used \cref{eq:povm-score-average} to evaluate the first integral and \cref{eq:binary-fallback-probability} to evaluate $\Pr_{\mu_{\mathrm{bin}}}\{c(x, \lambda)>N\}=\eta$.
Therefore we obtain
\begin{equation}
\begin{aligned}
	\Pr\{Y_k=\,\perp\,\mid x,M\}
	&=
	1-\frac14
	= \frac34, \\
	\Pr\{Y_k=b\mid x,M\}
	&=
	\frac14\int_{\Lambda_{\mathrm{bin}}}\widehat w_b(c(x, \lambda),\lambda)\,\dd\mu_{\mathrm{bin}}(\lambda).
	\label{eq:povm-one-trial-probabilities}
\end{aligned}
\end{equation}
where in the first expression we have used \cref{eq:povm-corrected-average}.
The variables $Y_1,\ldots,Y_n$ are independent because the binary shared variables and Bob's corresponding local random variables are independent.
Thus all $n$ trials fail with probability
\begin{equation}
	\Pr\{Y_1=\cdots=Y_n= \,\perp\,\}
	=
	\left(\frac34\right)^n
	=
	\varepsilon.
	\label{eq:povm-all-trials-fail}
\end{equation}
With this, we may now rewrite the first term in Bob's output probability (\cref{eq:povm-output-decomposition}), namely,
\begin{equation}
    \sum_{k=1}^n\Pr\{Y_1=\cdots=Y_{k-1}=\perp,\ Y_k=b\mid x,M\} .
\end{equation}
By independence of the trials and \cref{eq:povm-one-trial-probabilities}, for every $k\in\{ 1, \ldots, n \}$,
\begin{equation}
\Pr{Y_1=\cdots=Y_{k-1}=\,\perp, Y_k=b\mid x,M} = \left(\frac34\right)^{k-1}\frac14\int_{\Lambda_{\mathrm{bin}}}\widehat w_b(c_x(\lambda),\lambda),\dd\mu_{\mathrm{bin}}(\lambda).
\end{equation}
Summing over $k$, and using $(1/4) \sum_{k=1}^n(3/4)^{k-1}=1-\varepsilon$, we obtain
\begin{equation}
\sum_{k=1}^n\Pr{Y_1=\cdots=Y_{k-1}=\perp,\ Y_k=b\mid x,M} = (1-\varepsilon)\int_{\Lambda_{\mathrm{bin}}}\widehat w_b(c_x(\lambda),\lambda),\dd\mu_{\mathrm{bin}}(\lambda).
\label{eq:first-term-in-bobs-povm-output}
\end{equation}

\smallskip
We now consider the second term in Bob's output distribution.
For this, we must determine the probability that the multi-outcome fallback returns $b$, namely $\int_\Lambda\pi_b^{\mathrm{mo}}\,\dd\mu$.
By definition,
\begin{equation}
\pi_b^{\mathrm{mo}} = \frac{\gamma_b\1_{{ | \langle y_b|h_{j_x}^{\mathrm{mo}}\rangle |^2>1-\alpha_{\mathrm{mo}}}}}{\sum_{b'\in B}\gamma_{b'}\1_{{| \langle y_{b'}|h_{j_x}^{\mathrm{mo}}\rangle |^2>1-\alpha_{\mathrm{mo}}}}}.
\end{equation}
Also by definition, the coordinate $(h_{j_x}^{\mathrm{mo}}, \alpha_{\mathrm{mo}}, 1 - \alpha_{\mathrm{mo}})$ has the same distribution as the binary fallback coordinate $(h_{j_x}^{\mathrm{fb}},\alpha_{\mathrm{fb}}, 1 - \alpha_{\mathrm{fb}})$,
and evaluating the binary scores at the fallback message $N+j_x$ gives
\begin{equation}
\frac{w_b(N+j_x,\lambda)}{W(N+j_x,\lambda)} = \frac{\gamma_b\1_{{ | \langle y_b|h_{j_x}^{\mathrm{fb}}\rangle|^2>1-\alpha_{\mathrm{fb}}}}}{\sum_{b'\in B}\gamma_{b'}\1_{{ |\langle y_{b'}|h_{j_x}^{\mathrm{fb}}\rangle|^2>1-\alpha_{\mathrm{fb}}}}}.
\end{equation}
Consequently,
\begin{equation}
\int_\Lambda\pi_b^{\mathrm{mo}} \dd\mu = \int_{\Lambda_{\mathrm{bin}}}\frac{w_b(N+j_x,\lambda)}{W(N+j_x,\lambda)} \,\dd\mu_{\mathrm{bin}}(\lambda).
\label{eq:integral-gamma-with-binary-measure}
\end{equation}
Recall that the event $c(x,\lambda)>N$ has probability $\eta$, that on this event $c(x,\lambda)=N+j_x$, and that the event is independent of the binary fallback variables. Since $r_b^{\mathrm{bin,fb}}$ (\cref{eq:povm-corrected-weights}) vanishes outside this event, its definition can equivalently be written as
\begin{equation}
    r_b^{\mathrm{bin,fb}}(c(x,\lambda),\lambda) = \1_{\{c(x,\lambda)>N\}}\frac{w_b(N+j_x,\lambda)}{W(N+j_x,\lambda)}.
\end{equation}
Integrating this identity and using independence gives
\begin{equation}
	\int_\Lambda\pi_b^{\mathrm{mo}}\,\dd\mu
    = \frac1\eta \int_{\Lambda_{\mathrm{bin}}}r_b^{\mathrm{bin,fb}}(c(x, \lambda),\lambda)\,\dd\mu_{\mathrm{bin}}(\lambda).
	\label{eq:povm-fallback-conditional-law}
\end{equation}

Substituting \cref{eq:first-term-in-bobs-povm-output,eq:povm-fallback-conditional-law} into \cref{eq:povm-output-decomposition} gives
\begin{equation}
\begin{aligned}
	p_C(b\mid x,M)
	&=
	(1-\varepsilon)\int_{\Lambda_{\mathrm{bin}}}\widehat w_b(c(x, \lambda),\lambda)\,\dd\mu_{\mathrm{bin}}(\lambda)+\varepsilon\int_\Lambda\pi_b^{\mathrm{mo}}\,\dd\mu
	\\
	&=
	(1-\varepsilon)\int_{\Lambda_{\mathrm{bin}}}\widehat w_b(c(x, \lambda),\lambda)\,\dd\mu_{\mathrm{bin}}(\lambda)+\frac{\varepsilon}{\eta}\int_{\Lambda_{\mathrm{bin}}}r_b^{\mathrm{bin,fb}}(c(x, \lambda),\lambda)\,\dd\mu_{\mathrm{bin}}(\lambda)
	\\
	&=
	\int_{\Lambda_{\mathrm{bin}}}w_b(c(x, \lambda),\lambda)\,\dd\mu_{\mathrm{bin}}(\lambda)
	\\
	&=
	\langle x|M_b|x\rangle.
	\label{eq:povm-exactness}
\end{aligned}
\end{equation}
The third equality follows by integrating \cref{eq:povm-corrected-weights}, and the last follows from the earlier fact that the binarization reproduces the desired probabilities on average (\cref{eq:povm-score-average}) .
Thus the protocol exactly reproduces the quantum outcome distribution for pure states and rank-$1$ POVMs.

\subsubsection{Larger ranks and mixed states}
\label{sec:d3-povm-generalization}

Generalizing this to effects of larger ranks can be done with the standard observation that a rank-$1$ POVM can be coarse-grained into larger ranks. Likewise, the generalization to mixed states follows directly from decomposing a mixed state as a convex combination of pure states.

\smallskip
In more detail, let $M=\{M_b\}_{b\in B}$ be an arbitrary qutrit POVM with a finite number of outcomes.
For each $b$, use its spectral decomposition to write
\begin{equation}
	M_b = \sum_{k=1}^{r_b}\widetilde M_{b,k}, \qquad \widetilde M_{b,k} = \gamma_{b,k}\dyad{y_{b,k}}.
	\label{eq:povm-rank-one-refinement}
\end{equation}
Here $r_b:=\operatorname{rank}(M_b)$, $0<\gamma_{b,k}\leq1$, and $y_{b,k}\in\CP^2$.
Since $\sum_{b,k}\widetilde M_{b,k}=\sum_bM_b=\eye_3$, the effects $\{\widetilde M_{b,k}\}_{b,k}$ form a finite-outcome rank-$1$ POVM.
Bob applies the rank-$1$ protocol to this refined POVM and locally maps each outcome $(b,k)$ to $b$.
Then
\begin{equation}
	p_C(b\mid x,M) = \sum_{k=1}^{r_b}p_C((b,k)\mid x,\widetilde M) = \sum_{k=1}^{r_b}\langle x|\widetilde M_{b,k}|x\rangle = \langle x|M_b|x\rangle.
	\label{eq:povm-coarse-graining}
\end{equation}

For the generalization to mixed states, write $\rho=\sum_{j=1}^3p_j\dyad{x_j}$.
Alice samples $j$ locally with probability $p_j$ and applies the protocol to $\ket{x_j}$.
Then
\begin{equation}
	p_C(b\mid\rho,M) = \sum_{j=1}^3p_jp_C(b\mid x_j,M) = \tr(\rho M_b).
\end{equation}

Neither of these steps requires additional communication.
This completes the proof of \cref{thm:d3-finite}.

%% file: 5_appendix_technicalities.tex
\begingroup

\subsection{The two Haar integrals}
\label{app:haar-integrals}

Throughout this subsection, $x,y\in\CP^2$ are fixed rays, represented by
unit vectors in $\mathbb C^3$. For $z\in\CP^d$, let
\[
    P_zv:=\langle z| v \rangle z, \quad v \in\CP^d .
\]
All integrals over $\CP^d$ are taken with respect to the normalized
$U(d+1)$-invariant probability measure $\dd h$. For $h\in\CP^2$, set
\begin{equation}
    a:=\abs{\langle x|h\rangle}^2,
    \qquad
    b:=\abs{\langle y|h\rangle}^2,
    \qquad
    \kappa:=(a+b-1)_+,
    \label{eq:kappa-definition}
\end{equation}
where $r_+:=\max\{r,0\}$, and define the region
\begin{equation}
    \mathcal R
    :=
    \left\{
        (\alpha,\beta)\in[0,1]^2:
        \alpha\geq\beta,\quad
        \alpha+\beta\geq1
    \right\}.
    \label{eq:R-definition}
\end{equation}

Our main goal in this section is to establish the following identity:
\begin{align}
\abs{\langle x|y\rangle}^2=
12
\int_{\CP^2}
\int_{1/2}^1
\1_{\{a>\alpha\}}
\1_{\{b>1-\alpha\}}
\,\dd\alpha\,\dd h
+
48
\int_{\CP^2}
\int_{\mathcal R}
\1_{\{a>\alpha\}}
\1_{\{b>\beta\}}
\,\dd\alpha\,\dd\beta\,\dd h
.
\label{eq:two-haar-integrals}
\end{align}
This identity can be computed directly using the simplex representation of the invariant measure and an explicit diagonalization of \(P_x+P_y-I\). Here, however, we give a more general argument that yields the corresponding moment formula on \(\CP^d\). To this end, we first compute the moments of \(\kappa\).

\begin{lemma}
\label{lem:positive-hinge-moments-general-d}
Let $d,n\geq1$ be integers and let $x,y\in\CP^d$. For
\[
    \kappa(h)
    :=
    \left(
        \abs{\langle x|h\rangle}^2
        +
        \abs{\langle y|h\rangle}^2
        -1
    \right)_+,
\]
set
\[
    s:=\abs{\langle x|y\rangle}.
\]
Then
\begin{equation}
    \int_{\CP^d}\kappa(h)^n\,\dd h
    =
    \frac{d!\,n!}{2(n+d)!}
    \frac{s^{n+d-1}}{(1+s)^{d-1}}.
    \label{eq:positive-hinge-moments-general-d}
\end{equation}
\end{lemma}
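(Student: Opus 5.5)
The plan is to reduce the integral to an expectation over the uniform distribution on a simplex, and then evaluate that expectation with the Hermite--Genocchi formula for divided differences.

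\textbf{Step 1: write $\kappa$ as a quadratic form.} Since $\abs{\langle x|h\rangle}^2+\abs{\langle y|h\rangle}^2-1=\langle h|(P_x+P_y-\eye)|h\rangle$, we have
\[
\kappa(h)=\bigl(\langle h|(P_x+P_y-\eye)|h\rangle\bigr)_+ .
\]
I diagonalize $P_x+P_y-\eye$ on $\C^{d+1}$.
\begin{itemize}
\item On $\operatorname{span}\{x,y\}$, a $2\times 2$ computation gives $P_x+P_y$ eigenvalues $1\pm s$, so $P_x+P_y-\eye$ has eigenvalues $\pm s$.
\item On the orthogonal complement, of dimension $d-1$, the eigenvalue is $-1$.
\item The degenerate cases $s=0$ and $s=1$ can be handled separately, or by continuity of both sides in $s$.
\end{itemize}

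\textbf{Step 2: pass to the simplex.} Under the invariant measure on $\CP^d$, the squared moduli $(u_0,\dots,u_d)$ of the coordinates of $h$ in the eigenbasis are uniformly distributed on the standard simplex $\Delta_d$. This is the simplex representation mentioned above. Hence
\[
\int_{\CP^d}\kappa^n\,\dd h=\mathbb E\Bigl[\Bigl(\sum_{i=0}^d\lambda_iu_i\Bigr)_+^n\Bigr],
\qquad (\lambda_0,\dots,\lambda_d)=(s,-s,-1,\dots,-1).
\]

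\textbf{Step 3: apply Hermite--Genocchi.} For the uniform probability on $\Delta_d$, this formula says that
\[
\mathbb E\bigl[f\bigl(\textstyle\sum_i\lambda_iu_i\bigr)\bigr]=d!\,F[\lambda_0,\dots,\lambda_d],
\]
where $F$ is any $d$-fold antiderivative of $f$ and the bracket denotes the divided difference at these (possibly repeated) nodes. With $f(t)=t_+^n$ I take
\[
F(t)=\frac{n!}{(n+d)!}\,t_+^{n+d}.
\]

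\textbf{Step 4: evaluate the divided difference.} The node $-1$ has multiplicity $d-1$, and $F$ vanishes identically on $(-\infty,0]$, so $F$ is zero near both $-s$ and $-1$. Writing the divided difference as a contour integral $\frac1{2\pi i}\oint F(z)/\prod_i(z-\lambda_i)\,\dd z$, or equivalently by its partial-fraction expansion, the only surviving contribution comes from the simple node $s$:
\[
F[\lambda_0,\dots,\lambda_d]=\frac{F(s)}{(s-(-s))(s+1)^{d-1}}=\frac{n!}{(n+d)!}\,\frac{s^{n+d}}{2s(1+s)^{d-1}}.
\]
Multiplying by $d!$ gives exactly \cref{eq:positive-hinge-moments-general-d}.

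\textbf{Main obstacle.} The delicate point is justifying Hermite--Genocchi when $F$ is only $C^{n+d-1}$ (with $n\ge1$) and the nodes are repeated. I would handle this in three moves.
\begin{enumerate}
\item Prove the identity for smooth $f$, with distinct nodes obtained by slightly perturbing the $-1$'s to $-1-\epsilon_j$.
\item Pass to the limit $\epsilon_j\to0$. This is legitimate because both sides are continuous in the nodes.
\item Approximate $t_+^n$ by smooth functions that agree with it away from a small neighbourhood of $0$. Since $s>0$ and the other nodes are at most $-s<0$, no node lies near $0$, so the divided difference is unaffected.
\end{enumerate}

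\textbf{Fallback.} If this route proves awkward, I would instead compute the Dirichlet integral directly. I would first integrate out the $d-1$ coordinates with eigenvalue $-1$, which produces the factor $(1+s)^{-(d-1)}$ through the substitution $Q=(1+s)u_0+(1-s)u_1-1$. I would then do the two remaining one-dimensional integrals over the positive region.
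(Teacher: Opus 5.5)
Your proof is correct. After the first two steps, which you share with the paper (diagonalizing $H=P_x+P_y-\eye$ and using the Dirichlet$(1,\dots,1)$ law of the squared coordinates), you take a genuinely different route.

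The paper integrates directly. Eliminating $X_1$ gives $\langle h|Hh\rangle=s-2sX_2-(1+s)\sum_{j\ge3}X_j$. Positivity then forces $\sum_{j\ge2}X_j<\tfrac12$, so the simplex constraint is automatic on $\{\kappa>0\}$. The rescaling $Y_2=2X_2$, $Y_j=\tfrac{1+s}{s}X_j$ maps this set onto the standard simplex with $\kappa=s\bigl(1-\sum_jY_j\bigr)$. The Jacobian $\tfrac12\bigl(\tfrac{s}{1+s}\bigr)^{d-1}$ and the integral $n!/(n+d)!$ finish the proof. This is essentially your fallback, done with one linear change of variables. It is fully elementary, but it relies on $H$ having a single positive eigenvalue, which makes $\{\kappa>0\}$ a scaled corner of the simplex.

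Your divided-difference route is more structural. It evaluates $\int f(\langle h|Hh\rangle)\,\dd h$ for any $f$ and any spectrum, and it explains the shape of the answer: $2s(1+s)^{d-1}$ is just the product of the distances from the positive node $s$ to the other nodes. A spectrum with several positive eigenvalues would be handled the same way.

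Three small remarks on your justification:
\begin{enumerate}
\item The approximation scheme is unnecessary. Hermite--Genocchi holds for every $F\in C^d$ with confluent (repeated) nodes, and $F(t)=\tfrac{n!}{(n+d)!}t_+^{n+d}\in C^{n+d-1}\subseteq C^d$ since $n\ge1$.
\item The contour-integral formula requires $F$ to be analytic, which it is not. Use instead the confluent partial-fraction formula, or replace $F$ by its Hermite interpolant at the nodes. In either form, all terms attached to $-s$ and $-1$ vanish because $F\equiv0$ in a neighbourhood of those nodes.
\item Your move 3 is slightly misstated. Altering $f$ near $0$ changes the $d$-fold antiderivative on $(0,\infty)$ by a small polynomial of degree $<d$. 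So the divided difference is not unchanged; it only converges to the original value, which still suffices.
\end{enumerate}
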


\begin{proof}
If $s=0$, then $x\perp y$, and Bessel's inequality gives
\[
    \abs{\langle x|h\rangle}^2
    +
    \abs{\langle y|h\rangle}^2
    \leq1.
\]
Hence $\kappa=0$, and \eqref{eq:positive-hinge-moments-general-d}
follows. Assume henceforth that $s>0$.

Consider
\begin{equation}
    H:=P_x+P_y-I_{d+1}.
    \label{eq:general-d-hinge-operator}
\end{equation}
After multiplying one representative by a phase, we may assume
$\langle x|y\rangle=s$. The operator $P_x+P_y$ has eigenvalues
\[
    1+s,\qquad 1-s,\qquad
    \underbrace{0,\ldots,0}_{d-1\text{ times}},
\]
and therefore
\begin{equation}
    \operatorname{spec}(H)
    =
    \left\{
        s,-s,
        \underbrace{-1,\ldots,-1}_{d-1\text{ times}}
    \right\}.
    \label{eq:general-d-hinge-spectrum}
\end{equation}

Let $e_1,\ldots,e_{d+1}$ be an orthonormal eigenbasis corresponding
to \eqref{eq:general-d-hinge-spectrum}, and set
\[
    X_j:=\abs{\langle e_j|h\rangle}^2.
\]
Then
\[
    X_j\geq0,
    \qquad
    \sum_{j=1}^{d+1}X_j=1.
\]
For $h$ distributed according to the normalized invariant measure on
$\CP^d$, the vector
\[
    (X_1,\ldots,X_{d+1})
\]
has the Dirichlet distribution with parameters $(1,\ldots,1)$.
Equivalently, $(X_2,\ldots,X_{d+1})$ has constant density $d!$ on the simplex
\[
    \Delta_d
    :=
    \left\{
        (X_2,\ldots,X_{d+1})\in[0,\infty)^d:
        \sum_{j=2}^{d+1}X_j\leq1
    \right\}.
\]
For further details on integrals of this type and their simplex representation, see \cite{GallaySerre2012}. Using $X_1=1-\sum_{j=2}^{d+1}X_j$, we obtain
\begin{equation}
\begin{aligned}
    \langle h|Hh\rangle
    &=
    sX_1-sX_2-\sum_{j=3}^{d+1}X_j =
    s-2sX_2-(1+s)\sum_{j=3}^{d+1}X_j.
\end{aligned}
\label{eq:general-d-hinge-simplex-form}
\end{equation}
Thus $\kappa>0$ precisely when
\begin{equation}
    2sX_2
    +(1+s)\sum_{j=3}^{d+1}X_j
    <s.
    \label{eq:positive-simplex-region}
\end{equation}
Since $2s\leq1+s$, \eqref{eq:positive-simplex-region} implies
\[
    2s\sum_{j=2}^{d+1}X_j<s,
\]
and hence
\[
    \sum_{j=2}^{d+1}X_j<\frac12.
\]
Thus the constraint defining $\Delta_d$ is automatic on the region
\eqref{eq:positive-simplex-region}. Make the change of variables
\begin{equation}
    Y_2:=2X_2,
    \qquad
    Y_j:=\frac{1+s}{s}X_j,
    \qquad 3\leq j\leq d+1.
    \label{eq:hinge-change-variables}
\end{equation}
Then
\[
    Y_j\geq0,
    \qquad
    \sum_{j=2}^{d+1}Y_j<1,
\]
and
\[
    \kappa
    =
    s\left(
        1-\sum_{j=2}^{d+1}Y_j
    \right).
\]
Moreover,
\[
    \dd X_2\cdots\dd X_{d+1}
    =
    \frac12
    \left(\frac{s}{1+s}\right)^{d-1}
    \dd Y_2\cdots\dd Y_{d+1}.
\]
Therefore
\begin{align}
    \int_{\CP^d}\kappa^n\,\dd h
    &=
    \frac{d!}{2}
    \left(\frac{s}{1+s}\right)^{d-1}
    s^n
    \int_{\substack{Y_j\geq0\\
                    \sum_{j=2}^{d+1}Y_j\leq1}}
    \left(
        1-\sum_{j=2}^{d+1}Y_j
    \right)^n
    \dd Y_2\cdots\dd Y_{d+1}.
    \label{eq:hinge-simplex-integral}
\end{align}
The standard simplex integral gives
\begin{equation}
    \int_{\substack{Y_j\geq0\\
                    \sum_{j=2}^{d+1}Y_j\leq1}}
    \left(
        1-\sum_{j=2}^{d+1}Y_j
    \right)^n
    \dd Y_2\cdots\dd Y_{d+1}
    =
    \frac{n!}{(n+d)!}.
    \label{eq:standard-simplex-beta}
\end{equation}
Substitution into \eqref{eq:hinge-simplex-integral} yields
\[
    \int_{\CP^d}\kappa^n\,\dd h
    =
    \frac{d!\,n!}{2(n+d)!}
    \frac{s^{n+d-1}}{(1+s)^{d-1}}.
\]
\end{proof}

For $d=2$, \cref{lem:positive-hinge-moments-general-d} gives
\begin{equation}
    \int_{\CP^2}\kappa\,\dd h
    =
    \frac{s^2}{6(1+s)},
    \qquad
    \int_{\CP^2}\kappa^2\,\dd h
    =
    \frac{s^3}{12(1+s)},
    \qquad
    s=\abs{\langle x|y\rangle}.
    \label{eq:qutrit-hinge-moments}
\end{equation}
Consequently,
\begin{equation}
   \abs{\langle x|y\rangle}^2= s^2= 6\int_{\CP^2}
    \bigl(\kappa+2\kappa^2\bigr)\,\dd h
      .
    \label{eq:hinge-identity}
\end{equation}

It remains to express the two terms in
\eqref{eq:hinge-identity} in terms of the integrals in
\eqref{eq:two-haar-integrals}. For $\alpha,\beta\in[0,1]$, define
\begin{equation}
    G(\alpha,\beta)
    :=
    \int_{\CP^2}
    \1_{\{a>\alpha\}}
    \1_{\{b>\beta\}}
    \,\dd h.
    \label{eq:G-definition}
\end{equation}
By unitary invariance, we have
\begin{equation}
    G(\alpha,\beta)=G(\beta,\alpha).
    \label{eq:threshold-swap}
\end{equation}

For each fixed $h$,
\begin{equation}
    \kappa=\kappa(h)   =
    \int_0^1
    \1_{\{a>\alpha\}}
    \1_{\{b>1-\alpha\}}
    \,\dd\alpha.
    \label{eq:layer-one}
\end{equation}
Hence, by \eqref{eq:threshold-swap},
\begin{align}
    \int_{\CP^2}\kappa\,\dd h
    &=
    \int_0^1 G(\alpha,1-\alpha)\,\dd\alpha
    \notag\\
    &=
    2
    \int_{1/2}^1
    G(\alpha,1-\alpha)\,\dd\alpha
    \notag\\
    &=
    2
    \int_{\CP^2}
    \int_{1/2}^1
    \1_{\{a>\alpha\}}
    \1_{\{b>1-\alpha\}}
    \,\dd\alpha\,\dd h.
    \label{eq:oriented-layer-one}
\end{align}

Similarly, squaring \eqref{eq:layer-one} and using symmetry in the two
integration variables gives
\begin{align}
    \kappa^2
    &=
    2
    \int_{0\leq\sigma\leq\tau\leq1}
    \1_{\{a>\tau\}}
    \1_{\{b>1-\sigma\}}
    \,\dd\sigma\,\dd\tau=
    2
    \int_{\mathcal T}
    \1_{\{a>\alpha\}}
    \1_{\{b>\beta\}}
    \,\dd\alpha\,\dd\beta,
    \label{eq:layer-two}
\end{align}
where
\[
    \mathcal T
    :=
    \left\{
        (\alpha,\beta)\in[0,1]^2:
        \alpha+\beta\geq1
    \right\}.
\]
Since, up to a set of measure zero,
\[
    \mathcal T
    =
    \mathcal R
    \cup
    \left\{
        (\alpha,\beta)\in\mathcal T:
        \beta\geq\alpha
    \right\},
\]
and $G(\alpha,\beta)=G(\beta,\alpha)$, we obtain
\begin{align}
    \int_{\CP^2}\kappa^2\,\dd h
    &=
    4
    \int_{\mathcal R}
    G(\alpha,\beta)\,\dd\alpha\,\dd\beta
  =
    4
    \int_{\CP^2}
    \int_{\mathcal R}
    \1_{\{a>\alpha\}}
    \1_{\{b>\beta\}}
    \,\dd\alpha\,\dd\beta\,\dd h.
    \label{eq:oriented-layer-two}
\end{align}
Substituting \eqref{eq:oriented-layer-one} and
\eqref{eq:oriented-layer-two} into \eqref{eq:hinge-identity} we obtain
\begin{align}
    \abs{\langle x|y\rangle}^2
    &=
    12
    \int_{\CP^2}
    \int_{1/2}^1
    \1_{\{a>\alpha\}}
    \1_{\{b>1-\alpha\}}
    \,\dd\alpha\,\dd h    +
    48
    \int_{\CP^2}
    \int_{\mathcal R}
    \1_{\{a>\alpha\}}
    \1_{\{b>\beta\}}
    \,\dd\alpha\,\dd\beta\,\dd h,
    \label{eq:oriented-cap-representation}
\end{align}

Therefore, \eqref{eq:two-haar-integrals} follows.

\begin{remark}
The introduction of $\kappa=(a+b-1)_+$ reduces the two threshold integrals to the first two moments of a single scalar quantity. In particular, the dependence on \(x\) and \(y\) is transferred to the parameter \(s=\abs{\langle x| y\rangle}\), which makes the computation more direct.
\end{remark}

\endgroup